\numberwithin{equation}{section}
\newtheoremstyle{TheoremStyle}
{3pt}
{3pt}
{\slshape}
{}
{\bf}
{:}
{.5em}
{}
\definecolor{PaColor}{RGB}{255,50, 150}
\definecolor{ClaColor}{RGB}{0,0,255}
\newcommand{\bigslant}[2]{{\raisebox{.2em}{$#1$}\left/\raisebox{-.2em}{$#2$}\right.}}
\theoremstyle{TheoremStyle}
\newtheorem{theorem}{Theorem}
\newtheorem{corollary}[theorem]{Corollary}
\newtheorem{proposition}[theorem]{Proposition}
\newtheorem{definition}[theorem]{Definition}
\newtheorem{remark}[theorem]{Remark}
\newtheorem{example}[theorem]{Example} 
\title{Besov Wavefront Set}
\author{Claudio Dappiaggi \thanks{CD: Dipartimento di Fisica,
	Universit\`a degli Studi di Pavia \& INFN, Sezione di Pavia, 
	Via Bassi 6,
	I-27100 Pavia,
	Italia; Istituto Nazionale di Alta Matematica, Sezione di Pavia, via Ferrata 5, 27100 Pavia, Italia
	\mbox{claudio.dappiaggi@unipv.it}
}
\and
	Paolo Rinaldi \thanks{PR: Institute for Applied Mathematics, Universit\"at Bonn, 
	Endenicher Allee 60,
	D-53115 Bonn,
	Germany;
	\mbox{rinaldi@iam.uni-bonn.de}
}
\and
	Federico Sclavi \thanks{FS: Dipartimento di Fisica,
	Universit\`a degli Studi di Pavia \& INFN, Sezione di Pavia, 
	Via Bassi 6,
	I-27100 Pavia,
	Italia; Istituto Nazionale di Alta Matematica, Sezione di Pavia, via Ferrata 5, 27100 Pavia, Italia
\mbox{federico.sclavi01@universitadipavia.it}}
}
\begin{document}
\maketitle
\begin{abstract}
We develop a notion of wavefront set aimed at characterizing in Fourier space the directions along which a distribution behaves or not as an element of a specific Besov space. Subsequently we prove an alternative, albeit equivalent characterization of such wavefront set using the language of pseudodifferential operators. Both formulations are used to prove the main underlying structural properties. Among these we highlight the individuation of a sufficient criterion to multiply distributions with a prescribed Besov wavefront set which encompasses and generalizes the classical Young's theorem. At last, as an application of this new framework we prove a theorem of propagation of singularities for a large class of hyperbolic operators.
\end{abstract}
\paragraph*{Keywords:} Besov spaces, wavefront set, propagation of singularities

\paragraph*{MSC 2020:} 30H25, 35A18, 58J47

\section{Introduction}

Microlocal analysis and the associated H\"ormander's wavefront set \cite{Hor94} have been an unmitigated success in analysis which has found in addition manifold applications ranging from engineering to mathematical physics. One of the most recent interplay with modern theoretical physics is related to the r\^{o}le played by microlocal techniques in the construction of a full-fledged theory of quantum fields on generic Lorentzian and Riemannian backgrounds as well as in the development of a mathematical formulation of renormalization with the language of distributions, see {\it e.g.} \cite{Brunetti-Fredenhagen-00, Rejzner:2016hdj, DDR20,CDDR20}.

In the early developments of the interplay between microlocal analysis and renormalization, it has become clear that the original framework developed by H\"ormander aimed at disentangling the directions of rapid decrease in Fourier space of a given distribution from the singular ones suffered from a substantial limitation. As a matter of fact, in many concrete scenarios one is interested in having a more refined estimate of the singular behavior of a distribution, for instance comparing it with that of an element lying in a suitable Sobolev space. This has lead to considering more specific forms of wavefront set, among which a notable r\^{o}le in application has been played by the so-called Sobolev wavefront set, see \cite{Hor97}. Still having in mind the realm of quantum field theory, one of the first remarkable uses has been discussed in \cite{Junker:2001gx}, while nowadays it has become an essential ingredient in many modern results among which noteworthy are those concerning the analysis of the wave equations on manifolds with boundaries or with corners, see {\it e.g.} \cite{Vasy08,Vasy12}. 

An apparently completely detached branch of analysis in which distributions and their specific singular behavior plays a distinguished r\^{o}le is that of stochastic partial differential equations. Without entering in too many technical details, far from the scope of this work, remarkable leaps forward have been obtained in the past few years both within the framework of the theory of regularity structures \cite{Hairer14, Hairer15} and in that of paracontrolled distributions \cite{Gubinelli}. In both approaches, despite the necessity of dealing with specific problems, such as renormalization, calling for the analysis of products or of extensions of a priori ill-defined distributions, microlocal techniques never enter the game. 

The reasons are manifold but the main one lies in the fact that, in the realm of stochastic partial differential equations, often one considers H\"older distributions, {\it i.e.} elements of $C^\alpha(\mathbb{R}^d)\subset\mathcal{S}^\prime(\mathbb{R}^d)$, $\alpha\in\mathbb{R}$. The latter can be read as a specific instance of the so-called Besov spaces $B^\alpha_{p,q}(\mathbb{R}^d)$, $\alpha\in\mathbb{R}$, $p,q\in [1,\infty]$, \cite{Tri06}. When working in this framework, one relies often in Bony paradifferential calculus \cite{Bony} as it is devised to better catch the specific features of elements lying in a Besov space. To this end microlocal techniques and the wavefront set in particular appear at first glance to be far from the optimal tool to be used, since it appears to be unable to grasp the peculiar singular behaviour of a distribution in comparison to an element of $B^\alpha_{p,q}(\mathbb{R}^d)$. 

Nonetheless it has recently emerged that, in the analysis of a large class of nonlinear stochastic partial differential equations, microlocal analysis can be used efficiently to devise a recursive scheme to construct both solutions and correlation functions, while taking into account intrinsically the underlying renormalization freedoms, \cite{DDRZ20, BDR21}. One of the weak point of this novel approach lies in the lack of any control of the convergence of the underlying recursive scheme. This can be ascribed mainly to the fact that employing microlocal techniques appears to wash out all information concerning the behaviour of the underlying distributions as elements of a Besov space. Observe that each $B^\alpha_{p,q}(\mathbb{R}^d)$ is endowed with the structure of a Banach space which is pivotal in setting up a fixed point argument to prove the existence of solutions for the considered class of nonlinear stochastic partial differential equations. 

Hence, it appears natural to seek a way to combine the best of both worlds, trying to use the language of microlocal analysis on the one hand, while keeping track of the underlying Besov space structure on the other hand. In this paper we plan to make the first step in this direction, developing a modified notion of wavefront set, specifically devised to keep track of the behaviour of a distribution in comparison to that of an element of a Besov space. For definiteness and in order to avoid unnecessary technical difficulties, focusing instead on the main ideas and constructions, we shall focus on the Besov spaces $B^\alpha_{\infty,\infty}(\mathbb{R}^d)\equiv C^\alpha(\mathbb{R}^d)$, which are, moreover, the most relevant ones in concrete applications. We highlight that an investigation in this direction, complementing our own, has appeared in \cite{GM15}.

Specifically our proposal hinges on the following starting point, a definition of {\em Besov wavefront set}
which focuses on the behaviour of a distribution in Fourier space. 

\begin{definition}\label{def:besov wavefront set_intro}
	Let $u \in \mathcal{D}^\prime(\mathbb{R}^d)$ and $\alpha \in \mathbb{R}$. 
	We say that $(x,\xi)\in \mathbb{R}^d \times (\mathbb{R}^d\setminus \{0\})$ does not lie in the $B_{\infty,\infty}^{\alpha}$-wavefront set, denoting $(x,\xi) \not \in \mathrm{WF}^\alpha(u)$, if there exist $\phi\in\mathcal{D}(\mathbb{R}^d)$ with $\phi(x)\neq 0$ as well as an open conic neighborhood $\Gamma$ of $\xi$ in $\mathbb{R}^d\setminus\{0\}$ such that
	\begin{align}\label{eq:wavefront set besov_intro}
		\bigg \lvert \int_{\Gamma} \widehat{\phi u}(\eta) \check{\underline{\kappa}}(\eta) e^{i y \cdot \eta} d\eta \bigg \rvert \lesssim& 1\,, 	
	\end{align}
	\begin{align}\label{eq:wavefront set besov II_intro}
		\bigg \lvert \int_{\Gamma} \widehat{\phi u}(\eta) \check{\kappa}(\lambda\eta) e^{i y \cdot \eta} d\eta \bigg \rvert \lesssim& \lambda^\alpha\,,
	\end{align}
	for any $\underline{\kappa}\in\mathcal{D}(B(0,1))$ with $\check{\underline{\kappa}}(0)\neq 0$, $\lambda \in (0,1)$, $y \in \operatorname{supp}(\phi)$ and $\kappa\in \mathscr{B}_{\lfloor \alpha \rfloor}$, see Definition \ref{Def: Kernel for Besov}. 
\end{definition}

While conceptually the above definition enjoys all desired structural properties, from an operational viewpoint, it is rather difficult to use it concretely both in examples and in the proof of various results. For this reason we give an alternative, albeit equivalent, characterization of $\mathrm{WF}^\alpha(u)$, $u\in\mathcal{D}^\prime(\mathbb{R}^d)$, in terms of the intersection of the characteristic set of a suitable class of order zero, properly supported pseudodifferential operators, see Proposition \ref{prop: characterization via pseudodifferential operators}. Using this tool we are able to prove a large set of structural properties of the Besov wavefront set. The three main results that we obtain are the following:
\begin{itemize}
	\item We prove that, given an embedding $f\in C^\infty(\Omega,\Omega^\prime)$ between two open subsets $\Omega\subseteq\mathbb{R}^d$ and $\Omega^\prime\subseteq\mathbb{R}^m$, one can establish a criterion, see Theorem \ref{th: pullback theorem 2}, for the existence of the pull-back $f^*u$, $u\in\mathcal{D}^\prime(\Omega^\prime)$ which generalizes the one devised by H\"ormander in the smooth setting, \cite[Thm. 8.2.4]{Hor94}. A noteworthy byproduct of this analysis is that, whenever $f$ is a diffeomorphism, then, for any $\alpha\in\mathbb{R}$, $f^*WF^\alpha(u)=WF^\alpha(f^*u)$, see Theorem \ref{th:product theorem 2}. This result is noteworthy since it entails that the notion of Besov wavefront set can be applied also to distributions supported on an arbitrary smooth manifold \cite{RS21}.
	\item We establish a sufficient criterion for the existence of the product of two distributions with prescribed Besov wavefront set and we provide an estimate for the wavefront set of the product, see Theorem \ref{th:product theorem 2}. This result contains and actually extends the renown Young's theorem on the product of two H\"older distributions, which is often used in the applications to stochastic partial differential equations.
	\item We apply the whole construction of the Besov wavefront set to prove a propagation of singularities theorem for a large class of hyperbolic partial differential equations, see Theorem \ref{th:propagation of singularities}. This result is strongly tied to a preliminary analysis on the wavefront set $\mathrm{WF}^\alpha(\mathcal{K}(u))$ where $\mathcal{K}$ is a linear map from $C^\infty_0(\Omega^\prime)\to\mathcal{D}^\prime(\Omega)$ where $\Omega\subseteq\mathbb{R}^d$ while $\Omega^\prime\subseteq\mathbb{R}^m$.
\end{itemize}

The paper is organized as a follows: In Section \ref{Sec: Preliminaries}, we present the definition of Besov spaces outlining some of its main properties and alternative, equivalent characterizations. Subsequently we review succinctly the basic notions of pseudodifferential operators and of the associated operator wavefront set. In Section \ref{Sec: Besov Wavefront Set} we present the main object of our investigation, giving the definition of Besov wavefront set in terms of the behaviour of a distribution in Fourier space, outlining subsequently some of the basic structural properties and discussing a few notable examples. In Section \ref{Sec: PsiDOWF} we prove that the Besov wavefront set can be equivalently characterized in terms of the characteristic set of a suitable class of properly supported pseudodifferential operators. Section \ref{Sec: Operations on BWF} contains the main results concerning the structural properties of the Besov wavefront set. In particular we discuss its interplay with pullbacks, we devise a sufficient criterion for the product of two distributions with prescribed Besov wavefront set and we prove a theorem of propagation of singularities for a class of hyperbolic partial differential operators.

\paragraph{Notations}
In this short paragraph we fix a few recurring notations used in this manuscript. With $\mathcal{E}(\mathbb{R}^d)$ ({\em resp.} $\mathcal{D}(\mathbb{R}^d))$, we denote the space of smooth ({\em resp.} smooth and compactly supported) functions on $\mathbb{R}^d$, $d\geq 1$, while $\mathcal{S}(\mathbb{R}^d)$ stands for the space of rapidly decreasing smooth functions. 
Their topological dual spaces are denoted respectively $\mathcal{E}^\prime(\mathbb{R}^d)$, $\mathcal{D}^\prime(\mathbb{R}^d)$ and $\mathcal{S}^\prime(\mathbb{R}^d)$. 
In addition, given $u\in\mathcal{S}(\mathbb{R}^d)$, we adopt the following convention to define its Fourier transform 
\begin{align*}
\mathcal{F}(u)(k)=\hat{u}(k)\vcentcolon=\int_{\mathbb{R}^d}e^{-ik\cdot x}u(x)\,dx\,.
\end{align*}
At the same time, we indicate with the symbol $\check{\cdot}$ the inverse Fourier transform $\mathcal{F}^{-1}$, namely, for any $f\in\mathcal{S}(\mathbb{R}^d)$, $f=\check{\hat{f}}=\hat{\check{f}}$.
Similarly, for any $v\in\mathcal{S}^\prime(\mathbb{R}^d)$, we indicate with $\widehat{v}\in\mathcal{S}^\prime(\mathbb{R}^d)$ its Fourier transform, defining it per duality as $
\hat{v}(u)\doteq v(\hat{u})$ for all $u\in\mathcal{S}(\mathbb{R}^d)$. 
In general, given a function $f\in\mathcal{E}(\mathbb{R}^d)$, $x\in\mathbb{R}^d$ and $\lambda\in(0,1]$, we shall denote $f^\lambda_x(y):=\lambda^{-d}f(\lambda^{-1}(y-x))$.
At last with $\langle x\rangle\vcentcolon=(1+|x|^2)^{\frac{1}{2}}$ we denote the Japanese bracket, while the symbol $\lesssim$ refers to an inequality holding true up to a multiplicative finite constant. Observe that, depending on the case in hand, such constant might depend on other data, such as for example the choice of an underlying compact set. For the ease of notation we shall omit making such dependencies explicit, since they shall become clear from the context.

\section{Preliminaries}\label{Sec: Preliminaries}
The aim of this section is to introduce the key function spaces and some of their notable properties. The content of this specific subsection is mainly inspired by \cite{BCD11, Tri06}. The starting point lies in the notion of a \emph{Littlewood-Paley} partition of unity.
\begin{definition}\label{Def: L-P partition of unity}
Let $N\in\mathbb{N}$ and let $\psi\in\mathcal{D}(\mathbb{R}^d)$ be a positive function supported in $\{2^{-N}\leq|\xi|\leq2^N\}$. 
We call \textit{Littlewood-Paley} partition of unity a sequence $\{\psi_j\}_{j\in\mathbb{N}_0}$, $\mathbb{N}_0\doteq\mathbb{N}\cup\{0\}$ such that
\begin{itemize}
\item $\psi_0\in\mathcal{D}(\mathbb{R}^d)$ and $\textrm{supp}(\psi_0)\subseteq\{|\xi|\leq2^N\}$;	
\item $\psi_j(x)\vcentcolon=\psi(2^{-j}x)$ for $j\geq1$;
\item $\sum\limits_{j\in\mathbb{N}_0}\psi_j(\xi)=1$ for all $\xi\in\mathbb{R}^d$;
\item for any multi-index $\alpha$, $\exists C_\alpha>0$ such that
\begin{align*}
|D^\alpha\psi_j(\xi)|\leq C_\alpha\langle\xi\rangle^{-|\alpha|}\,,\qquad j\geq1\,;
\end{align*}
\item $\psi_j(-\xi)=\psi_j(\xi)$ for all $j\geq 0$.
\end{itemize}
\end{definition}
\noindent
In the following we shall always assume for definiteness $N=1$.

\begin{definition}\label{def:definition Besov Fourier approach}
	Let $\alpha\in\mathbb{R}$. We call {\bf Besov space} $B^\alpha_{p,q}(\mathbb{R}^d)$, $p,q\in[1,\infty)$, the Banach space whose elements $u$ are such that
	\begin{equation}\label{eq:besov norm_pq}
		\|u\|^q_{B_{pq}^\alpha(\mathbb{R}^d)}\vcentcolon=\sum\limits_{j\geq 0}2^{j\alpha q} \|\psi_j(D)u\|^q_{L^p(\mathbb{R}^d)}<\infty\,,
	\end{equation}
At the same time if $q=\infty$, while $p\in [1,\infty]$, we set
	\begin{equation}\label{eq:besov norm}
		\|u\|_{B_{p,\infty}^\alpha(\mathbb{R}^d)}\vcentcolon=\sup_{j\geq 0}2^{j\alpha} \|\psi_j(D)u\|_{L^p(\mathbb{R}^d)}<\infty\,,
	\end{equation}
where we used the Fourier multiplier notation $\psi_j(D)u(x)\vcentcolon=\mathcal{F}^{-1}\{\psi_j(\xi)\hat{u}(\xi)\}(x)$. At the same time, we say that $u \in B_{\infty,\infty}^{\alpha,\mathrm{loc}}(\mathbb{R}^d)$ if $\varphi u \in B_{\infty,\infty}^{\alpha}(\mathbb{R}^d)$ for any $\varphi \in \mathcal{D}(\mathbb{R}^d)$.
\end{definition}

\begin{remark}
By definition of Fourier multiplier, it descends that
\begin{align*}
 \psi_j(D) u (x) = \mathcal{F}^{-1}\{\psi_j(\xi) \hat{u}(\xi)\}(x) = (\check{\psi}_j \ast u)(x)= u(2^{jd}\check{\psi}(2^j(\cdot-x)))\,,
\end{align*}
	where we exploited $\mathcal{F}^{-1}\{uv\}= \check{u} \ast \check{v}$ and $\check{\psi}_j(x)=2^{jd}\check{\psi}(2^j x)$. 
As a consequence, if $u \in B_{\infty,\infty}^\alpha(\mathbb{R}^d)$,
	\begin{equation}\label{eq:besov scaling}
		\lvert u(\check{\psi}^{2^{-j}}_x) \rvert \lesssim 2^{-j\alpha},\quad \forall j\geq 0, \quad \forall x \in \mathbb{R}^d\,.
	\end{equation}
\end{remark}

In our analysis it will be often convenient not to consider directly Definition \ref{def:definition Besov Fourier approach}, rather to work with an equivalent characterization, dubbed the \emph{local means formulation} -- see \cite[Sec. 1.4 \& Thm.1.10]{Tri06}. This is based on the following tool.

\begin{definition}\label{Def: Kernel for Besov}
Let $B(0,1)=\{y \in \mathbb{R}^d: \lvert y \rvert < 1\}$. For $s\in\mathbb{N}_0$, we call $\mathscr{B}_s$ the subset of $\mathcal{D}(B(0,1))$ whose elements $\kappa$ are such that there exists $\epsilon>0$
\begin{equation}\label{eq:conditions kappa}
\check{\kappa}(\xi)\neq0\;\text{ if }\;\frac{\varepsilon}{2}<\lvert \xi \rvert \leq 2\varepsilon\,,\quad\textrm{and}\quad(\partial^\beta\check{\kappa})(0)=0 \,\;\textrm{if}\;\lvert\beta\rvert\leq s\,.
\end{equation}
\end{definition}

\noindent Observe that the second condition in Equation \eqref{eq:conditions kappa} is empty if $s<0$.

\begin{definition}\label{def:definition Besov local means}
	Let $\alpha\in\mathbb{R}$, $\kappa\in\mathscr{B}_{\lfloor\alpha\rfloor}$, with $\lfloor\alpha\rfloor$ the biggest integer $N$ such that $N\leq\alpha$. Let $\underline{\kappa}\in\mathcal{D}(B(0,1))$ be such that $\check{\underline{\kappa}}(0)\neq0$. 
	We call $B_{p,\infty}^\alpha(\mathbb{R}^d)$, $p\in [1,\infty]$, the space of distributions $u\in\mathcal{S}^\prime(\mathbb{R}^d)$ such that
	\begin{equation}\label{eq:besov local means}
		\|u\|^{\kappa, \underline{\kappa}}_{B_{p,\infty}^\alpha(\mathbb{R}^d)} \vcentcolon= \|u(\underline{\kappa}_{x})\|_{L^p(\mathbb{R}^d)} + \sup_{\lambda \in (0,1)} \frac{\|u(\kappa^\lambda_{x})\|_{L^p(\mathbb{R}^d)}}{\lambda^\alpha} < \infty\,,
	\end{equation}
where the $L^\infty$-norm is taken with respect to the variable $x$.
\end{definition}


\begin{remark}
	We observe that different choices for $\kappa$ and $\underline{\kappa}$ yield in Equation \eqref{eq:besov local means} equivalent norms. Therefore, henceforth we shall omit to indicate the superscripts $\kappa$ and $\underline{\kappa}$.
\end{remark}

\noindent If $\alpha < 0$, there exists a further equivalent characterization for Besov spaces -- see \cite[Prop. A.5]{BL21}, \cite[Cor. 1.12]{Tri06}. We focus on the case $p=\infty$.

\begin{proposition}\label{prop:negative besov}
	Let $\alpha < 0$ and $\kappa \in \mathcal{D}(B(0,1))$ be such that $\check{\kappa}(0)\neq 0$. 
	Then $u \in B_{\infty,\infty}^\alpha(\mathbb{R}^d)$ if and only if
	\begin{equation}\label{eq:local means negative Besov spaces}
		\sup_{\lambda \in (0,1)} \frac{\|u(\kappa^\lambda_{x})\|_{L^\infty(\mathbb{R}^d)}}{\lambda^\alpha} < \infty\,,
	\end{equation}
where the $L^\infty$-norm is taken with respect to the variable $x$.
\end{proposition}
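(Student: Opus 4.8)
The plan is to reduce the asserted equivalence to the Littlewood--Paley characterization \eqref{eq:besov norm} of $B^\alpha_{\infty,\infty}(\mathbb{R}^d)$ and to compare the continuous local means with the dyadic blocks $\psi_j(D)u$. The starting observation is that each local mean is a convolution: writing $\widetilde\kappa(y):=\kappa(-y)$ one has $u(\kappa^\lambda_x)=(u*\widetilde\kappa_\lambda)(x)$, a Fourier multiplier operator with symbol $\xi\mapsto\hat{\widetilde\kappa}(\lambda\xi)=\hat\kappa(-\lambda\xi)$. Since $\check\kappa(0)\neq0$ forces $\hat\kappa(0)\neq0$, this symbol is bounded, nonvanishing near the origin, and rapidly decaying because $\hat\kappa\in\mathcal{S}(\mathbb{R}^d)$; these three features drive the whole argument. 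Throughout I would fix a fattened partition $\widetilde\psi_j$ with $\widetilde\psi_j\psi_j=\psi_j$ and exploit the exact scaling $\psi_j(\xi)=\psi(2^{-j}\xi)$, $\widetilde\psi_j(\xi)=\widetilde\psi(2^{-j}\xi)$ for $j\ge1$.

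For the implication $u\in B^\alpha_{\infty,\infty}\Rightarrow$ finiteness of the supremum, I would decompose $u*\widetilde\kappa_\lambda=\sum_{j\ge0}(\psi_j(D)u)*\widetilde\kappa_\lambda$ and, given $\lambda\in(0,1)$, split the sum at the index $J$ with $2^{-J}\le\lambda<2^{-J+1}$. On the low-frequency block $j\le J$ a crude Young estimate $\|(\psi_j(D)u)*\widetilde\kappa_\lambda\|_{L^\infty}\le\|\psi_j(D)u\|_{L^\infty}\|\kappa\|_{L^1}\lesssim2^{-j\alpha}\|u\|_{B^\alpha_{\infty,\infty}}$ suffices, and here the hypothesis $\alpha<0$ is essential: the geometric sum $\sum_{j\le J}2^{-j\alpha}$ is dominated by its top term $2^{-J\alpha}\lesssim\lambda^\alpha$. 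On the high-frequency block $j>J$ I would use $\psi_j(D)u=\widetilde\psi_j(D)\psi_j(D)u$ to rewrite the term as $(\psi_j(D)u)*(\widetilde\psi_j(D)\widetilde\kappa_\lambda)$ and bound it by $\|\psi_j(D)u\|_{L^\infty}\|\widetilde\psi_j(D)\widetilde\kappa_\lambda\|_{L^1}$; rescaling $\xi=2^j\eta$ turns the symbol $\widetilde\psi_j(\xi)\hat\kappa(-\lambda\xi)$ into $\widetilde\psi(\eta)\hat\kappa(-(\lambda2^j)\eta)$ supported on $|\eta|\simeq1$, whose inverse transform has $L^1$-norm $\lesssim_M(\lambda2^j)^{-M}$ for every $M$ by the Schwartz decay of $\hat\kappa$. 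Choosing $M>|\alpha|$, the high-frequency sum again telescopes to $\lesssim\lambda^\alpha\|u\|_{B^\alpha_{\infty,\infty}}$, and adding the two blocks yields $\|u(\kappa^\lambda_\cdot)\|_{L^\infty}\lesssim\lambda^\alpha\|u\|_{B^\alpha_{\infty,\infty}}$ uniformly in $\lambda$.

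For the converse, assuming $A:=\sup_{\lambda\in(0,1)}\lambda^{-\alpha}\|u(\kappa^\lambda_\cdot)\|_{L^\infty}<\infty$, I would reconstruct each block $\psi_j(D)u$ from a single well-chosen local mean. Since $\hat\kappa(0)\neq0$ there is $\delta>0$ with $\hat\kappa\neq0$ on $\{|\zeta|<\delta\}$; setting $\lambda_j:=c\,2^{-j}$ with $c<\delta/2$ guarantees that $\hat\kappa(-\lambda_j\xi)$ is nonvanishing on $\operatorname{supp}\psi_j$ for all $j\ge0$, so that $\psi_j(D)u=m_j(D)\big(u*\widetilde\kappa_{\lambda_j}\big)$ holds with the smooth, compactly supported symbol $m_j(\xi)=\psi_j(\xi)/\hat\kappa(-\lambda_j\xi)$. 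The decisive simplification is again scaling: for $j\ge1$ one computes $m_j(\xi)=g(2^{-j}\xi)$ with the single fixed function $g(\eta)=\psi(\eta)/\hat\kappa(-c\eta)$, whence $\|\mathcal{F}^{-1}m_j\|_{L^1}=\|\check g\|_{L^1}$ is independent of $j$. Therefore $\|\psi_j(D)u\|_{L^\infty}\le\|\check g\|_{L^1}\,\|u*\widetilde\kappa_{\lambda_j}\|_{L^\infty}\le\|\check g\|_{L^1}A\,\lambda_j^\alpha\lesssim2^{-j\alpha}A$, which is exactly $\|u\|_{B^\alpha_{\infty,\infty}}\lesssim A$, the case $j=0$ being a single harmless instance.

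The main obstacle I anticipate is not the bookkeeping but the uniform (in $j$ and $\lambda$) control of the two $L^1$-multiplier norms $\|\widetilde\psi_j(D)\widetilde\kappa_\lambda\|_{L^1}$ and $\|\mathcal{F}^{-1}m_j\|_{L^1}$, i.e. the quantitative almost-orthogonality between a dyadic annulus and the factor $\hat\kappa(-\lambda\xi)$. Both reduce, after the dyadic rescaling $\xi=2^j\eta$, to an inverse-transform estimate on a function supported in a fixed annulus, which I would settle by the standard device $\|\check h\|_{L^1}\lesssim\sum_{|\gamma|\le d+1}\|\partial^\gamma h\|_{L^1}$ together with the Schwartz bounds $|\partial^\gamma_\eta\hat\kappa(-t\eta)|\lesssim_M t^{|\gamma|}\langle t\eta\rangle^{-M}$. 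I would also record at the outset that the a priori bound on the local means forces $u\in\mathcal{S}^\prime(\mathbb{R}^d)$, so that the synthesis $u=\sum_j\psi_j(D)u$ used above is legitimate, and note explicitly that the failure of the low-frequency estimate for $\alpha\ge0$ -- already visible on a nonzero constant, for which the supremum diverges while the Besov norm is finite -- is precisely what confines the statement to $\alpha<0$.
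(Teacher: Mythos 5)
Your proof is correct, and in fact there is no in-paper argument to compare it against: the paper states Proposition \ref{prop:negative besov} without proof, quoting it from \cite[Prop. A.5]{BL21} and \cite[Cor. 1.12]{Tri06}. Your route -- identify $u(\kappa^\lambda_x)=(u*\widetilde\kappa_\lambda)(x)$ as a Fourier multiplier with symbol $\hat\kappa(-\lambda\xi)$, split the Littlewood--Paley synthesis at $2^{-J}\simeq\lambda$, use Young plus $\alpha<0$ on the low block and the almost-orthogonality bound $\|\widetilde\psi_j(D)\widetilde\kappa_\lambda\|_{L^1}\lesssim_M(\lambda 2^j)^{-M}$ on the high block, then invert a single local mean at scale $\lambda_j=c2^{-j}$ via $m_j(\xi)=\psi_j(\xi)/\hat\kappa(-\lambda_j\xi)=g(2^{-j}\xi)$ for the converse -- is precisely the standard local-means argument of the cited references. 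The quantitative ingredients are all in order: the scale invariance $\|\mathcal{F}^{-1}[g(2^{-j}\cdot)]\|_{L^1}=\|\check g\|_{L^1}$, the device $\|\check h\|_{L^1}\lesssim\sum_{|\gamma|\leq d+1}\|\partial^\gamma h\|_{L^1}$ on a fixed annulus, and the choice $2c<\delta$ ensuring $\hat\kappa(-\lambda_j\xi)\neq0$ on an open neighbourhood of $\operatorname{supp}\psi_j$, so that $m_j$ extends smoothly by zero (you should also take $c<1$ so that $\lambda_j\in(0,1)$, but this is cosmetic).

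Two small points of hygiene. First, your aside that the a priori bound ``forces $u\in\mathcal{S}^\prime(\mathbb{R}^d)$'' is asserted, not proved, and it is also unnecessary: in the paper the ambient hypothesis $u\in\mathcal{S}^\prime(\mathbb{R}^d)$ is built into Definition \ref{def:definition Besov local means} (as in Triebel's formulation), so you should simply assume it rather than claim the bound implies it; for genuinely $\mathcal{D}^\prime$ data that temperedness claim would need a separate argument. Second, your closing remark on sharpness is slightly off at the endpoint: a nonzero constant shows failure of the equivalence only for $\alpha>0$ (for $\alpha=0$ its supremum is finite, and the constant lies in $B^0_{\infty,\infty}(\mathbb{R}^d)$ anyway); at $\alpha=0$ what breaks in your argument is the low-frequency sum, which incurs a logarithmic loss $\sum_{j\leq J}1\simeq\log(1/\lambda)$, and failure of the statement itself requires a different example. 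Neither point affects the validity of the proof.
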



\noindent We conclude this subsection proving a last, useful characterization of the element lying in $B_{\infty,\infty}^\alpha$.
\begin{proposition}\label{Prop: fourier besov}
Let $ u \in \mathcal{S}^\prime(\mathbb{R}^d)$ and let $\alpha \in \mathbb{R}$. 
Then $u \in B_{\infty,\infty}^{\alpha}(\mathbb{R}^d)$ if and only if, given $\kappa \in \mathscr{B}_{\lfloor \alpha\rfloor}$ and $\underline{\kappa}\in \mathcal{D}(B(0,1))$ such that $\check{\underline{\kappa}}(0)\neq 0$, it holds that
\begin{equation}\label{eq:fourier besov distribution}
	\lvert \langle \hat{u}(\xi), e^{ix\cdot \xi} \check{\underline{\kappa}}(\xi) \rangle \rvert \lesssim 1\,, \qquad \lvert \langle \hat{u}(\xi), e^{ix\cdot \xi} \check{\kappa}(\lambda\xi) \rangle \rvert \lesssim \lambda^\alpha\,,
\end{equation}
for any $\lambda \in (0,1)$ and $x \in \mathbb{R}^d$. 
\end{proposition}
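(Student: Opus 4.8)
The plan is to read the two Fourier-space bounds directly off the local means characterization of Definition \ref{def:definition Besov local means}, specialized to $p=\infty$. By that characterization $u\in B^\alpha_{\infty,\infty}(\mathbb{R}^d)$ if and only if
\[
\sup_{x\in\mathbb{R}^d}\lvert u(\underline{\kappa}_x)\rvert<\infty\qquad\text{and}\qquad\sup_{\lambda\in(0,1)}\lambda^{-\alpha}\sup_{x\in\mathbb{R}^d}\lvert u(\kappa^\lambda_x)\rvert<\infty\,,
\]
and, by the remark following that definition, the finiteness of these quantities is independent of the admissible choice of $\kappa$ and $\underline{\kappa}$. Thus it suffices to transport the two suprema to the Fourier side, where they will become exactly the two conditions in \eqref{eq:fourier besov distribution}. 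The only point that will require care is that the functions against which $\hat{u}$ is paired are genuinely Schwartz; this holds because $\kappa,\underline{\kappa}\in\mathcal{D}(B(0,1))$ forces $\check{\kappa},\check{\underline{\kappa}}\in\mathcal{S}(\mathbb{R}^d)$, and multiplication by the bounded smooth factor $e^{ix\cdot\xi}$ preserves the Schwartz class, so that the pairings with $\hat{u}\in\mathcal{S}^\prime(\mathbb{R}^d)$ are well defined.

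First I would record the two elementary Fourier identities on which the argument rests. Using the duality $\langle u,\psi\rangle=\langle\hat{u},\check{\psi}\rangle$ together with $\check{\psi}(\xi)=(2\pi)^{-d}\int e^{i\xi\cdot y}\psi(y)\,dy$, a single change of variables $z=\lambda^{-1}(y-x)$ in the defining integral yields
\[
\mathcal{F}^{-1}\{\kappa^\lambda_x\}(\xi)=e^{ix\cdot\xi}\,\check{\kappa}(\lambda\xi)\,,\qquad\mathcal{F}^{-1}\{\underline{\kappa}_x\}(\xi)=e^{ix\cdot\xi}\,\check{\underline{\kappa}}(\xi)\,,
\]
where in the second identity $\underline{\kappa}_x(y)=\underline{\kappa}(y-x)$ is the pure translate, corresponding to $\lambda=1$. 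Feeding these into the duality then gives
\[
u(\kappa^\lambda_x)=\langle\hat{u}(\xi),e^{ix\cdot\xi}\check{\kappa}(\lambda\xi)\rangle\,,\qquad u(\underline{\kappa}_x)=\langle\hat{u}(\xi),e^{ix\cdot\xi}\check{\underline{\kappa}}(\xi)\rangle\,,
\]
which are precisely the expressions inside the moduli of \eqref{eq:fourier besov distribution}.

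With these identities established the equivalence will follow in both directions by inspection. If $u\in B^\alpha_{\infty,\infty}(\mathbb{R}^d)$, the two suprema above are finite, and substituting the identities turns them into $\sup_x\lvert\langle\hat{u}(\xi),e^{ix\cdot\xi}\check{\underline{\kappa}}(\xi)\rangle\rvert<\infty$ and $\sup_{\lambda,x}\lambda^{-\alpha}\lvert\langle\hat{u}(\xi),e^{ix\cdot\xi}\check{\kappa}(\lambda\xi)\rangle\rvert<\infty$, i.e. the uniform bounds $\lesssim 1$ and $\lesssim\lambda^\alpha$ of \eqref{eq:fourier besov distribution}. Conversely, if \eqref{eq:fourier besov distribution} holds for the given pair $\kappa,\underline{\kappa}$, the same substitution shows the local means norm $\|u\|^{\kappa,\underline{\kappa}}_{B^\alpha_{\infty,\infty}(\mathbb{R}^d)}$ is finite, whence $u\in B^\alpha_{\infty,\infty}(\mathbb{R}^d)$ by Definition \ref{def:definition Besov local means}. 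I do not expect a genuine analytic obstacle here: the entire content of the statement is the bookkeeping of the change of variables and of the Fourier duality, the nontrivial input being the already assumed equivalence of the local means norm with the Littlewood-Paley norm of Definition \ref{def:definition Besov Fourier approach}.
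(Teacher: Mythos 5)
Your proposal is correct and follows essentially the same route as the paper: the paper also derives the statement directly from the local means characterization of Definition \ref{def:definition Besov local means} via the duality $u(\varphi)=\hat{u}(\check{\varphi})$ and the scaling identity $\check{(\varphi^\lambda_x)}(\xi)=e^{ix\cdot\xi}\check{\varphi}(\lambda\xi)$, exactly as you do. Your additional remarks (the change of variables yielding the scaling identity, and the fact that $\check{\kappa},\check{\underline{\kappa}}\in\mathcal{S}(\mathbb{R}^d)$ so the pairings are well defined) merely spell out details the paper leaves implicit.
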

\begin{proof}
The statement is a direct consequence of Definition \ref{def:definition Besov local means} combined with the following identities
\begin{align}\label{eq:fourier transform scaling}
		u(\varphi_x) = \langle \hat{u}(\xi), e^{ix\cdot \xi} \check{\varphi}(\xi) \rangle\,, \qquad u(\varphi^\lambda_x) = \langle \hat{u}(\xi), e^{ix\cdot \xi} \check{\varphi}(\lambda\xi) \rangle\,,
	\end{align}
where $\varphi\in\mathcal{S}(\mathbb{R}^d)$, $u \in \mathcal{S}^\prime(\mathbb{R}^d)$, $x\in\mathbb{R}^d$ and $\lambda\in(0,1]$.
In turn these are a by-product of the identities $
u(\varphi) =  \hat{u}(\check{\varphi})\,$, and $\check{(\varphi^\lambda_x)}(\xi) = e^{ix\cdot \xi} \check{\varphi}(\lambda\xi)$.
\end{proof}

\begin{remark}\label{Rmk: fourier besov}
	Observe that, if $\alpha < 0$, then it is sufficient to verify the second of the two conditions in Equation \eqref{eq:fourier besov distribution}.
\end{remark}

\subsection{Pseudodifferential Operators}
In this section we shall focus on the second functional tool which plays a distinguished r\^{o}le in our analysis. Hence we recall succinctly the definition and some notable properties of pseudodifferential operators.
 For later convenience, this section is mainly inspired by \cite{Hintz},  though further details can be found in \cite{Grigis, Hor94}. We start by recalling the definition both of a symbol and of its quantization.
\begin{definition}\label{Def: Symbols}
	Let $m \in \mathbb{R}$ and $n,N \in \mathbb{N}$. A function $a \in C^\infty(\mathbb{R}^d\times \mathbb{R}^N)$ is called a {\bf symbol} of order $m$ if, for all $\alpha \in \mathbb{N}^n_0$, $\beta \in \mathbb{N}^N_0$, it satisfies
	\begin{equation}\label{eq: bound symbols}
		\lvert \partial^\alpha_x \partial_\xi^\beta a(x,\xi) \rvert \leq C_{\alpha\beta} \langle \xi \rangle^{m-\lvert \beta \rvert}
	\end{equation}
	for some constant $C_{\alpha\beta} > 0$ and for any $x$ in a compact set of $\mathbb{R}^d$. 
	We denote the space of symbols of order $m$ with $S^m(\mathbb{R}^d;\mathbb{R}^N)$. 
	In addition, we define the space of residual symbols by
	\begin{equation}\label{eq: residual symbols}
		S^{-\infty}(\mathbb{R}^d;\mathbb{R}^N) := \bigcap_{m \in \mathbb{R}} S^m(\mathbb{R}^d;\mathbb{R}^N).
	\end{equation}
At last we call $S^m_{\textrm{hom}}(\mathbb{R}^d;\mathbb{R}^N)\subset S^m(\mathbb{R}^d;\mathbb{R}^N)$ the collection of homogeneous symbols of order $m$, namely, when $|\xi|>1$,  $a(x,\lambda\xi)=\lambda^m a(x,\xi)$ for all $\lambda>0$ and, for all $\alpha \in \mathbb{N}^n_0$, $\beta \in \mathbb{N}^N_0$
$$\lvert \partial^\alpha_x \partial_\xi^\beta a(x,\xi) \rvert \leq C_{\alpha\beta} |\xi|^{m-\lvert \beta \rvert}.$$
\end{definition}

\begin{definition}\label{Def: Symbols and PsiDO} Let $m \in \mathbb{R}$, $n\in\mathbb{N}$ and let $a \in S^m(\mathbb{R}^d \times \mathbb{R}^d;\mathbb{R}^n)$. 
	We define its quantization $\mathrm{Op}(a):\mathcal{S}(\mathbb{R}^d)\to\mathcal{S}(\mathbb{R}^d)$ as
	\begin{equation}\label{eq: quantization}
		(\mathrm{Op}(a)u)(x) := (2\pi)^{-n} \int_{\mathbb{R}^d} \int_{\mathbb{R}^d} e^{i(x-y) \cdot \xi} a(x,y,\xi) u(y) dy d\xi, \quad u \in \mathcal{S}(\mathbb{R}^d).
	\end{equation}
$\mathrm{Op}(a)$ is called a \textbf{pseudodifferential operator $\bm{\Psi}$DO of order} $\bm{m}$ and the whole set of these operators is denoted by $\Psi^m(\mathbb{R}^d)$.
Moreover, we set
\[
\Psi^{-\infty}(\mathbb{R}^d) := \bigcap_{m \in \mathbb{R}} \Psi^m(\mathbb{R}^d).
\]
\end{definition}

\noindent

Since it plays a r\^{o}le in our analysis, we remark that Equation \eqref{eq: quantization} can be replaced either by the \emph{right quantization} $\mathrm{Op}_R(a)$ or by the {\em left quantization} $\mathrm{Op}_L(a)$
\begin{subequations}
\begin{equation}\label{Eq: right quantization}
	(\mathrm{Op}_R(a^\prime)u)(x) := (2\pi)^{-n} \int_{\mathbb{R}^d} \int_{\mathbb{R}^d} e^{i(x-y) \cdot \xi} a^\prime(y,\xi) u(y)\, dy d\xi.\quad\forall a^\prime\in S^m(\mathbb{R}^d;\mathbb{R}^d)
\end{equation}
\begin{equation}\label{Eq: left quantization}
		(\mathrm{Op}_L(\tilde{a})u)(x) := (2\pi)^{-n} \int_{\mathbb{R}^d} \int_{\mathbb{R}^d} e^{i(x-y) \cdot \xi} \tilde{a}(x,\xi) u(y) dy d\xi\quad \quad\forall \tilde{a}\in S^m(\mathbb{R}^d;\mathbb{R}^d)
\end{equation}
\end{subequations}

\noindent It is important to stress that, at the level of pseudodifferential operators, the choices of quantization procedure is to a certain extent immaterial, since, for any $a\in S^m(\mathbb{R}^d \times \mathbb{R}^d;\mathbb{R}^d)$, there always exist $a_L,a_R\in S^m(\mathbb{R}^d;\mathbb{R}^d)$ such that -- see \cite[Thm. 4.8]{Hintz}
$$\mathrm{Op}(a)=\mathrm{Op}_L(a_L)=\mathrm{Op}_R(a_R).$$

\begin{remark}\label{Rem: extension of PsiDOs}
	By means of a standard duality argument one can extend continuously the action of a pseudodifferential operator of order $m$, $m\in\mathbb{R}$, to tempered distributions. In order not to burdening the reader with an unnecessarily baroque notation, we still indicate any such extension as $\mathrm{Op(a)}:\mathcal{S}^\prime(\mathbb{R}^d)\to\mathcal{S}^\prime(\mathbb{R}^d)$ for  all $a\in S^m(\mathbb{R}^d \times \mathbb{R}^d;\mathbb{R}^d)$.
\end{remark}

\noindent As a last step we give a characterization of a notable subclass of pseudodifferential operators, based on their support properties.

\begin{definition}\label{Def: Properly Supported PsiDO}
	Let $A \in \Psi^m(\mathbb{R}^d)$ and let $K_A \in \mathcal{S}^\prime(\mathbb{R}^d \times \mathbb{R}^d)$ be the associated Schwartz kernel. We say that $A$ is {\em properly supported} if the canonical projections $\pi_1 \colon \mathrm{supp}(K)\subseteq\mathbb{R}^d \times \mathbb{R}^d \to \mathbb{R}^d$ and $\pi_2 \colon \mathrm{supp}(K)\subseteq\mathbb{R}^d \times \mathbb{R}^d \to \mathbb{R}^d$  are proper maps.
\end{definition}

\noindent Associated to a pseudodifferential operator, one can introduce the notion of \emph{operator wavefront set}, which is a key ingredient in our construction outlined in Section \ref{Sec: Besov Wavefront Set}.

\begin{definition}\label{Def: essential support}
	Let $a \in S^m(\mathbb{R}^d;\mathbb{R}^N)$.  We say that a point $(x_0,\xi_0) \in \mathbb{R}^d \times (\mathbb{R}^N \setminus \{0\})$ does not lie in the \textbf{essential support} of $a$
	\[
		\textrm{ess supp}(a) \subset \mathbb{R}^d \times (\mathbb{R}^N\setminus \{0\})\,,
	\]
	if there exists $\varepsilon > 0$ such that for all $\ell \in \mathbb{N}_0^n$, $\beta \in \mathbb{N}^N_0$, $k \in \mathbb{R}$, it holds 
	\begin{equation}
		\lvert \partial^\ell_x \partial^\beta_\xi a(x,\xi) \rvert \leq C \langle \xi \rangle^{-k}, \quad \forall (x,\xi),\;\textrm{such that}\; \lvert \xi \rvert \geq 1,\;\textrm{and}\; \lvert x-x_0\rvert + \bigg \lvert \frac{\xi}{\lvert \xi\rvert} - \frac{\xi_0}{\lvert \xi_0 \rvert} \bigg \rvert < \varepsilon.
	\end{equation}
\end{definition}

Observe that $\textrm{ess supp}(a)$ is a closed subset of $\mathbb{R}^d \times (\mathbb{R}^N \setminus \{0\})$ whereas, for each $x\in\mathbb{R}^d$, $\pi_\xi[\textrm{ess supp}(a)]\subseteq\mathbb{R}^N\setminus\{0\}$ is a conical subset. At last we can state the main definition of this whole section:

\begin{definition}\label{Def: Operator WF}
	Let $A = \mathrm{Op}_L(a)\in\Psi^m(\mathbb{R}^d)$. 
	The \textbf{operator wavefront set} of $A$ is  
	\begin{equation}\label{eq:operator wavefront set}
		WF^\prime(A) := \textrm{ess supp}(a) \subset \mathbb{R}^d \times (\mathbb{R}^N \setminus \{0\}).
	\end{equation}
\end{definition}

\noindent In the following proposition we summarize a few notable properties of the operator wave set. Since the proof is a direct application of Definition \ref{Def: essential support} and \ref{Def: Operator WF}, we omit it.

\begin{proposition}\label{Prop: WF Properties}
	Let $A, B \in \Psi^m(\mathbb{R}^d)$. The following properties hold:
	\begin{itemize}
		\item[(1)] If $A$ has compactly supported Schwartz kernel, then $WF^\prime(A) = \emptyset$ if and only if $A \in\Psi^{-\infty}(\mathbb{R}^d)$.
		\item[(2)] $WF^\prime(A + B) \subset WF^\prime(A) \cup WF^\prime(B)$.
		\item[(3)] $WF^\prime(AB) \subset WF^\prime(A) \cap WF^\prime(B)$.
		\item[(4)] $WF^\prime(A^\ast) = WF^\prime(A)$, where $A^\ast$ is the adjoint of $A$ defined so that for all $u,v\in\mathcal{S}(\mathbb{R}^d)$
		$$\int\limits_{\mathbb{R}^d}dx\,(A^*u)(x)\overline{v}(x)=\int\limits_{\mathbb{R}^d}dx\,u(x)\overline{(Av)(x)}.$$
	\end{itemize}
\end{proposition}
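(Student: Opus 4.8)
The plan is to reduce each of the four statements to the defining condition of $\textrm{ess supp}$ in Definition \ref{Def: essential support}, exploiting that, by Definition \ref{Def: Operator WF}, the operator wavefront set of $\mathrm{Op}_L(a)$ is precisely the set of points where the left symbol $a$ fails to satisfy a symbol estimate of \emph{every} negative order in some conic neighborhood. Throughout I would use three elementary remarks: complex conjugation leaves $|\partial^\ell_x\partial^\beta_\xi a|$ unchanged, so $\textrm{ess supp}(\overline{a})=\textrm{ess supp}(a)$; differentiation in $x$ or $\xi$ cannot enlarge $\textrm{ess supp}$, since the estimate of Definition \ref{Def: essential support} already bounds all derivatives, whence a rapidly decaying symbol has rapidly decaying derivatives; and in the region $|\xi|\geq 1$ the product of a symbol decaying faster than any power of $\langle\xi\rangle$ with an ordinary symbol still decays faster than any power.

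Property (2) is immediate: since $A+B=\mathrm{Op}_L(a+b)$, at any $(x_0,\xi_0)$ lying outside both $\textrm{ess supp}(a)$ and $\textrm{ess supp}(b)$ one takes the smaller of the two radii $\varepsilon$ and adds the corresponding constants to verify the estimate of Definition \ref{Def: essential support} for $a+b$; passing to complements gives the inclusion. For (4) I would invoke the symbolic calculus for the adjoint, $A^\ast=\mathrm{Op}_L(a^\ast)$ with $a^\ast\sim\sum_\alpha\frac{1}{\alpha!}\partial_\xi^\alpha D_x^\alpha\overline{a}$ modulo $S^{-\infty}$: the leading term $\overline{a}$ has the same essential support as $a$ by the conjugation remark, every higher term is a derivative of $\overline{a}$ and hence has essential support contained in $\textrm{ess supp}(a)$, and the residual remainder contributes nothing. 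This yields $WF^\prime(A^\ast)\subseteq WF^\prime(A)$, and applying the same bound to $(A^\ast)^\ast=A$ gives equality. Property (3) follows the identical pattern through the composition formula $AB=\mathrm{Op}_L(c)$ with $c\sim\sum_\alpha\frac{1}{\alpha!}\partial_\xi^\alpha a\,D_x^\alpha b$: each term is a product of a derivative of $a$ and a derivative of $b$, so by the product remark its essential support is contained in $\textrm{ess supp}(a)$ and, symmetrically, in $\textrm{ess supp}(b)$, hence in their intersection; summing and discarding the residual term gives $WF^\prime(AB)\subseteq WF^\prime(A)\cap WF^\prime(B)$.

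The only statement requiring genuine care is (1). The implication $A\in\Psi^{-\infty}\Rightarrow WF^\prime(A)=\emptyset$ is trivial, since a residual symbol satisfies the estimate of Definition \ref{Def: essential support} at every point. For the converse I would first observe that a compactly supported Schwartz kernel forces the left symbol $a(x,\xi)$ to be compactly supported in $x$, as $a(x,\xi)$ is the partial Fourier transform of $K_A(x,x-\cdot)$; thus $a$ vanishes off a compact set $K\subset\mathbb{R}^d$, and the emptiness of $\textrm{ess supp}(a)$ provides, for every $(x_0,\omega)\in K\times\mathbb{S}^{d-1}$, a conic neighborhood on which $a$ decays faster than any power of $\langle\xi\rangle$.

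The main obstacle is exactly the passage from these pointwise statements to a single global bound: emptiness of $\textrm{ess supp}(a)$ only guarantees rapid decay locally, with radii and constants that a priori vary from point to point. Here I would cover the compact set $K\times\mathbb{S}^{d-1}$ by finitely many such conic neighborhoods, take the minimal radius and, for each order $k$, the maximal constant over the finite subcover, and conclude $|\partial^\ell_x\partial^\beta_\xi a|\lesssim\langle\xi\rangle^{-k}$ uniformly for $|\xi|\geq 1$ (the estimate being trivial for $x\notin K$, where $a$ vanishes). This places $a$ in $S^{-\infty}$ and hence $A$ in $\Psi^{-\infty}$. Compactness of the kernel is precisely what supplies the uniformity that emptiness of the essential support alone does not, and this is the one place where the hypothesis of (1) is genuinely used.
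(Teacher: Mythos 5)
Your proof is correct. Note first that the paper offers no argument of its own here: it explicitly omits the proof, asserting that the proposition ``is a direct application of Definition \ref{Def: essential support} and \ref{Def: Operator WF}'', so there is no official proof to compare against line by line. Your treatment of (2) is indeed such a direct application, and in (1) you correctly isolate the one genuinely nontrivial point --- passing from pointwise rapid decay on varying conic neighborhoods to a single uniform symbol estimate --- and resolve it by covering $K\times\mathbb{S}^{d-1}$ with finitely many such neighborhoods, which is exactly where the compact-kernel hypothesis enters (it forces the left symbol, a partial Fourier transform of the kernel, to vanish for $x$ outside a compact set). For (3) and (4) you go somewhat beyond a ``direct application'', invoking the symbolic calculus for adjoints and compositions; this is the standard route (cf.\ \cite{Hintz}) and is in some form unavoidable, since $WF^\prime$ is defined through the \emph{left} symbol and the left symbols of $A^\ast$ and $AB$ are only accessible through these expansions. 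Two small points of hygiene. First, the expansions $a^\ast\sim\sum_\alpha\frac{1}{\alpha!}\partial^\alpha_\xi D^\alpha_x\overline{a}$ and $c\sim\sum_\alpha\frac{1}{\alpha!}\partial^\alpha_\xi a\, D^\alpha_x b$ hold order by order: truncating at $|\alpha|<N$ leaves a remainder in $S^{m-N}$ (resp.\ $S^{2m-N}$), not in $S^{-\infty}$, so to verify the decay estimate of order $k$ at a point outside the essential support you should truncate at $N>m+k$ (resp.\ $N>2m+k$) and absorb the tail, which then satisfies the order-$k$ bound everywhere; speaking of a single ``residual remainder'' elides this routine but necessary step, and note also that the same fixed conic neighborhood (the $\varepsilon$ furnished by Definition \ref{Def: essential support} for $a$, intersected with that for $b$ in case (3)) serves for all the truncated terms, since they are derivatives of $a$ and $b$. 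Second, for $AB$ to be a pseudodifferential operator at all one needs one factor properly supported (or kernels with suitable decay) --- a hypothesis the paper's statement itself glosses over, so this is not a defect of your argument relative to the paper.
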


\noindent A further concept, related to $\Psi$DOs and of great relevance in the following sections is that of microlocal parametrix. 
Here we recall its construction. Without entering into many details, for which we refer in particular to \cite[Chap. 3]{Grigis}, we underline that, given any $A\in\Psi^m(\mathbb{R}^d)$, $m\in\mathbb{R}$, one can always associate to it a principal symbol $[\sigma_m(A)]\in\bigslant{S^m(\mathbb{R}^d;\mathbb{R}^d)}{S^{m-1}(\mathbb{R}^d;\mathbb{R}^d)}$. 
In the following, when we do not write explicitly the square brackets, we are considering a representative within the equivalence class identifying the principal symbol. 

\begin{definition}\label{def: elliptic set}
	Given $A \in \Psi^m(\mathbb{R}^d)$, a point $(x_0,\xi_0) \in \mathbb{R}^d \times (\mathbb{R}^d \setminus \{0\})$ does not lie in the \textbf{elliptic set} of $A$, $\mathrm{Ell}(A)$, if there exists $\varepsilon > 0$ and a constant $C>0$ such that
	\begin{equation}
		\lvert \sigma_m(A)(x,\xi) \rvert \geq C | \xi |^{m}, \quad \forall (x,\xi)\;\textrm{such that}\; \lvert \xi \rvert \geq 1,\;\textrm{and}\; \lvert x-x_0\rvert + \bigg \lvert \frac{\xi}{\lvert \xi\rvert} - \frac{\xi_0}{\lvert \xi_0 \rvert} \bigg \rvert < \varepsilon,
	\end{equation} 
where $[\sigma_m(A)]$ is the principal symbol of $A$. We call {\bf characteristic set} of $A$, $\mathrm{Char}(A)$, the complement of $\mathrm{Ell}(A)$.
\end{definition}


\begin{remark}\label{remark: equivalent def elliptic set}
	Definition \ref{def: elliptic set} can be reformulated as follows: a point $(x_0,\xi_0)\in\mathrm{Ell}(A)$ if there exist $b \in S^{-m}(\mathbb{R}^d;\mathbb{R}^d)$ and a conic neighbourhood of $(x_0,\xi_0)$ such that therein $P_m(A)b-1 \in S^{-1}(\mathbb{R}^d;\mathbb{R}^d)$.
\end{remark}

\begin{proposition}\label{prop: parametrix}
	Let $A \in \Psi^m(\mathbb{R}^d)$ and let $\mathscr{C} \subset \mathrm{Ell}(A)$ be a closed subset. Then there exists $B \in \Psi^{-m}(\mathbb{R}^d)$  such that
	\begin{equation}\label{eq: condition parametrix}
		\mathscr{C} \cap WF^\prime(AB-I) = \emptyset, \quad \mathscr{C} \cap WF^\prime(BA-I) = \emptyset.
	\end{equation}
	$B$ is called \textbf{microlocal parametrix} for $A$ on $\mathscr{C}$.
\end{proposition}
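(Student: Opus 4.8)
The plan is to build, by the standard elliptic symbol iteration, a left microlocal inverse $B_L\in\Psi^{-m}(\mathbb{R}^d)$ and a right microlocal inverse $B_R\in\Psi^{-m}(\mathbb{R}^d)$ for $A$ over $\mathscr{C}$, and then to reconcile the two into a single operator $B$ satisfying both conditions in \eqref{eq: condition parametrix} via the algebraic properties of $WF^\prime$ collected in Proposition \ref{Prop: WF Properties}. Throughout I work with left symbols, since $WF^\prime$ is defined through $\mathrm{Op}_L$, and I use the standard composition formula $\mathrm{Op}_L(a)\mathrm{Op}_L(b)=\mathrm{Op}_L(c)$ with $c\sim\sum_\alpha\frac{1}{\alpha!}\partial_\xi^\alpha a\,D_x^\alpha b$, together with the asymptotic completeness (Borel summation) of the classes $S^\bullet(\mathbb{R}^d;\mathbb{R}^d)$; see \cite{Grigis, Hintz}.

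First I would microlocalize near $\mathscr{C}$. Since $\mathrm{Ell}(A)$ is open and conic and contains the closed set $\mathscr{C}$, I can choose a symbol $\chi\in S^0(\mathbb{R}^d;\mathbb{R}^d)$, homogeneous of degree $0$ for $|\xi|>1$, with $\chi\equiv 1$ on a conic neighbourhood of $\mathscr{C}$ and $\mathrm{ess\,supp}(\chi)\subset\mathrm{Ell}(A)$. On $\mathrm{ess\,supp}(\chi)$ the representative $\sigma_m(A)$ satisfies $|\sigma_m(A)(x,\xi)|\geq C|\xi|^m$ for $|\xi|\geq 1$ by Definition \ref{def: elliptic set}, so $b_0\vcentcolon=\chi/\sigma_m(A)$ is a well-defined element of $S^{-m}(\mathbb{R}^d;\mathbb{R}^d)$, and by the composition formula the symbol of $\mathrm{Op}_L(b_0)A$ equals $\chi$ modulo $S^{-1}$, hence equals $1$ modulo $S^{-1}$ where $\chi\equiv 1$. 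I then iterate: given $b_0,\dots,b_{N-1}$ with $b_j\in S^{-m-j}(\mathbb{R}^d;\mathbb{R}^d)$, the degree $-N$ part of the symbol of $\big(\sum_{j=0}^{N-1}\mathrm{Op}_L(b_j)\big)A-I$ near $\mathscr{C}$ is a remainder $r_N\in S^{-N}$ gathering all lower-order contributions of the composition formula, and setting $b_N\vcentcolon=-\chi\,r_N/\sigma_m(A)\in S^{-m-N}(\mathbb{R}^d;\mathbb{R}^d)$ cancels it, so that the symbol of $\big(\sum_{j=0}^{N}\mathrm{Op}_L(b_j)\big)A-I$ lies in $S^{-N-1}$ on a conic neighbourhood of $\mathscr{C}$.

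By asymptotic completeness there is $b\in S^{-m}(\mathbb{R}^d;\mathbb{R}^d)$ with $b\sim\sum_{j\geq 0}b_j$; putting $B_L\vcentcolon=\mathrm{Op}_L(b)$, the symbol of $B_LA-I$ is residual on a conic neighbourhood of $\mathscr{C}$, which is precisely the statement $\mathscr{C}\cap WF^\prime(B_LA-I)=\emptyset$. Running the same construction for the formal adjoint $A^\ast\in\Psi^m(\mathbb{R}^d)$, whose principal symbol $\overline{\sigma_m(A)}$ has the same modulus and hence the same elliptic set, produces $C\in\Psi^{-m}(\mathbb{R}^d)$ with $\mathscr{C}\cap WF^\prime(CA^\ast-I)=\emptyset$. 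Setting $B_R\vcentcolon=C^\ast$ and invoking $WF^\prime(T^\ast)=WF^\prime(T)$ from Proposition \ref{Prop: WF Properties}(4) for $T=AB_R-I=(CA^\ast-I)^\ast$, one obtains a right inverse with $\mathscr{C}\cap WF^\prime(AB_R-I)=\emptyset$.

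Finally I reconcile the two. Writing $S_1\vcentcolon=AB_R-I$ and $S_2\vcentcolon=B_LA-I$, associativity of $B_LAB_R$ gives $B_R+S_2B_R=B_LAB_R=B_L+B_LS_1$, whence $B_L-B_R=S_2B_R-B_LS_1$. By Proposition \ref{Prop: WF Properties}(3) we have $WF^\prime(S_2B_R)\subset WF^\prime(S_2)$ and $WF^\prime(B_LS_1)\subset WF^\prime(S_1)$, so Proposition \ref{Prop: WF Properties}(2) yields $\mathscr{C}\cap WF^\prime(B_L-B_R)=\emptyset$. Setting $B\vcentcolon=B_L$, the second condition in \eqref{eq: condition parametrix} holds by construction, while $AB-I=S_1+A(B_L-B_R)$ has $WF^\prime$ disjoint from $\mathscr{C}$ by properties (2)--(3), which establishes the first. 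The main obstacle is the microlocalization step: one must ensure that the cutoff $\chi$ can be chosen adapted to the possibly non-conic closed set $\mathscr{C}$ while keeping every remainder $b_N$ supported inside $\mathrm{Ell}(A)$ (using the conicity of $\mathrm{Ell}(A)$ and of $WF^\prime$), and verify that \emph{residual on a conic neighbourhood of $\mathscr{C}$} is indeed equivalent to the vanishing of the intersection with $WF^\prime$; the remainder is the routine, order-by-order elliptic calculus encapsulated by Remark \ref{remark: equivalent def elliptic set}.
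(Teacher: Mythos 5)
Your argument is correct, and it is worth noting that the paper itself contains no proof of Proposition \ref{prop: parametrix}: the result is quoted from \cite[Prop.~6.15]{Hintz}, and what you wrote is essentially the standard construction found there and in \cite{Grigis} --- microlocal elliptic symbol iteration with a cutoff, Borel summation, the adjoint trick for a right parametrix, and the identity $B_L-B_R=S_2B_R-B_LS_1$ combined with Proposition \ref{Prop: WF Properties} to merge the two sides. Two bookkeeping points deserve to be made explicit. First, with a single fixed cutoff $\chi$ the error symbol at stage $N$ is of order $-N$ only on $\{\chi\equiv 1\}$: on the transition region it contains order-zero contributions proportional to $\chi-1$ (already $\mathrm{Op}_L(b_0)A-I$ has symbol $(\chi-1)+S^{-1}$), so $r_N$ as a global symbol is \emph{not} in $S^{-N}$ and $b_N:=-\chi r_N/\sigma_m(A)$ is not literally in $S^{-m-N}$; the fix is either to split off at each stage the terms vanishing identically on $\{\chi\equiv1\}$ (these vanish on a conic neighbourhood of $\mathscr{C}$, hence their essential support never meets $\mathscr{C}$) or to use a nested family of cutoffs, all $\equiv 1$ on a fixed conic neighbourhood of $\mathscr{C}$. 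Second, on the obstacle you flag: since $\mathrm{Ell}(A)$ is open and conic while $WF^\prime$ is conic, the statement should indeed be read with $\mathscr{C}$ a closed \emph{conic} subset (as in the cited source), which is exactly what guarantees the existence of a degree-zero $\chi\equiv 1$ on a conic neighbourhood of $\mathscr{C}$ with $\mathrm{ess\,supp}(\chi)\subset\mathrm{Ell}(A)$; and for the compositions $B_LAB_R$, $S_2B_R$, $B_LS_1$ to be defined one should replace the operators by properly supported representatives modulo $\Psi^{-\infty}(\mathbb{R}^d)$, which leaves $WF^\prime$ unchanged. With these routine repairs your proof is complete and coincides with the one the paper delegates to the literature.
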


\noindent The proof of this proposition can be found in \cite[Prop. 6.15]{Hintz}. For later convenience we conclude the section stating a result on the properties of pseudodifferential operators acting on Besov spaces, see \cite[Sect 6.6]{Abe12}.

\begin{theorem}\label{th: boundedness on besov spaces}
	Let $m \in \mathbb{R}$, $\alpha \in \mathbb{R}$ and let $a\in S^m(\mathbb{R}^d;\mathbb{R}^d)$. Let $A:\mathcal{S}^\prime(\mathbb{R}^d)\to\mathcal{S}^\prime(\mathbb{R}^d)$ be the associated element of $\Psi^m(\mathbb{R}^d)$ as per Definition \ref{Def: Symbols and PsiDO}, Equation \eqref{Eq: left quantization} and Remark \ref{Rem: extension of PsiDOs}. Then the restriction of $A$ to a Besov space as per Definition \ref{def:definition Besov Fourier approach} setting $p=q=\infty$ is a bounded linear operator $A:B^\alpha_{\infty,\infty}(\mathbb{R}^d) \to B^{\alpha - m}_{\infty,\infty}(\mathbb{R}^d)$.
\end{theorem}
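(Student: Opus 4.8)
The plan is to prove the boundedness $A\colon B^\alpha_{\infty,\infty}(\mathbb{R}^d)\to B^{\alpha-m}_{\infty,\infty}(\mathbb{R}^d)$ by estimating the dyadic Littlewood--Paley blocks $\psi_k(D)(Au)$ directly, using the symbolic calculus to exploit the almost-orthogonality between the frequency localization $\psi_j(D)$ coming from the input norm and the localization $\psi_k(D)$ defining the output norm. Since $A=\mathrm{Op}_L(a)$ with $a\in S^m(\mathbb{R}^d;\mathbb{R}^d)$, I would first write, for $u\in B^\alpha_{\infty,\infty}(\mathbb{R}^d)$, the decomposition $u=\sum_{j\geq 0}\psi_j(D)u=\sum_j u_j$, which converges in $\mathcal{S}^\prime$, so that $\psi_k(D)Au=\sum_{j\geq 0}\psi_k(D)A\,u_j$. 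The goal is then to show $\sup_{k\geq 0}2^{k(\alpha-m)}\|\psi_k(D)Au\|_{L^\infty}\lesssim \|u\|_{B^\alpha_{\infty,\infty}}$, and the whole matter reduces to bounding each mixed piece $\|\psi_k(D)A\,u_j\|_{L^\infty}$ with constants summable in $j$ for each fixed $k$ (and uniform in $k$ after the $2^{k(\alpha-m)}$ weight).

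The key technical step is the off-diagonal decay estimate. I expect to show that the composite operator $\psi_k(D)\circ A\circ \psi_j(D)$ is a pseudodifferential operator whose symbol is sharply localized in frequency: because $\psi_k$ is supported in the annulus $\{2^{k-1}\leq|\xi|\leq 2^{k+1}\}$ and the symbol $a$ obeys $S^m$ bounds, the composition $\psi_k(D)A$ has a symbol supported (up to a residual term) in the frequency annulus of scale $2^k$, hence effectively $\psi_k(D)A\psi_j(D)$ is negligible — i.e. maps into $S^{-\infty}$ with rapidly decaying seminorms in $|k-j|$ — unless $|k-j|\lesssim 1$. Quantitatively I would establish, for each $M\in\mathbb{N}$, a bound of the form
\begin{equation*}
\|\psi_k(D)A\,\psi_j(D)v\|_{L^\infty}\lesssim 2^{km}\,2^{-M|k-j|}\,\|\psi_j(D)v\|_{L^\infty},
\end{equation*}
where the factor $2^{km}$ records the order-$m$ loss from the symbol and $2^{-M|k-j|}$ is the almost-orthogonality gain, obtained via repeated integration by parts in the oscillatory integral (exploiting the disjointness of the supports of $\psi_k$ and $\psi_j$ when $|k-j|$ is large) together with Young's inequality to pass from the kernel to the $L^\infty\to L^\infty$ bound.

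Granting this, the synthesis is routine: for fixed $k$ one sums over $j$, using $\|u_j\|_{L^\infty}\leq 2^{-j\alpha}\|u\|_{B^\alpha_{\infty,\infty}}$, to get
\begin{equation*}
2^{k(\alpha-m)}\|\psi_k(D)Au\|_{L^\infty}\lesssim \sum_{j\geq 0}2^{k(\alpha-m)}2^{km}2^{-M|k-j|}2^{-j\alpha}\|u\|_{B^\alpha_{\infty,\infty}}=\sum_{j\geq 0}2^{(k-j)\alpha}2^{-M|k-j|}\|u\|_{B^\alpha_{\infty,\infty}},
\end{equation*}
and choosing $M>|\alpha|$ makes the geometric series in $|k-j|$ convergent with a bound uniform in $k$, yielding $\|Au\|_{B^{\alpha-m}_{\infty,\infty}}\lesssim\|u\|_{B^\alpha_{\infty,\infty}}$. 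Some care is needed at the low-frequency end (the block $\psi_0$, where the annulus structure degenerates to a ball), but there the estimate is even easier since no decay in $\xi$ is lost. The main obstacle is the almost-orthogonality estimate: making the integration-by-parts argument rigorous requires tracking how the $S^m$ seminorms of $a$ interact with the derivatives falling on the cutoffs $\psi_k,\psi_j$, and verifying that the resulting kernel bounds are genuinely uniform in $k,j$ after the dyadic rescaling. In practice, rather than redo this from scratch, I would invoke the cited result \cite[Sect.~6.6]{Abe12}, which packages exactly this $L^\infty$-type symbolic boundedness on Besov scales, and limit the argument above to sketching why the stated mapping orders $\alpha\mapsto\alpha-m$ are the natural ones dictated by the order of the symbol.
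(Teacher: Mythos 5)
Your proposal is correct, and it is worth noting that the paper itself contains no proof of Theorem \ref{th: boundedness on besov spaces}: the result is quoted directly from \cite[Sect.~6.6]{Abe12}, which is also where your sketch ultimately defers, so your outline in fact supplies more detail than the paper does. What you describe is precisely the standard Littlewood--Paley almost-orthogonality argument underlying that citation, and your synthesis is right: the off-diagonal bound $\|\psi_k(D)A\psi_j(D)v\|_{L^\infty}\lesssim 2^{km}2^{-M|k-j|}\|\psi_j(D)v\|_{L^\infty}$, combined with $\|\psi_j(D)u\|_{L^\infty}\leq 2^{-j\alpha}\|u\|_{B^\alpha_{\infty,\infty}}$ and the choice $M>|\alpha|$, closes the estimate uniformly in $k$, and the $\psi_0$ block is indeed harmless. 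Two refinements to your key step. First, the mechanism of the gain is not support disjointness alone: for the left quantization, if $a$ were $x$-independent then $\psi_k(D)A\psi_j(D)$ would vanish identically for $|k-j|\geq 2$; in general it is the $x$-dependence of $a$ that transfers frequency, and $2^{-M|k-j|}$ comes from integrating by parts in the internal variable of the composed kernel using $|\xi-\eta|\gtrsim 2^{\max(j,k)}$ together with the $S^m$ bounds on $\partial_x^\beta a$ -- your phrase ``supported up to a residual term'' hides that the residual is exactly where uniformity in $k$ must be tracked (it holds, since the $\psi_k$ are bounded in $S^0$ uniformly in $k$). Second, your argument, like the cited theorem, requires the symbol estimates to hold uniformly in $x\in\mathbb{R}^d$, whereas Definition \ref{Def: Symbols} only demands them for $x$ in compact sets; under that literal reading the global statement fails (take $a(x,\xi)=f(x)$ with $f$ smooth and unbounded), so the theorem must be read with globally uniform bounds, as in \cite{Abe12} -- a defect of the paper's formulation rather than of your proof.
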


\subsubsection{Localization of a $\Psi$DO}\label{Sec: Localization of a PsiDO}

In the next sections, we will be interested in the behaviour of $\Psi$DOs under the action of a local diffeomorphism.  To this end we adapt to our framework and to our notations the analysis in \cite[Chap. 18.1]{Hor94}. 

Hence, let $\Omega\subset\mathbb{R}^d$ be an open subset, we say that a function $a\in C^\infty(\Omega\times\mathbb{R}^d)$ identifies a {\em local symbol} on $\Omega\times\mathbb{R}^d$, {\it i.e.} $a\in S^m(\Omega;\mathbb{R}^d)$ if $\phi a\in S^m(\mathbb{R}^d;\mathbb{R}^d)$ for all $\phi\in C^\infty_0(\Omega)$. Using Equation \eqref{Eq: left quantization} one identifies an operator 
\begin{equation}\label{Eq: local PsiDO}
\mathrm{Op}_L(a):\mathcal{S}^\prime(\mathbb{R}^d)\to\mathcal{D}^\prime(\Omega).
\end{equation}
Observing that $C^\infty_0(\Omega)\hookrightarrow\mathcal{E}^\prime(\Omega)\hookrightarrow\mathcal{S}^\prime(\mathbb{R}^d)$, one can restrict the domain in Equation \eqref{Eq: local PsiDO} to an operator $\mathrm{Op}_L(a):\mathcal{E}^\prime(\Omega)\to\mathcal{D}^\prime(\Omega)$ or $\mathrm{Op}_L(a):C^\infty_0(\Omega)\to C^\infty(\Omega)$, where with a slight abuse of notation we keep on using the same symbol $\mathrm{Op}_L(a)$. In full analogy with Definition \ref{Def: Symbols and PsiDO}, we indicate the ensuing collection of pseudodifferential operators by $\Psi^m(\Omega)$. The following theorem is the direct adaptation to our setting and notations of \cite[Thm. 18.1.17]{Hor94}.

\begin{theorem}\label{th: psido coordinate change}
	Let $\Omega,\Omega^\prime \subset \mathbb{R}^d$ be open subsets, $f\in\mathrm{Diff}(\Omega;\Omega^\prime)$ and let $A \in \Psi^m(\Omega^\prime)$. Then 
	\begin{equation}
		A_f \colon C_0^\infty(\Omega) \to C^\infty(\Omega), \quad u\mapsto A_f u :=  A((f^{-1})^\ast u) \circ f
	\end{equation}
	is a pseudodifferential operator of order $m$. Moreover, 
	\begin{equation}\label{eq: coordinate change principal symbol}
		\sigma_m(A_f)(x,\xi) = \sigma_m(A)(f(x), ({}^t df(x))^{-1} \xi),
	\end{equation}
	where $\sigma_m(A_f)$ and $\sigma_m(A)$ are the principal symbols of $A_f$ and $A$ respectively while $df$ stands for the differential map associated with $f$.
\end{theorem}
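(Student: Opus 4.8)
The plan is to follow the classical strategy for the coordinate invariance of $\Psi$DOs, adapting H\"ormander's argument to the local symbol framework introduced above. First I would dispose of the mapping property: since $f$ is a diffeomorphism, $(f^{-1})^\ast u = u\circ f^{-1}\in C_0^\infty(\Omega^\prime)$ whenever $u\in C_0^\infty(\Omega)$, and since $A\in\Psi^m(\Omega^\prime)$ sends $C_0^\infty(\Omega^\prime)$ into $C^\infty(\Omega^\prime)$, the composition $A_f u = A((f^{-1})^\ast u)\circ f$ indeed lies in $C^\infty(\Omega)$. To identify $A_f$ as a $\Psi$DO, I would write $A=\mathrm{Op}_L(a)$ with $a\in S^m(\Omega^\prime;\mathbb{R}^d)$ and insert the definition into the oscillatory integral \eqref{Eq: left quantization}. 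After the change of variables $y^\prime = f(y)$ (with Jacobian $|\det df(y)|$) in the inner integral, $A_f$ acquires the representation of an operator with amplitude $a(f(x),\xi)\,|\det df(y)|$ and the non-standard phase $(f(x)-f(y))\cdot\xi$.

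The core of the argument is to reduce this to the standard phase $(x-y)\cdot\eta$. By the fundamental theorem of calculus one writes $f(x)-f(y)=M(x,y)(x-y)$, where $M(x,y):=\int_0^1 df(y+t(x-y))\,dt$ is smooth and satisfies $M(x,x)=df(x)$; in particular $M$ is invertible in a neighbourhood of the diagonal. I would then introduce a cutoff $\chi(x,y)$, equal to $1$ near the diagonal and supported where $M(x,y)$ is invertible, and split the amplitude as $\chi$-part plus $(1-\chi)$-part. On the off-diagonal piece, injectivity of $f$ forces $f(x)-f(y)\neq 0$, so the phase has no stationary points in $\xi$; repeated integration by parts via $L=\frac{(f(x)-f(y))\cdot\partial_\xi}{i|f(x)-f(y)|^2}$, which reproduces the exponential, shows this contribution has a smooth kernel and hence defines an element of $\Psi^{-\infty}(\Omega)$. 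On the near-diagonal piece the phase equals $(x-y)\cdot {}^t M(x,y)\xi$, so the substitution $\eta={}^t M(x,y)\xi$ (with Jacobian $|\det M(x,y)|^{-1}$) restores the standard phase and yields the new amplitude
\[
	\tilde a(x,y,\eta)=\chi(x,y)\, a\big(f(x),({}^t M(x,y))^{-1}\eta\big)\,|\det df(y)|\,|\det M(x,y)|^{-1}.
\]

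It then remains to verify that $\tilde a$ is a genuine amplitude in $S^m(\Omega\times\Omega;\mathbb{R}^d)$ and to pass to a left symbol. On the support of $\chi$ the matrix $({}^t M)^{-1}$ is smooth with $|({}^t M)^{-1}\eta|\sim|\eta|$ uniformly on compacta, so the symbol estimates \eqref{eq: bound symbols} for $a$ transfer to $\tilde a$ through the chain rule; this is the step I expect to be the most technically demanding, since one has to track how the $x,y$-derivatives of $M^{-1}$ interact with the $\xi$-rescaling, though it is entirely routine. Invoking the reduction of an amplitude to a left symbol recalled after Definition \ref{Def: Symbols and PsiDO}, one gets $A_f=\mathrm{Op}_L(a_L)$ with $a_L(x,\xi)\sim\sum_\alpha\frac{1}{\alpha!}\partial_\eta^\alpha D_y^\alpha\tilde a(x,y,\eta)\big|_{y=x}$, confirming $A_f\in\Psi^m(\Omega)$. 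Finally, the principal symbol is read off from the leading ($\alpha=0$) term on the diagonal: since $M(x,x)=df(x)$ and $|\det df(x)|\,|\det df(x)|^{-1}=1$, one obtains $\tilde a(x,x,\xi)=a\big(f(x),({}^t df(x))^{-1}\xi\big)$, which is precisely \eqref{eq: coordinate change principal symbol} modulo $S^{m-1}(\Omega;\mathbb{R}^d)$.
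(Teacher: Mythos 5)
Your argument is correct and is essentially the proof the paper relies on: Theorem \ref{th: psido coordinate change} is stated as a direct adaptation of H\"ormander's Theorem 18.1.17, whose proof is precisely this Kuranishi-trick reduction -- off-diagonal part smoothed by non-stationary phase integration by parts, near-diagonal substitution $\eta = {}^t M(x,y)\xi$ with $M(x,y)=\int_0^1 df(y+t(x-y))\,dt$, and the amplitude-to-left-symbol expansion whose $\alpha=0$ term gives \eqref{eq: coordinate change principal symbol}. The only point worth flagging is that $M(x,y)$ requires the segment from $y$ to $x$ to lie in $\Omega$, which need not be convex, so your cutoff $\chi$ must in addition be supported in a sufficiently small (locally uniform) neighbourhood of the diagonal -- a routine localization that does not affect the argument.
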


\section{Besov Wavefront Set}\label{Sec: Besov Wavefront Set}
The aim of this section is to introduce our main object of investigation. We shall therefore give a definition of \emph{Besov wavefront set}, discussing subsequently its main structural properties. We proceed in two different, albeit ultimately equivalent ways. The first is based on the prototypical notion of wavefront set based on Fourier transforms -- \cite[Ch. 8]{Hor03}, while the second, outlined in Section \ref{Sec: PsiDOWF}, relies on pseudodifferential operators as introduced in Section \ref{Sec: Besov Wavefront Set}.  Observe that, in the following, we rely heavily on Proposition \ref{Prop: fourier besov} as well as on Definition \ref{Def: Kernel for Besov}.

\begin{definition}\label{def:besov wavefront set}
	Let $u \in \mathcal{D}^\prime(\mathbb{R}^d)$ and $\alpha \in \mathbb{R}$. 
	We say that $(x_0,\xi_0)\in \mathbb{R}^d \times (\mathbb{R}^d\setminus \{0\})$ does not lie in the $B_{\infty,\infty}^{\alpha}$-wavefront set, denoting $(x_0,\xi_0) \not \in \mathrm{WF}^\alpha(u)$, if there exist $\phi\in\mathcal{D}(\mathbb{R}^d)$ with $\phi(x)\neq 0$ as well as an open conic neighborhood $\Gamma$ of $\xi$ in $\mathbb{R}^d\setminus\{0\}$ such that for any compact set $K \subset \mathbb{R}^d$
	\begin{align}\label{eq:wavefront set besov}
		\bigg \lvert \int_{\Gamma} \widehat{\phi u}(\xi) \check{\underline{\kappa}}(\xi) e^{i x \cdot \xi} d\xi \bigg \rvert \lesssim& 1\,, 	
	\end{align}
	\begin{align}\label{eq:wavefront set besov II}
		\bigg \lvert \int_{\Gamma} \widehat{\phi u}(\xi) \check{\kappa}(\lambda\xi) e^{i x \cdot \xi} d\xi \bigg \rvert \lesssim& \lambda^\alpha\,,
	\end{align}
for any $\kappa\in \mathscr{B}_{\lfloor \alpha \rfloor}$, $\underline{\kappa}\in\mathcal{D}(B(0,1))$ with $\check{\underline{\kappa}}(0)\neq 0$, $\lambda \in (0,1]$ and $x \in K$.
\end{definition}

\begin{remark}
Observe that, on account of Proposition \ref{Prop: fourier besov} and of Remark \ref{Rmk: fourier besov}, whenever $\alpha<0$ in Definition \ref{def:besov wavefront set} it suffices to check that Equation \eqref{eq:wavefront set besov II} holds true.
\end{remark}

We are now in a position to prove some basic properties of the Besov wavefront set which are a direct consequence of its definition.

\begin{proposition}\label{Prop: empty Besov}
	Let $u\in\mathcal{D}^\prime(\mathbb{R}^d)$. 
	Then
	\begin{align*}
	u\in B_{\infty,\infty}^{\alpha,\mathrm{loc}}(\mathbb{R}^d) \iff \mathrm{WF}^\alpha(u) = \emptyset\,.
	\end{align*}
\end{proposition}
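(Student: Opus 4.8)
The plan is to prove the two implications separately, in both cases reducing matters to the Fourier-space characterization of $B^\alpha_{\infty,\infty}(\mathbb{R}^d)$ furnished by Proposition \ref{Prop: fourier besov}.

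The implication $u\in B^{\alpha,\mathrm{loc}}_{\infty,\infty}(\mathbb{R}^d)\Rightarrow\mathrm{WF}^\alpha(u)=\emptyset$ is the elementary one. Given an arbitrary $(x_0,\xi_0)\in\mathbb{R}^d\times(\mathbb{R}^d\setminus\{0\})$, I would choose any $\phi\in\mathcal{D}(\mathbb{R}^d)$ with $\phi(x_0)\neq0$; by hypothesis $\phi u\in B^\alpha_{\infty,\infty}(\mathbb{R}^d)$, so Proposition \ref{Prop: fourier besov} provides the bounds \eqref{eq:fourier besov distribution} for $\phi u$, uniformly in $x\in\mathbb{R}^d$ and $\lambda\in(0,1)$ (the endpoint $\lambda=1$ being a single fixed test function and hence harmless). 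Selecting the open conic neighbourhood $\Gamma=\mathbb{R}^d\setminus\{0\}$ of $\xi_0$, the conic integrals in \eqref{eq:wavefront set besov} and \eqref{eq:wavefront set besov II} coincide (up to the null set $\{0\}$) with the full-space integrals controlled by Proposition \ref{Prop: fourier besov}, and the bound is uniform in $x$, hence in particular over every compact $K$. Thus $(x_0,\xi_0)\notin\mathrm{WF}^\alpha(u)$; since the point was arbitrary, $\mathrm{WF}^\alpha(u)=\emptyset$.

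For the reverse implication I fix $\varphi\in\mathcal{D}(\mathbb{R}^d)$ and must show $\varphi u\in B^\alpha_{\infty,\infty}(\mathbb{R}^d)$. First I would localise in position, reducing to the claim that each $x_0\in\operatorname{supp}(\varphi)$ admits a neighbourhood $U_{x_0}$ with $\psi u\in B^\alpha_{\infty,\infty}(\mathbb{R}^d)$ for all $\psi\in\mathcal{D}(U_{x_0})$: granting this, a finite subcover $\{U_{x_i}\}$ of $\operatorname{supp}(\varphi)$ and a subordinate partition of unity $\{\chi_i\}$ give $\varphi u=\sum_i(\chi_i\varphi)u$, a finite sum in the Banach space $B^\alpha_{\infty,\infty}(\mathbb{R}^d)$. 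To prove the claim I fix $x_0$ and use that, as $\mathrm{WF}^\alpha(u)=\emptyset$, every $(x_0,\xi)$ with $\xi\in S^{d-1}$ is excluded, supplying a localiser $\phi_\xi$ with $\phi_\xi(x_0)\neq0$ and a cone $\Gamma_\xi$ on which \eqref{eq:wavefront set besov}--\eqref{eq:wavefront set besov II} hold. By compactness of $S^{d-1}$ finitely many $\Gamma_1,\dots,\Gamma_n$ cover the sphere, with localisers $\phi_1,\dots,\phi_n$ that I may arrange to be simultaneously non-vanishing on a common neighbourhood $U_{x_0}$. For $\psi\in\mathcal{D}(U_{x_0})$ I then write $\psi=\rho_i\phi_i$ with $\rho_i:=\psi/\phi_i\in\mathcal{D}$, so that $\widehat{\psi u}=(2\pi)^{-d}\,\widehat{\rho_i}\ast\widehat{\phi_i u}$.

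The decisive step is a convolution lemma of H\"ormander type \cite{Hor94}: convolution with the Schwartz function $\widehat{\rho_i}$ transfers the conic estimates valid for $\widehat{\phi_i u}$ on $\Gamma_i$ to analogous estimates for $\widehat{\psi u}$, truncated by a smooth conical cut-off $\theta_i\in S^0_{\mathrm{hom}}$ supported in a slightly smaller cone $\Gamma_i'\Subset\Gamma_i$. The integration variable is split into the region keeping $\xi-\zeta\in\Gamma_i$, where the hypothesis \eqref{eq:wavefront set besov II} applies, and its complement, where $\xi\in\Gamma_i'$ forces $\zeta$ large so that the rapid decay of $\widehat{\rho_i}$ dominates the polynomial growth of the smooth function $\widehat{\phi_i u}$ (recall $\phi_i u\in\mathcal{E}'(\mathbb{R}^d)$). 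Choosing the $\Gamma_i'$ still covering $S^{d-1}$ and a homogeneous degree-$0$ conical partition of unity $\{\theta_i\}$ subordinate to $\{\Gamma_i'\}$, I insert $1=\sum_i\theta_i$ under the full-space integral $\int_{\mathbb{R}^d}\widehat{\psi u}(\xi)\,\check{\kappa}(\lambda\xi)e^{ix\cdot\xi}\,d\xi$, bound each summand by $\lambda^\alpha$ (respectively by $1$, using \eqref{eq:wavefront set besov}) via the previous step, and conclude through Proposition \ref{Prop: fourier besov} that $\psi u\in B^\alpha_{\infty,\infty}(\mathbb{R}^d)$. I expect the main obstacle to be precisely this convolution lemma together with the bookkeeping required to pass from the sharp cones of Definition \ref{def:besov wavefront set} to the smooth conical partition of unity, the remaining compactness-and-partition arguments being routine.
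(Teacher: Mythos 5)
Your forward implication coincides with the paper's: choosing $\Gamma=\mathbb{R}^d\setminus\{0\}$ reduces the conic integrals of Definition \ref{def:besov wavefront set} to the full-space bounds of Proposition \ref{Prop: fourier besov}, and the paper indeed dispatches this direction in one sentence. For the converse, be aware that the paper's own proof is a single unproved assertion -- from $\mathrm{WF}^\alpha(u)=\emptyset$ it directly claims the full-space estimates ``for any $\phi\in\mathcal{D}(\mathbb{R}^d)$'', although the definition only supplies, for each point $(x_0,\xi_0)$, \emph{some} localiser and \emph{some} cone. So the localiser/cone patching you set out (cover $S^{d-1}$ by finitely many cones, shrink to a common neighbourhood where all the $\phi_i$ are non-vanishing, write $\psi=\rho_i\phi_i$) is exactly the content the authors leave implicit, and your skeleton is the right fleshing-out of their one-liner.

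The genuine gap is inside your convolution lemma, and it is more than bookkeeping. Your splitting into ``the region keeping $\xi-\zeta\in\Gamma_i$, where the hypothesis \eqref{eq:wavefront set besov II} applies'' imports the classical smooth-wavefront argument, in which one holds a \emph{pointwise} rapid-decay bound on $\widehat{\phi_i u}$ inside the cone and may estimate the convolution integrand region by region. Here the hypothesis is not pointwise: it controls the oscillatory integral $\int_{\Gamma_i}\widehat{\phi_i u}(\eta)\check{\kappa}(\lambda\eta)e^{ix\cdot\eta}\,d\eta$ as a whole, and such bounds neither restrict to subcones of $\Gamma_i$ (cancellations may be lost) nor survive the insertion of your homogeneous cut-offs: since $\theta_i$ has degree $0$, one has $\theta_i(\xi)\check{\kappa}(\lambda\xi)=(\theta_i\check{\kappa})(\lambda\xi)$, but $\theta_i\check{\kappa}$ is not the inverse transform of an element of $\mathscr{B}_{\lfloor\alpha\rfloor}$ -- compact support in $B(0,1)$ is destroyed. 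Likewise, after exchanging the order of integration and substituting $\eta=\xi-\zeta$, the kernel becomes $\check{\kappa}(\lambda\eta+\lambda\zeta)=\mathcal{F}^{-1}\bigl(e^{i\lambda\zeta\cdot y}\kappa(y)\bigr)(\lambda\eta)$: the modulated kernels $e^{i\lambda\zeta\cdot y}\kappa$ remain supported in $B(0,1)$ but lose the moment conditions of Definition \ref{Def: Kernel for Besov}, because $\partial^\beta\mathcal{F}^{-1}\bigl(e^{i\lambda\zeta\cdot y}\kappa\bigr)(0)=(\partial^\beta\check{\kappa})(\lambda\zeta)\neq 0$ in general; the integration region $\Gamma_i'-\zeta$ is $\zeta$-dependent; and the implicit constant in Definition \ref{def:besov wavefront set} is allowed to depend on $\kappa$, so you need uniformity over this whole kernel family, which the definition does not hand you. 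None of this is fatal -- the moment defect is $O((\lambda|\zeta|)^{\lfloor\alpha\rfloor+1-|\beta|})$ by Taylor expansion and the rapid decay of $\widehat{\rho_i}$ can absorb it -- but as written your ``decisive step'' is not a citable lemma of H\"ormander type: it is the actual theorem, it must be proved in this averaged setting, and the paper offers no assistance since its converse consists precisely of the assertion you are trying to justify.
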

\begin{proof}
	The implication
\begin{align*}
	u \in B_{\infty,\infty}^{\alpha,\mathrm{loc}}(\mathbb{R}^d) \Longrightarrow \mathrm{WF}^\alpha(u) = \emptyset\,,
\end{align*}
	follows immediately combining Definition \ref{def:definition Besov Fourier approach} and Proposition \ref{Prop: fourier besov} with Definition \ref{def:besov wavefront set}.
	Conversely, if $\mathrm{WF}^\alpha(u) = \emptyset$, then once more Definition \ref{def:besov wavefront set} entails that, for any $\phi\in\mathcal{D}(\mathbb{R}^d)$, it holds
	\begin{align*}
		\bigg \lvert \int_{\mathbb{R}^d} \widehat{\phi u}(\eta) e^{iy\cdot \eta} \check{\kappa}(\lambda \eta) d\eta \bigg\rvert \lesssim \lambda^\alpha\,, \qquad \bigg \lvert \int_{\mathbb{R}^d} \widehat{\phi u}(\eta) e^{iy\cdot \eta} \check{\underline{\kappa}}(\eta) d\eta \bigg\rvert \lesssim 1\,.
	\end{align*}
From Proposition \ref{Prop: fourier besov} it descends that $\phi u\in B_{\infty,\infty}^\alpha(\mathbb{R}^d)$ for any $\phi\in\mathcal{D}(\mathbb{R}^d) \nonumber$. 
This proves the sought statement.
\end{proof}

\begin{proposition}
	Let $u,v \in \mathcal{D}^\prime(\mathbb{R}^d)$. Then
	\[
	\mathrm{WF}^\alpha(u+v) \subset \mathrm{WF}^\alpha(u) \cup \mathrm{WF}^\alpha(v).
	\]
\end{proposition}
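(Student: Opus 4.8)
The plan is to prove the equivalent contrapositive inclusion: if $(x_0,\xi_0)\notin\mathrm{WF}^\alpha(u)$ and $(x_0,\xi_0)\notin\mathrm{WF}^\alpha(v)$, then $(x_0,\xi_0)\notin\mathrm{WF}^\alpha(u+v)$. By Definition \ref{def:besov wavefront set} the first hypothesis supplies a cutoff $\phi_1\in\mathcal{D}(\mathbb{R}^d)$ with $\phi_1(x_0)\neq0$ and an open conic neighbourhood $\Gamma_1$ of $\xi_0$ for which the bounds \eqref{eq:wavefront set besov}--\eqref{eq:wavefront set besov II} hold for $u$; the second supplies an analogous pair $(\phi_2,\Gamma_2)$ for $v$. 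Since the integral appearing on the left-hand side of \eqref{eq:wavefront set besov}--\eqref{eq:wavefront set besov II} is linear in the distribution and $\widehat{\phi(u+v)}=\widehat{\phi u}+\widehat{\phi v}$ by linearity of the Fourier transform, once I exhibit a single admissible pair $(\phi,\Gamma)$ for which both defining estimates hold simultaneously for $u$ and for $v$, the triangle inequality applied to those integrals immediately yields the corresponding bounds for $u+v$, proving $(x_0,\xi_0)\notin\mathrm{WF}^\alpha(u+v)$.

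The natural common data are $\phi:=\phi_1\phi_2$, which still satisfies $\phi(x_0)=\phi_1(x_0)\phi_2(x_0)\neq0$, together with an open conic neighbourhood $\Gamma$ of $\xi_0$ chosen so small that $\overline{\Gamma}\setminus\{0\}\subset\Gamma_1\cap\Gamma_2$. What remains, and what constitutes the actual content of the argument, is to verify that the bounds known for $u$ with the witness $(\phi_1,\Gamma_1)$ persist with the witness $(\phi,\Gamma)$, and symmetrically for $v$ after replacing $(\phi_2,\Gamma_2)$ by $(\phi,\Gamma)$. This is a localization statement: passing from $\phi_1$ to $\phi_1\phi_2$ is multiplication by the smooth compactly supported factor $\phi_2$, while passing from $\Gamma_1$ to $\Gamma$ shrinks the cone. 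I expect to handle both at once through the convolution identity $\widehat{\phi u}=(2\pi)^{-d}\,\widehat{\phi_2}\ast\widehat{\phi_1 u}$, inserting it into \eqref{eq:wavefront set besov II} and splitting the inner frequency integral according to whether the integration variable lies in $\Gamma_1$ or in its complement.

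The main obstacle is precisely this localization step, which is the Besov analogue of the classical facts that multiplication by a smooth function does not enlarge the wavefront set and that the defining estimate may be restricted to a subcone. On the complementary region the frequency variable of $\widehat{\phi_1 u}$ is angularly separated from $\Gamma\ni\xi$, so the difference of frequencies is bounded below proportionally to their size; the rapid decay of $\widehat{\phi_2}$ then dominates the at most polynomial growth of $\widehat{\phi_1 u}$ (a compactly supported distribution, hence of Paley--Wiener type), producing a contribution that decays faster than any power of $\lambda$ and is therefore harmless. On the region inside $\Gamma_1$ one wants to fall back on the hypothesis for $u$; the delicate point here is that the hypothesis controls the integral carrying the band-pass factor $\check{\kappa}(\lambda\eta)$, whereas the convolution produces $\check{\kappa}(\lambda\xi)$ with $\xi\neq\eta$, so one must absorb this mismatch, exploiting that the estimate in Definition \ref{def:besov wavefront set} is required to hold for \emph{all} $\kappa\in\mathscr{B}_{\lfloor\alpha\rfloor}$ and uniformly in $x$ over compacta. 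Once this stability is in place the conclusion follows immediately from the linearity-plus-triangle-inequality argument above; I would expect a cleaner route to the same localization to become available later through the pseudodifferential characterization of $\mathrm{WF}^\alpha$, but at the present stage the Fourier-analytic splitting is the tool at hand.
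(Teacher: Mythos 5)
Your core mechanism---linearity of $w\mapsto \int_\Gamma \widehat{\phi w}(\eta)\,\check{\kappa}(\lambda\eta)e^{iy\cdot\eta}\,d\eta$ in $w$ plus the triangle inequality---is exactly the paper's proof, which argues in the negated form: assuming $(x_0,\xi_0)\in\mathrm{WF}^\alpha(u+v)$, for \emph{every} pair $(\phi,\Gamma)$ the defining bound for $u+v$ fails, whence by the triangle inequality one of the two corresponding bounds for $u$ or for $v$ fails, and the paper concludes directly. In particular the paper never constructs the product cutoff $\phi_1\phi_2$ or a shrunken cone, and never proves any localization lemma. You are right, however, that strictly speaking the final entailment (equivalently, your contrapositive) needs the witnesses to be taken \emph{common} to $u$ and $v$, and that this is not automatic here: unlike the classical sup-norm definition of $WF$, the estimates \eqref{eq:wavefront set besov}--\eqref{eq:wavefront set besov II} are bounds on an oscillatory integral over the cone, so restricting $\Gamma$ to a subcone can destroy cancellations and is not manifestly harmless, and neither is replacing $\phi_1$ by $\phi_1\phi_2$. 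So your diagnosis of where the actual content hides is more careful than the paper's own two-line argument.

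The genuine gap is that you do not close that localization lemma, and the step you defer is the hard one. After inserting $\widehat{\phi u}=(2\pi)^{-d}\,\widehat{\phi_2}\ast\widehat{\phi_1 u}$ and exchanging integrals, the contribution from frequencies inside $\Gamma_1$ has the form $\int \widehat{\phi_1 u}(\eta)\,\bigl(\int_\Gamma \widehat{\phi_2}(\xi-\eta)\,\check{\kappa}(\lambda\xi)e^{ix\cdot\xi}\,d\xi\bigr)d\eta$, and the quantifier ``for all $\kappa\in\mathscr{B}_{\lfloor\alpha\rfloor}$'' in Definition \ref{def:besov wavefront set} cannot absorb the mismatch as you hope: the inner kernel, as a function of $\eta$, is not of the form $e^{ix\cdot\eta}\check{\kappa^\prime}(\lambda\eta)$ for any $\kappa^\prime\in\mathscr{B}_{\lfloor\alpha\rfloor}$, since membership in that class requires $\kappa^\prime$ to be supported in $B(0,1)$ with prescribed vanishing moments, and the cone truncation together with the $O(1)$-scale smearing by $\widehat{\phi_2}$ produces nothing of that rigid type; so one cannot simply re-enter the hypothesis on $u$, and an additional approximation argument (splitting $\check{\kappa}(\lambda\xi)$ around $\xi=\eta$ and controlling remainders uniformly over $\mathscr{B}_{\lfloor\alpha\rfloor}$, $x\in K$, $\lambda\in(0,1]$) is needed but not supplied. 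The economical way to close the gap within the paper's own toolkit is the route you name in your last sentence, and there is no circularity in using it: the proof of Proposition \ref{prop: characterization via pseudodifferential operators} is independent of the present statement. Choose properly supported $A,B\in\Psi^0(\mathbb{R}^d)$, elliptic at $(x_0,\xi_0)$, with $Au,Bv\in B^{\alpha,\mathrm{loc}}_{\infty,\infty}(\mathbb{R}^d)$; pick $C\in\Psi^0(\mathbb{R}^d)$ properly supported, elliptic at $(x_0,\xi_0)$, with $WF^\prime(C)\subset\mathrm{Ell}(A)\cap\mathrm{Ell}(B)$; the parametrix argument from the second half of the proof of Proposition \ref{prop: characterization via pseudodifferential operators} then yields $Cu,Cv\in B^{\alpha,\mathrm{loc}}_{\infty,\infty}(\mathbb{R}^d)$, hence $C(u+v)\in B^{\alpha,\mathrm{loc}}_{\infty,\infty}(\mathbb{R}^d)$ and $(x_0,\xi_0)\notin\mathrm{WF}^\alpha(u+v)$. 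As written, your Fourier-analytic argument is a sound plan whose decisive step remains open.
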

\begin{proof}
	Assume $(x_0,\xi_0) \in \mathrm{WF}^\alpha(u+v)$. Then, for any test function $\phi \in \mathcal{D}(\mathbb{R}^d)$, open conic neighborhood $\Gamma$ of $\xi_0$, there exists a compact set $K\subset \mathbb{R}^d$ such that, for any $N \in\mathbb{N}$, it holds true
	\[
	\bigg \lvert \int_{\Gamma} \widehat{\phi (u+v)}(\xi) \check{\kappa}(\overline{\lambda}\xi) e^{i \overline{x} \cdot \xi} d\xi \bigg \rvert > N {\overline{\lambda}}^\alpha,
	\]
for some $\overline{x} \in K$  and $\overline{\lambda} \in (0,1]$. Applying the triangle inequality, it descends
	\[
	N {\overline{\lambda}}^\alpha < \bigg \lvert \int_{\Gamma} \widehat{\phi u}(\xi) \check{\kappa}(\overline{\lambda}\xi) e^{i \overline{x} \cdot \xi} d\xi \bigg \rvert + \bigg \lvert \int_{\Gamma} \widehat{\phi v}(\xi) \check{\kappa}(\overline{\lambda}\xi) e^{i \overline{x} \cdot \xi} d\xi \bigg \rvert,
	\]
	which entails that $(x_0,\xi_0) \in \mathrm{WF}^\alpha(u)\cup \mathrm{WF}^\alpha(v)$.
\end{proof}

\begin{corollary}
Let $u \in \mathcal{D}^\prime(\mathbb{R}^d)$. If $\alpha_1 \leq \alpha_2$, then
\begin{equation}\label{eq:inclusion besov spaces}
	\mathrm{WF}^{\alpha_1}(u) \subseteq \mathrm{WF}^{\alpha_2}(u).
\end{equation}
\end{corollary}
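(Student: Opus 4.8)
The plan is to prove the inclusion $\mathrm{WF}^{\alpha_1}(u) \subseteq \mathrm{WF}^{\alpha_2}(u)$ by establishing the contrapositive: if $(x_0,\xi_0) \notin \mathrm{WF}^{\alpha_2}(u)$ then $(x_0,\xi_0) \notin \mathrm{WF}^{\alpha_1}(u)$. So I assume there exist $\phi \in \mathcal{D}(\mathbb{R}^d)$ with $\phi(x_0) \neq 0$ and an open conic neighborhood $\Gamma$ of $\xi_0$ such that the estimates \eqref{eq:wavefront set besov} and \eqref{eq:wavefront set besov II} hold at level $\alpha_2$, and I must produce the corresponding estimates at level $\alpha_1$ using \emph{the same} $\phi$ and $\Gamma$. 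The first estimate \eqref{eq:wavefront set besov}, the boundedness by a constant, is identical in both cases and transfers verbatim, so the entire content is in upgrading \eqref{eq:wavefront set besov II} from a bound by $\lambda^{\alpha_2}$ to a bound by $\lambda^{\alpha_1}$.

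Since $\lambda \in (0,1]$ and $\alpha_1 \leq \alpha_2$, we have $\lambda^{\alpha_2} \leq \lambda^{\alpha_1}$, so for a \emph{fixed} kernel $\kappa$ the desired bound follows immediately from $\lvert \int_\Gamma \widehat{\phi u}(\xi)\check{\kappa}(\lambda\xi)e^{ix\cdot\xi}\,d\xi\rvert \lesssim \lambda^{\alpha_2} \leq \lambda^{\alpha_1}$. The genuine subtlety, and the step I expect to be the main obstacle, is that the two estimates are quantified over \emph{different} kernel classes: \eqref{eq:wavefront set besov II} at level $\alpha_2$ is required only for $\kappa \in \mathscr{B}_{\lfloor \alpha_2 \rfloor}$, whereas at level $\alpha_1$ it must hold for all $\kappa \in \mathscr{B}_{\lfloor \alpha_1 \rfloor}$. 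When $\alpha_1 \leq \alpha_2$ we have $\lfloor \alpha_1 \rfloor \leq \lfloor \alpha_2 \rfloor$, and inspecting the vanishing-moment condition $(\partial^\beta \check{\kappa})(0) = 0$ for $\lvert \beta \rvert \leq s$ in Definition \ref{Def: Kernel for Besov}, a larger $s$ imposes \emph{more} vanishing conditions. Hence $\mathscr{B}_{\lfloor \alpha_2 \rfloor} \subseteq \mathscr{B}_{\lfloor \alpha_1 \rfloor}$, so the kernel class at level $\alpha_1$ is the \emph{larger} one, and the level-$\alpha_2$ hypothesis does not directly supply the estimate for every admissible $\kappa$ at level $\alpha_1$.

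To overcome this I would not argue kernel-by-kernel but instead route through the intrinsic characterization. The cleanest path uses the fact that the estimates \eqref{eq:wavefront set besov}--\eqref{eq:wavefront set besov II} restricted to the cone $\Gamma$ are precisely a microlocal version of membership in $B^{\alpha}_{\infty,\infty}$: as in the proof of Proposition \ref{Prop: empty Besov}, the validity of these bounds for one admissible pair $(\kappa,\underline{\kappa})$ is equivalent, via Proposition \ref{Prop: fourier besov} and the remark that different choices of $\kappa,\underline{\kappa}$ yield equivalent norms, to their validity for all admissible pairs. Thus it suffices to verify the level-$\alpha_1$ bounds for a \emph{single} conveniently chosen $\kappa \in \mathscr{B}_{\lfloor \alpha_1 \rfloor}$; since $\mathscr{B}_{\lfloor \alpha_2 \rfloor} \subseteq \mathscr{B}_{\lfloor \alpha_1 \rfloor}$ is nonempty, I pick $\kappa \in \mathscr{B}_{\lfloor \alpha_2 \rfloor}$, for which the hypothesis directly gives $\lambda^{\alpha_2}$, and then the elementary inequality $\lambda^{\alpha_2} \leq \lambda^{\alpha_1}$ closes the argument. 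I would phrase the final write-up by localizing this equivalence to the cone $\Gamma$, noting that the Besov-norm equivalence for different kernels is insensitive to the restriction of the $\xi$-integration to an open conic neighborhood, so that the kernel-independence established in Proposition \ref{Prop: fourier besov} applies verbatim to the microlocalized estimates and yields the inclusion \eqref{eq:inclusion besov spaces}.
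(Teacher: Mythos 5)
Your argument is correct, and its computational core coincides with the paper's proof, which consists of the single observation that for $\lambda\in(0,1]$ and $\alpha_1\leq\alpha_2$ one has $\lambda^{\alpha_2}\leq\lambda^{\alpha_1}$, so that Equation \eqref{eq:wavefront set besov II} at level $\alpha_2$ yields it at level $\alpha_1$ with the same $\phi$ and $\Gamma$, Equation \eqref{eq:wavefront set besov} transferring verbatim. Where you genuinely go beyond the paper is in spotting and closing a quantifier gap that the paper's ``follows immediately'' passes over in silence: since $\lfloor\alpha_1\rfloor\leq\lfloor\alpha_2\rfloor$ imposes fewer vanishing-moment conditions in Definition \ref{Def: Kernel for Besov}, one has $\mathscr{B}_{\lfloor\alpha_2\rfloor}\subseteq\mathscr{B}_{\lfloor\alpha_1\rfloor}$, so the level-$\alpha_2$ hypothesis does not a priori cover every kernel admissible at level $\alpha_1$. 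Your repair -- reduce to a single conveniently chosen kernel via kernel-independence -- is sound and is in fact the device the paper itself uses elsewhere: the estimates over $\Gamma$ are, through the identities in the proof of Proposition \ref{Prop: fourier besov}, exactly the local-means bounds for the tempered distribution $w:=\mathcal{F}^{-1}\big[\mathbb{I}_\Gamma\,\widehat{\phi u}\big]$ (compare Equation \eqref{eq: besov regularity of u} in the proof of Proposition \ref{prop: characterization via pseudodifferential operators}), and the equivalence of the norms in Definition \ref{def:definition Besov local means} under changes of $(\kappa,\underline{\kappa})$ then upgrades validity for one $\kappa\in\mathscr{B}_{\lfloor\alpha_1\rfloor}$ to all of them. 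Two refinements to your write-up: it is cleaner to route the argument explicitly through $w$ rather than asserting that the norm equivalence is ``insensitive to the restriction of the $\xi$-integration to a cone,'' since the latter, as literally stated, is not a result established in the paper; and since the constants in Definition \ref{def:besov wavefront set} may depend on the compact set $K$, what one strictly obtains is membership of $w$ in $B^{\alpha_1,\mathrm{loc}}_{\infty,\infty}(\mathbb{R}^d)$ rather than a global bound -- the same (slightly informal) passage the paper itself makes. Neither point affects correctness; your proof buys rigor at the cost of brevity, while the paper's buys brevity by treating the kernel-class issue as implicit.
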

\begin{proof}
	The inclusion in Equation \eqref{eq:inclusion besov spaces} follows immediately from Definition \ref{def:besov wavefront set}, particularly Equation \eqref{eq:wavefront set besov II}.
\end{proof}

\begin{remark}\label{Rem: Besov WF for smooth functions}
	Observe that, on account of the inclusion $C^\infty(\mathbb{R}^d)\subset B^{\alpha,\mathrm{loc}}_{\infty,\infty}(\mathbb{R}^d)$ for all $\alpha\in\mathbb{R}$, Proposition \ref{Prop: empty Besov} entail that, for every $f\in C^\infty(\mathbb{R}^d)$
	$$\mathrm{WF}^\alpha(f)=\emptyset.\quad\forall\alpha\in\mathbb{R}$$
	In particular, this result entails that, given any $u\in\mathcal{D}^\prime(\mathbb{R}^d)$, if $x\notin\textrm{singsupp}(u)$, then $(x,\xi)\notin\mathrm{WF}^\alpha(u)$ for all $\alpha\in\mathbb{R}$. Here $\mathrm{singsupp}(u)$ refers to the singular support of $u$, see \cite[Def. 2.2.3]{Hor03} for the definition.
\end{remark}

\noindent In the following, we give some explicit examples of Besov wavefront sets. Observe that the results of Remark \ref{Rem: Besov WF for smooth functions} are always implicitly taken into account.

\begin{example}\label{Ex: BWF of delta}
	Let $u = \delta\in\mathcal{D}^\prime(\mathbb{R}^d)$ be the Dirac delta centered at the origin. Recalling that for any $\phi\in\mathcal{D}(\mathbb{R}^d)$ $\phi\delta=\phi(0)\delta$, Equation \eqref{eq:wavefront set besov} translates to 
	$$\left|\int_\Gamma\underline{\check{\kappa}}(\eta)e^{iy\cdot\eta}d\eta\right|\leq \int_\Gamma\left|\underline{\check{\kappa}}(\eta)\right|d\eta\lesssim 1,$$
since $\underline{\check{\kappa}}\in\mathcal{S}(\mathbb{R}^d)$. Here we have neglected $\phi(0)$ since it plays no r\^{o}le. Focusing instead on Equation \eqref{eq:wavefront set besov II}, for any choice of $\phi\in\mathcal{D}(\mathbb{R}^d)$ with $\phi(0)\neq 0$, it descends, neglecting once more $\phi(0)$, that 
$$\left|\int_\Gamma\check{\kappa}(\eta)e^{iy\cdot\eta}d\eta\right|\leq\int_\Gamma\left|\check{\kappa}(\lambda\eta)\right|d\eta\lesssim\lambda^{-d},$$
where the last inequality descends from the change of variable $\eta\mapsto\eta^\prime:=\lambda\eta$. While this estimate entails that $\mathrm{WF}^\alpha(\delta)=\emptyset$ if $\alpha\leq -d$, in order to obtain a sharp estimate
observe that we can set $y=0$ in Equation \eqref{eq:wavefront set besov II} since it lies in $\mathrm{supp}(\phi)$ for any admissible $\phi$, being $\phi(0)\neq 0$. Hence it descends
$$\left|\int_\Gamma\check{\kappa}(\lambda\eta)d\eta\right|=\lambda^{-d}\left|\int_\Gamma\check{\kappa}(\eta^\prime)d\eta^\prime\right|=C_{\check{\kappa}}\lambda^{-d},$$
where $\eta^\prime:=\lambda\eta$ and where we used implicitly both that $\Gamma$ is a cone and that $\check{\kappa}\in\mathcal{S}(\mathbb{R}^d)$. At this stage, comparing with Definition \ref{def:besov wavefront set}, we can conclude that
	\begin{equation}
	\mathrm{WF}^\alpha(\delta)=
	\left\{\begin{array}{ll}
		\emptyset & \alpha \leq -d, \\
		(0,\xi):\xi \in \mathbb{R}^d \setminus \{0\} & \alpha > -d. \nonumber
	\end{array}\right.
\end{equation}
\end{example}

\begin{example}
	Let $u = \partial_j \delta\in\mathcal{D}^\prime(\mathbb{R}^d)$ be a derivative of the Dirac delta centered at the origin, {\it i.e.} $\partial_j=\frac{\partial}{\partial x_j}$, $x_j$ being an Euclidean coordinate on $\mathbb{R}^d$. Following Definition \ref{def:besov wavefront set} and using the identity $\phi\partial_j\delta=\phi(0)\partial_j\delta-(\partial_j\phi)(0)\delta$ for any $\phi\in\mathcal{D}(\mathbb{R}^d)$, Equation \eqref{eq:wavefront set besov} translates to 
	$$\left|(\partial_j\phi)(0)\int_\Gamma\eta_j\underline{\check{\kappa}}(\eta)e^{iy\cdot\eta}d\eta-\phi(0)\int_\Gamma\underline{\check{\kappa}}(\eta)e^{iy\cdot\eta}d\eta\right|\leq\int\limits_\Gamma\left|(\partial_j\phi)(0)\eta_j-\phi(0))\underline{\check{\kappa}}(\eta)\right|d\eta\lesssim 1,$$
where, similarly to Example \ref{Ex: BWF of delta}, we exploited that $\underline{\check{\kappa}}\in\mathcal{S}(\mathbb{R}^d)$. Focusing on Equation \eqref{eq:wavefront set besov II}, we can repeat the same procedure as in Example \ref{Ex: BWF of delta}. For the sake of conciseness we focus directly only on $y=0$ since it lies in $\textrm{supp}(\phi)$, for any $\phi\in\mathcal{D}(\mathbb{R}^d)$ with $\phi(0)\neq 0$. In addition we can consider only the contribution due to $\phi(0)\partial_j\delta$ which yields, omitting $\phi(0)$ for simplicity of the notation,
$$\left|\int\limits_\Gamma \eta_j\check{\kappa}(\lambda\eta)d\eta\right|=\lambda^{-d-1}\left|\int\limits_\Gamma \eta^\prime_j\check{\kappa}(\eta^\prime)d\eta^\prime\right|=\widetilde{C}_{\check{\kappa}}\lambda^{-d-1},$$
where $\eta^\prime:=\lambda\eta$ and where we used implicitly both that $\Gamma$ is a cone and that $\check{k}\in\mathcal{S}(\mathbb{R}^d)$. Adding to this equality the outcome of Example \ref{Ex: BWF of delta}, it descends
	\begin{equation}
	\mathrm{WF}^\alpha(\partial_j\delta)=
	\left\{\begin{array}{ll}
		\emptyset & \alpha \leq -d-1, \\
		(0,\xi):\xi \in \mathbb{R}^d \setminus \{0\} & \alpha > -d-1. \nonumber
	\end{array}\right.
\end{equation}	
\end{example}

\begin{example}\label{Ex: Besov WF of compactly supported distributions}
	Let $u \in \mathcal{E}^\prime(\mathbb{R}^d)$. Observe that there exists $C > 0$ such that
	\begin{equation}
		\lvert \hat{u}(\xi) \rvert \leq C \langle \xi \rangle^M
	\end{equation}
	where $M$ is the order of $u$ and $\langle \xi \rangle :=(1+\lvert \xi \rvert^2)^{\frac{1}{2}}$, see \cite{FJ99}. Fix $\Gamma$ an open conic neighborhood of $\xi \in \mathbb{R}^d \setminus\{0\}$. Given $\kappa$ as per Definition \ref{Def: Kernel for Besov}, $\lambda \in (0,1)$ and $y \in \operatorname{supp}(u)$, it holds
	\[
	\bigg \lvert \int_\Gamma \hat{u}(\eta) e^{iy\cdot \eta} \check{\kappa}(\lambda \eta) d\eta \bigg\rvert \leq \int_\Gamma \lvert \hat{u}(\eta) \rvert \lvert \check{\kappa}(\lambda \eta) \rvert d\eta \leq C \int_\Gamma \langle \eta \rangle^M \lvert \check{\kappa}(\lambda \eta) \lvert d\eta \approx  \lambda^{-M-d} \int_\Gamma  \lvert \eta \rvert^{M} \lvert \check{\kappa}(\eta) \rvert d\eta \lesssim \lambda^{-M-d},
	\]
where, with reference to Equation \eqref{eq:wavefront set besov} and \eqref{eq:wavefront set besov II}, we have implicitly chosen $\phi\in\mathcal{D}(\mathbb{R}^d)$ such that $\phi=1$ on $\textrm{supp}(u)$.	
	As a result, we get $\mathrm{WF}^\alpha(u)=\emptyset$ if $\alpha\leq -d-M$. 
\end{example}

\begin{example}\label{example: sqrt of x}
	Let $u \colon \mathbb{R}^2 \to \mathbb{R}$ such that $u(x_1,x_2)=(x_1^2+x_2^2)^{\frac{1}{4}}$. We recall that $\hat{u}(\xi_1,\xi_2) = (\xi_1^2+\xi_2^2)^{-\frac{5}{4}}$, which should be interpreted as the integral kernel of an element lying in $\mathcal{S}^\prime(\mathbb{R}^2$). Since $\mathrm{singsupp}(u)=\{(0,0)\}$, we consider $(0,0,\xi_1,\xi_2)$ such that $(\xi_1,\xi_2)\neq(0,0)$. Given $\phi \in \mathcal{D}(\mathbb{R}^2)$ with $\phi(0,0) = 1$ and an open  conic neighborhood $\Gamma$ of $(\xi_1,\xi_2)$, we can still use the rationale followed in Example \eqref{Ex: BWF of delta} studying Equation \eqref{eq:wavefront set besov II} with $y=(0,0)$. It reads
	\begin{align}
		&\bigg\lvert \int_{\Gamma} \widehat{\phi u}(\eta_1,\eta_2) \check{\kappa}(\lambda\eta_1,\lambda\eta_2) d\eta_1 d\eta_2 \bigg\rvert = \bigg\lvert \int_{\Gamma} (\eta_1^2+\eta_2^2)^{-\frac{5}{4}}  \check{\kappa}(\lambda\eta_1,\lambda\eta_2)  d\eta_1 d\eta_2 \bigg\rvert=  \nonumber \\
		&\overset{(\lambda \eta_1,\lambda \eta_2) \mapsto (\eta_1,\eta_2)}{=} \int_\Gamma \lambda^{\frac{1}{2}}(\eta_1^2+\eta_2^2)^{-\frac{5}{4}} \lvert \check{\kappa}(\eta_1,\eta_2) \rvert  d\eta_1d\eta_2 = C_{\check{k}} \lambda^{\frac{1}{2}}, \nonumber
	\end{align}
where no singularity at the origin occurs since $\kappa$ is chosen in agreement with Definition \ref{Def: Kernel for Besov}. This entails that
\begin{equation}
\left\{
\begin{array}{ll}
\mathrm{WF}^\alpha(u)=\emptyset & \alpha\leq\frac{1}{2}\\
\mathrm{WF}^\alpha(u)=\{(0,0,\xi_1,\xi_2)\;|\;(\xi_1,\xi_2)\neq(0,0)\} & \alpha>\frac{1}{2}	
\end{array}
\right.
\end{equation}
\end{example}

\subsection{Pseudodifferential Characterization}\label{Sec: PsiDOWF}
The aim of this section is to give a second, albeit equivalent, characterization of the Besov wavefront set of a distribution by means of pseudodifferential operators. This is in spirit very much akin to the one outlined in \cite{Grigis} for the smooth wavefront set and it is especially useful in discussing operations between distributions with a prescribed Besov wavefront set, see Section \ref{Sec: Operations on BWF}. In the following, we shall make use of the notions introduced in Definition \ref{Def: Symbols and PsiDO} and \ref{Def: Properly Supported PsiDO}.
\begin{proposition}\label{prop: characterization via pseudodifferential operators}
	Let $\alpha \in \mathbb{R}$ and $u \in \mathcal{D}^\prime(\mathbb{R}^d)$. Then
	\begin{equation}
		\mathrm{WF}^\alpha(u) = \bigcap_{\substack{A \in \Psi^0(\mathbb{R}^d),\\ Au \in B_{\infty,\infty}^{\alpha,\mathrm{loc}}(\mathbb{R}^d)}} \mathrm{Char}(A),
	\end{equation}
	where the intersection is taken only over properly supported pseudodifferential operators.
\end{proposition}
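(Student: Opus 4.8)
The plan is to prove the identity via the following local reformulation, which packages both inclusions at once:
\begin{equation*}
(x_0,\xi_0)\notin \mathrm{WF}^\alpha(u)\iff \exists\, A\in\Psi^0(\mathbb{R}^d)\text{ properly supported},\ (x_0,\xi_0)\in\mathrm{Ell}(A),\ Au\in B^{\alpha,\mathrm{loc}}_{\infty,\infty}(\mathbb{R}^d).
\end{equation*}
Granting this equivalence, the two inclusions follow by contraposition. If $(x_0,\xi_0)\notin\mathrm{WF}^\alpha(u)$, the forward implication produces an $A$ in the index set of the intersection with $(x_0,\xi_0)\in\mathrm{Ell}(A)=\mathrm{Char}(A)^c$, so $(x_0,\xi_0)\notin\bigcap\mathrm{Char}(A)$; this yields $\bigcap\mathrm{Char}(A)\subseteq\mathrm{WF}^\alpha(u)$. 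Conversely, if some admissible $A$ (i.e.\ with $Au\in B^{\alpha,\mathrm{loc}}_{\infty,\infty}$) satisfies $(x_0,\xi_0)\notin\mathrm{Char}(A)$, the backward implication gives $(x_0,\xi_0)\notin\mathrm{WF}^\alpha(u)$, whence $\mathrm{WF}^\alpha(u)\subseteq\bigcap\mathrm{Char}(A)$.

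For the forward implication I would exploit the Fourier-cone data of Definition~\ref{def:besov wavefront set}. Let $\phi,\Gamma$ be as in that definition and set $A=\mathrm{Op}_L(a)$ with symbol $a(x,\xi)=\psi(x)\chi(\xi)$, where $\chi$ is a $0$-homogeneous (for $|\xi|\ge 1$) conic cutoff with $\mathrm{supp}(\chi)\subset\Gamma$ and $\chi\equiv 1$ on a smaller cone around $\xi_0$, and $\psi\in\mathcal{D}(\mathbb{R}^d)$ with $\psi(x_0)\neq 0$ and $\phi\equiv 1$ on a neighbourhood of $\mathrm{supp}(\psi)$; replacing $A$ by a properly supported representative modulo $\Psi^{-\infty}$ changes neither $\mathrm{Ell}(A)$ nor the Besov regularity of $Au$. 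Writing $Au=\psi\,\chi(D)(\phi u)+\psi\,\chi(D)((1-\phi)u)$, the cross term is smoothing: since $\mathrm{supp}(\psi)$ and $\mathrm{supp}(1-\phi)$ are separated, every term of the symbolic expansion of $\mathrm{Op}_L(\psi\chi)\circ(1-\phi)$ vanishes, so it lies in $\Psi^{-\infty}$ and maps $u$ to an element of $C^\infty\subset B^{\alpha,\mathrm{loc}}_{\infty,\infty}$. For the principal term one has $\widehat{\chi(D)(\phi u)}=\chi\,\widehat{\phi u}$, supported in $\Gamma$, and Proposition~\ref{Prop: fourier besov} together with the estimates \eqref{eq:wavefront set besov}--\eqref{eq:wavefront set besov II} gives $\psi\,\chi(D)(\phi u)\in B^{\alpha}_{\infty,\infty}$; since $\psi(x_0)\chi(\xi_0)\neq 0$, $A$ is elliptic at $(x_0,\xi_0)$.

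For the backward implication I would start from an admissible elliptic $A$ and microlocalize. By Proposition~\ref{prop: parametrix} there is $B\in\Psi^{0}(\mathbb{R}^d)$ with $\mathscr{C}\cap WF^\prime(BA-I)=\emptyset$ on a closed conic neighbourhood $\mathscr{C}$ of $(x_0,\xi_0)$ contained in $\mathrm{Ell}(A)$. Choosing a properly supported cutoff $C=\mathrm{Op}_L(\psi\chi)$ elliptic at $(x_0,\xi_0)$ with $WF^\prime(C)\subset\mathscr{C}$, I would write $Cu=CB(Au)-C(BA-I)u$. The first summand lies in $B^{\alpha,\mathrm{loc}}_{\infty,\infty}$ because $CB\in\Psi^0$ and Theorem~\ref{th: boundedness on besov spaces} applies to $Au\in B^{\alpha,\mathrm{loc}}_{\infty,\infty}$; the second is smooth since $WF^\prime(C(BA-I))\subseteq WF^\prime(C)\cap WF^\prime(BA-I)=\emptyset$ by Proposition~\ref{Prop: WF Properties}, so $C(BA-I)$ is smoothing. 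Hence $\psi\,\chi(D)u\in B^{\alpha,\mathrm{loc}}_{\infty,\infty}$ modulo $C^\infty$, and reading this through Proposition~\ref{Prop: fourier besov} reproduces the cone estimates of Definition~\ref{def:besov wavefront set}, i.e.\ $(x_0,\xi_0)\notin\mathrm{WF}^\alpha(u)$.

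The main obstacle in both implications is the passage between the \emph{sharp} conic integral $\int_\Gamma$ appearing in Definition~\ref{def:besov wavefront set} and the \emph{smooth} conic multiplier $\chi(D)$ used to build the operators: one must show that inserting $\chi$ neither destroys the $\lambda^\alpha$ bound (forward) nor is needed to recover it (backward). The decisive points are that $\mathrm{supp}(\chi)\subset\Gamma$ forces the multiplier to see only the cone, and that $\phi u\in\mathcal{E}^\prime(\mathbb{R}^d)$ makes $\langle\widehat{\phi u}(\xi),e^{ix\cdot\xi}\check{\kappa}(\lambda\xi)\rangle$ compactly supported in $x$, so one may estimate on a fixed compact set and absorb the weight $\chi$; controlling the contribution of the transition region of $\chi$ and of the off-cone tail is the genuinely technical step.
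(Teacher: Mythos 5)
Your overall architecture (local equivalence plus contraposition) and your forward direction coincide with the paper's proof: the paper builds $A=\mathrm{Op}(\phi(x)\psi(\xi/\lvert\xi\rvert)\chi(\xi)\phi(y))$, notes that the estimates of Definition \ref{def:besov wavefront set} give $\mathcal{F}^{-1}[\mathbb{I}_\Gamma\widehat{\phi u}]\in B^{\alpha,\mathrm{loc}}_{\infty,\infty}(\mathbb{R}^d)$ via Proposition \ref{Prop: fourier besov}, and concludes with Theorem \ref{th: boundedness on besov spaces}, exactly as you do with your $\psi(x)\chi(\xi)$ symbol; your disjoint-support argument for the cross term is a legitimate substitute for the paper's built-in two-sided cutoff. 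The first half of your backward direction (parametrix $Q$ with $QB=I-R$, microlocalizing cutoff with operator wavefront set inside $\mathrm{Ell}(B)$, Theorem \ref{th: boundedness on besov spaces} on the main term, smoothing remainder) also matches the paper step for step.

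The genuine gap is the final step of the backward implication, which you compress into ``reading this through Proposition \ref{Prop: fourier besov} reproduces the cone estimates'' and then, in your closing paragraph, candidly defer as ``the genuinely technical step.'' But this step is not a routine verification: knowing that $\psi(D/\lvert D\rvert)\chi(D)(\phi u)\in B^{\alpha,\mathrm{loc}}_{\infty,\infty}$ gives the $\lambda^\alpha$ bound for the integral of $\psi\chi\,\widehat{\phi u}\,\check{\kappa}(\lambda\xi)e^{ix\cdot\xi}$, i.e.\ with the smooth weight $\psi\chi$ inserted, whereas Definition \ref{def:besov wavefront set} demands the \emph{unweighted} integral over a sharp cone. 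The weight cannot simply be dropped: it degenerates at the boundary of its support and is not invertible by a bounded multiplier. The paper fills this hole with a specific mechanism you do not supply: by Remark \ref{remark: equivalent def elliptic set} one chooses $p\in S^0$ with $r:=1-\psi\chi p\in S^{-1}$ on the elliptic set, writes $\widehat{\phi u}=(\psi\chi p+r)\widehat{\phi u}$ there, bounds the first term by $\lambda^\alpha$ via Theorem \ref{th: boundedness on besov spaces}, and then \emph{iterates} on the remainder: each application of $r(D)\in\Psi^{-1}$ gains one power of $\lambda$ (mapping $B^{\alpha'}_{\infty,\infty}\to B^{\alpha'+1}_{\infty,\infty}$), and the iteration terminates after finitely many steps because $\phi u\in\mathcal{E}^\prime(\mathbb{R}^d)$ has finite order, so $\lvert\widehat{\phi u}(\xi)\rvert\lesssim\langle\xi\rangle^M$ and the $r^N$-term is $O(\lambda^{-M-d+N})\lesssim\lambda^\alpha$ for $N$ large (cf.\ Example \ref{Ex: Besov WF of compactly supported distributions}). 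A secondary, fixable omission: the paper deliberately uses the \emph{right} quantization $\mathrm{Op}_R(\psi(\xi/\lvert\xi\rvert)\chi(\xi)\phi(y))$, so that $Au=\mathcal{F}^{-1}[\psi\chi\,\widehat{\phi u}]$ is \emph{exactly} a Fourier multiplier acting on the compactly supported $\phi u$; your left-quantized $\mathrm{Op}_L(\psi(x)\chi(\xi))$ controls $\psi\,\chi(D)u$ instead, and relating this to $\chi(D)(\phi u)$ costs an additional commutator/pseudolocality argument that your sketch does not address.
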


\begin{proof}
	Suppose that $(x_0,\xi_0) \not \in \mathrm{WF}^\alpha(u)$. By Definition \ref{def:besov wavefront set}, there exist $\phi \in \mathcal{D}(\mathbb{R}^d)$ with $\phi(x_0)\neq 0$ and $\Gamma$ , a conic neighbourhood of $\xi_0$, such that for any compact set $K \subset \mathbb{R}^d$
	\[
	\bigg\lvert \int_{\Gamma} \widehat{\phi u}(\xi) \check{\kappa}(\lambda \xi) e^{ix\cdot \xi} d\xi \bigg\rvert \lesssim \lambda^\alpha \quad \forall x \in K, \forall \lambda \in (0,1],
	\]                       
	where $\kappa \in \mathscr{B}_{\alpha}$. Calling $\mathbb{I}_\Gamma(\xi)$ the characteristic function on $\Gamma$, it descends that
	\begin{equation}\label{eq: besov regularity of u}
		\mathcal{F}^{-1}\bigg[ \mathbb{I}_\Gamma(\xi) \widehat{\phi u} \bigg] \in B^{\alpha, \mathrm{loc}}_{\infty,\infty}(\mathbb{R}^d).
	\end{equation}
	 Set $\chi \in C^\infty(\mathbb{R}^n)$ to be such that $\chi(\xi)=0$ if $\lvert \xi \rvert \leq a$ and $\chi(\xi) = 1$ if $\lvert \xi \rvert \geq 2a$ where $a$ is a non vanishing constant chosen so that $\chi(\xi_0)\neq 0$. In addition choose $\psi \in C^{\infty}(\mathbb{S}^{n-1})$ such that $\mathrm{supp}(\psi) \subset B_{\varepsilon} (\xi_0/\lvert \xi_0 \rvert)\subset\Gamma$, $\varepsilon>0$ and $\psi(\xi_0/\lvert\xi_0\rvert)\neq 0$. Consequently we can introduce $A := \mathrm{Op}(a)\in\Psi^0(\mathbb{R}^d)$, where
	 \begin{equation}
	 	a(x,y,\xi) = \phi(x) \psi\bigg(\frac{\xi}{\lvert \xi \rvert}\bigg) \chi(\xi) \phi(y) \in S^0(\mathbb{R}^d \times \mathbb{R}^d;\mathbb{R}^d).
	 \end{equation}
Observe that, following standard arguments, $A$ is by construction properly supported and elliptic at $(x_0,\xi_0)$. To conclude it suffices to notice that, combining Equation \eqref{eq: besov regularity of u} and Theorem \ref{th: boundedness on besov spaces}, we can conclude that $Au\in B_{\infty,\infty}^{\alpha,\mathrm{loc}}(\mathbb{R}^d)$.

Conversely, let $(x_0,\xi_0) \not \in \bigcap_{\substack{A \in \Psi^0(\mathbb{R}^d)\\ Au \in B_{\infty,\infty}^{\alpha,\mathrm{loc}}(\mathbb{R}^d)}} \mathrm{Char}(A)$. 
Hence, taking into account Definition \ref{def: elliptic set}, there exists $B \in \Psi^0$, elliptic at $(x_0,\xi_0)$, such that $Bu \in B_{\infty,\infty}^{\alpha,\mathrm{loc}}(\mathbb{R}^d)$. 
Consider once more $\phi$, $\psi$ and $\chi$ as in the previous part of the proof, so that
\[
WF^\prime(A) \subset \mathrm{Ell}(B)
\]
where $A:=Op_R(\psi(\xi/\lvert \xi \rvert) \chi(\xi) \phi(y))$ and where $WF^\prime$ is as per Definition \ref{Def: Operator WF}. 
We claim that $Au \in B_{\infty,\infty}^{\alpha,\mathrm{loc}}(\mathbb{R}^d)$. 
In view of Proposition \ref{prop: parametrix}, there exists a microlocal parametrix $Q \in \Psi^0(\mathbb{R}^d)$ of $B$ such that $QB = I-R$ with $R \in \Psi^{-1}(\mathbb{R}^d)$ and $WF^\prime(R) \cap WF^\prime(A) = \emptyset$. Thus,
\[
Au = A(QB+R) u = (AQ)(Bu) + ARu,
\]
where $ARu \in C^\infty(\mathbb{R}^d)$. 
Given $\rho \in \mathcal{D}(\mathbb{R}^d)$ such that $\rho = 1$ on $\mathrm{supp}(\phi)$, it descends
\[
(AQ)(Bu) = (AQ)(\rho Bu) + (AQ)((1-\rho)Bu).
\]
Since $1-\rho = 0$ on $\mathrm{supp}(\phi)$, then $(AQ)((1-\rho)Bu) = 0$. 
At the same time $(AQ)(\rho Bu) \in B^\alpha _{\infty,\infty}(\mathbb{R}^d)$ on account of Theorem \ref{th: boundedness on besov spaces}. 
This entails that $Au \in B^{\alpha,\mathrm{loc}}_{\infty,\infty}(\mathbb{R}^d)$. Hence, given $\kappa \in \mathscr{B}_\alpha$, see Definition \ref{Def: Kernel for Besov}, it holds 
\begin{equation}\label{eq: besov regularity psido}
	\bigg \lvert \int_{\mathrm{Ell}\big( \psi(D/\lvert D\rvert) \chi(D)\big)} \psi\bigg(\frac{\xi}{\lvert \xi \rvert} \bigg ) \chi(\xi) \widehat{\phi u}(\xi) \check{\kappa}(\lambda \xi) e^{i x \cdot \xi}d\xi\bigg \rvert \lesssim \lambda^\alpha, \quad \forall \lambda \in (0,1], \quad \forall x \in K.
\end{equation}
On account of Remark \ref{remark: equivalent def elliptic set}, there exists a symbol $p \in S^0(\mathbb{R}^n;\mathbb{R}^n)$ such that
\[
r(\xi) : = 1-\psi\bigg(\frac{\xi}{\lvert \xi \rvert} \bigg ) \chi(\xi) p(\xi)  \in S^{-1}
\]
for any $\xi \in \mathrm{Ell}\big( \psi(D/\lvert D\rvert) \chi(D)\big)$. It descends
\begin{gather}
	\bigg \lvert \int_{\mathrm{Ell}\big( \psi(D/\lvert D\rvert) \chi(D)\big)} \widehat{\phi u}(\xi) \check{\kappa}(\lambda \xi) e^{i x \cdot \xi}d\xi\bigg \rvert =\\
	= \bigg \lvert \int_{\mathrm{Ell}\big( \psi(D/\lvert D\rvert) \chi(D)\big)} \bigg (\psi\bigg(\frac{\xi}{\lvert \xi \rvert} \bigg ) \chi(\xi) p(\xi) + r(\xi) \bigg) \widehat{\phi u}(\xi) \check{\kappa}(\lambda \xi) e^{i x \cdot \xi}d\xi\bigg \rvert \nonumber \\
	\leq \bigg \lvert \int_{\mathrm{Ell}\big( \psi(D/\lvert D\rvert) \chi(D)\big)} \psi\bigg(\frac{\xi}{\lvert \xi \rvert} \bigg ) \chi(\xi) p(\xi) \widehat{\phi u}(\xi) \check{\kappa}(\lambda \xi) e^{i x \cdot \xi}d\xi\bigg \rvert \nonumber + \bigg \lvert \int_{\mathrm{Ell}\big( \psi(D/\lvert D\rvert) \chi(D)\big)} r(\xi) \widehat{\phi u}(\xi) \check{\kappa}(\lambda \xi) e^{i x \cdot \xi}d\xi\bigg \rvert  \nonumber \\
	= \underbrace{\bigg\lvert \bigg\langle p(D)\psi\bigg(\frac{D}{\lvert D \rvert}\bigg) \chi(D)(\phi u), \kappa^\lambda_x \bigg\rangle  \bigg\rvert}_{\lvert A \rvert} + \bigg \lvert \int_{\mathrm{Ell}\big( \psi(D/\lvert D\rvert) \chi(D)\big)} r(\xi) \widehat{\phi u}(\xi) \check{\kappa}(\lambda \xi) e^{i x \cdot \xi}d\xi\bigg \rvert, \nonumber 
\end{gather}
for any $x \in K$ and $\lambda \in (0,1]$. 
On the one hand, as a result of Theorem \ref{th: boundedness on besov spaces} and Equation \eqref{eq: besov regularity psido}, it holds that
\[
\lvert A \rvert \lesssim \lambda^\alpha.
\]
On the other hand, 
\begin{align}
\lvert B \rvert \leq \bigg\lvert \bigg \langle r(D)p(D)\psi\bigg(\frac{D}{\lvert D \rvert}\bigg) \chi(D)(\phi u), \kappa^\lambda_x \bigg \rangle \bigg\rvert &+ \bigg \lvert \int_{\mathrm{Ell}\big( \psi(D/\lvert D\rvert) \chi(D)\big)} r^2(\xi) \widehat{\phi u}(\xi) \check{\kappa}(\lambda \xi) e^{i x \cdot \xi}d\xi\bigg \rvert \lesssim  \nonumber \\
&\lambda^{\alpha + 1} + \bigg \lvert \int_{\mathrm{Ell}\big( \psi(D/\lvert D\rvert) \chi(D)\big)} r^2(\xi) \widehat{\phi u}(\xi) \check{\kappa}(\lambda \xi) e^{i x \cdot \xi}d\xi\bigg \rvert,
\end{align}
where we applied once more Theorem \ref{th: boundedness on besov spaces} with $r(D)\in \Psi^{-1}(\mathbb{R}^d)$ and $p(D)\psi\bigg(\frac{D}{\lvert D \rvert}\bigg) \chi(D)(\phi u) \in B_{\infty,\infty}^{\alpha,\mathrm{loc}}(\mathbb{R}^d)$. 
This concludes the proof.
\end{proof}

\begin{remark}\label{Rem: PsiDO and smooth WF}
	The content of Proposition \ref{prop: characterization via pseudodifferential operators} is an adaptation to the case in hand of the characterization of the smooth wavefront set of a distribution in terms of pseudodifferential operators, see \cite[Cor. 6.18]{Hintz}. For later convenience and to fix the notation, we recall it. Let $v\in\mathcal{D}^\prime(\mathbb{R}^d)$. It holds 
	$$WF(v)=\bigcap_{\substack{A \in \Psi^0(\mathbb{R}^d)\\ Av \in C^\infty(\mathbb{R}^d)}}\mathrm{Char}(A),$$
	where $\mathrm{Char}(A)$ is the characteristic set of $A$ introduced in Definition \ref{def: elliptic set}.
\end{remark}

\noindent We prove a proposition aimed at stating another useful characterization of the Besov wavefront set of a distribution.

\begin{proposition}\label{prop:property besov wavefront}
	Let $u \in \mathcal{D}^\prime(\mathbb{R}^d)$. It holds that
	\begin{equation}
		(x,\xi) \in \mathrm{WF}^\alpha(u) \iff (x,\xi) \in WF(u-v)\quad \forall v \in B_{\infty,\infty}^{\alpha,\mathrm{loc}}(\mathbb{R}^d),
	\end{equation}
where $WF$ stands for the (smooth) wavefront set.
\end{proposition}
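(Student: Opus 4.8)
The plan is to prove the equivalence in the form $(x,\xi)\notin\mathrm{WF}^\alpha(u)\iff \exists\,v\in B^{\alpha,\mathrm{loc}}_{\infty,\infty}(\mathbb{R}^d)$ with $(x,\xi)\notin WF(u-v)$, exploiting the pseudodifferential characterizations of both wavefront sets, namely Proposition \ref{prop: characterization via pseudodifferential operators} and the recollection in Remark \ref{Rem: PsiDO and smooth WF}. The natural bridge between the two is the microlocal parametrix (Proposition \ref{prop: parametrix}) together with the boundedness of $\Psi$DOs on Besov spaces (Theorem \ref{th: boundedness on besov spaces}).

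\medskip
\noindent\textbf{Forward implication.} First I would assume $(x,\xi)\notin\mathrm{WF}^\alpha(u)$. By Proposition \ref{prop: characterization via pseudodifferential operators} there exists a properly supported $A\in\Psi^0(\mathbb{R}^d)$ elliptic at $(x,\xi)$ with $Au\in B^{\alpha,\mathrm{loc}}_{\infty,\infty}(\mathbb{R}^d)$. Using Proposition \ref{prop: parametrix} pick a microlocal parametrix $Q$ for $A$ on a closed conic neighbourhood $\mathscr{C}\subset\mathrm{Ell}(A)$ of $(x,\xi)$, so that $QA=I-R$ with $\mathscr{C}\cap WF^\prime(R)=\emptyset$. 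The candidate regularizing distribution is $v:=QAu=u-Ru$. Since $Q\in\Psi^0$ maps $B^{\alpha,\mathrm{loc}}_{\infty,\infty}$ to itself by Theorem \ref{th: boundedness on besov spaces}, we get $v\in B^{\alpha,\mathrm{loc}}_{\infty,\infty}(\mathbb{R}^d)$, at least after localizing with a cutoff $\rho\equiv 1$ near $x$ (one must be slightly careful that $Q$ is properly supported so $QAu$ is a well-defined distribution; this is where proper support enters). Then $u-v=Ru$, and because $(x,\xi)\notin WF^\prime(R)$ one concludes $(x,\xi)\notin WF(Ru)=WF(u-v)$ via the standard pseudolocal/microlocal mapping property of $\Psi$DOs.

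\medskip
\noindent\textbf{Reverse implication.} Conversely, suppose there is $v\in B^{\alpha,\mathrm{loc}}_{\infty,\infty}(\mathbb{R}^d)$ with $(x,\xi)\notin WF(u-v)$. By Remark \ref{Rem: PsiDO and smooth WF} there is a properly supported $A\in\Psi^0$ elliptic at $(x,\xi)$ with $A(u-v)\in C^\infty(\mathbb{R}^d)\subset B^{\alpha,\mathrm{loc}}_{\infty,\infty}(\mathbb{R}^d)$. Then $Au=A(u-v)+Av$, and $Av\in B^{\alpha,\mathrm{loc}}_{\infty,\infty}$ again by Theorem \ref{th: boundedness on besov spaces}, whence $Au\in B^{\alpha,\mathrm{loc}}_{\infty,\infty}(\mathbb{R}^d)$. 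Since $A$ is properly supported, elliptic at $(x,\xi)$, with $Au$ in the Besov space, Proposition \ref{prop: characterization via pseudodifferential operators} immediately yields $(x,\xi)\notin\mathrm{WF}^\alpha(u)$.

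\medskip
\noindent The two directions are largely symmetric once the $\Psi$DO dictionary is in place, so the main obstacle is not conceptual but bookkeeping: ensuring that all the operators invoked can be chosen properly supported (so that compositions and the action on $\mathcal{D}^\prime$ are well defined), that the smooth terms $C^\infty$ are correctly subsumed into $B^{\alpha,\mathrm{loc}}_{\infty,\infty}$ via Remark \ref{Rem: Besov WF for smooth functions}, and that the localization cutoffs $\rho$ do not spoil the microlocal statement near $(x,\xi)$. The cleanest exposition likely contraposes the reverse direction so that both halves read off directly from the two $\Psi$DO characterizations, reducing the proof to an application of parametrix plus Besov boundedness.
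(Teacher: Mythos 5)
Your proof is correct, and one half of it is exactly the paper's argument: your reverse implication (from $(x,\xi)\notin WF(u-v)$ for some $v$ to $(x,\xi)\notin\mathrm{WF}^\alpha(u)$) is the contrapositive of the paper's forward direction, which likewise takes $A\in\Psi^0(\mathbb{R}^d)$ with $A(u-v)\in C^\infty(\mathbb{R}^d)$, writes $Au=A(u-v)+Av$, invokes Theorem \ref{th: boundedness on besov spaces} to get $Au\in B^{\alpha,\mathrm{loc}}_{\infty,\infty}(\mathbb{R}^d)$, and concludes via Proposition \ref{prop: characterization via pseudodifferential operators} and Remark \ref{Rem: PsiDO and smooth WF}. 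Where you genuinely diverge is in producing the regularizing $v$ when $(x,\xi)\notin\mathrm{WF}^\alpha(u)$: the paper constructs $v$ by a sharp conic Fourier cutoff, $\hat{v}=\mathbb{I}_\Gamma\,\widehat{\phi u}$, whose membership in $B^{\alpha,\mathrm{loc}}_{\infty,\infty}(\mathbb{R}^d)$ is precisely the content of the defining estimates \eqref{eq:wavefront set besov}--\eqref{eq:wavefront set besov II} via Proposition \ref{Prop: fourier besov} (compare Equation \eqref{eq: besov regularity of u}), and then removes the localization through the decomposition $u-\chi v=(1-\chi\phi)u+\chi\theta$ with $\hat{\theta}$ vanishing on $\Gamma$; you instead set $v:=QAu=u-Ru$ with $Q$ a microlocal parametrix as in Proposition \ref{prop: parametrix}, so that $u-v=Ru$ and $(x,\xi)\notin WF^\prime(R)$ yields the conclusion by the microlocality property $WF(Ru)\subseteq WF^\prime(R)\cap WF(u)$ -- the statement the paper itself only recalls later, in Equation \eqref{eq: pseudolocality smooth wavefront set}, citing an external reference, so there is no circularity. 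Your route buys uniformity: both implications read off the same $\Psi$DO dictionary, the sharp (non-symbolic) frequency cutoff and the ensuing cutoff algebra are avoided, and your localization step ($\phi Q(Au)=\phi Q(\rho Au)$ by proper support, then Theorem \ref{th: boundedness on besov spaces}) mirrors verbatim a manipulation the paper already performs inside the proof of Proposition \ref{prop: characterization via pseudodifferential operators}. The paper's route buys elementarity: it works directly from Definition \ref{def:besov wavefront set} without needing a properly supported parametrix or the microlocal mapping property, which is why it can appear at this earlier point of the exposition with fewer imported facts. Do make explicit, as you flag, that $Q$ is chosen properly supported (so that $QAu$ and $R=I-QA$ act on all of $\mathcal{D}^\prime(\mathbb{R}^d)$ and microlocality applies); with that bookkeeping spelled out, the argument is complete.
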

\begin{proof}
	Suppose $(x,\xi) \in\mathrm{WF}^\alpha(u)$. On account of Remark \ref{Rem: PsiDO and smooth WF}, given $v\in B_{\infty,\infty}^{\alpha,\mathrm{loc}}(\mathbb{R}^d)$ we consider $A \in \Psi^0(\mathbb{R}^d)$ such that $A(u-v) \in C^\infty(\mathbb{R}^d)$. This entails that $Au \in B_{\infty,\infty}^{\alpha,\mathrm{loc}}(\mathbb{R}^d)$. Yet, since $(x,\xi) \in \mathrm{WF}^\alpha(u)$, Proposition \ref{prop: characterization via pseudodifferential operators} entails that $(x,\xi)\in\mathrm{Char}(A)$.\\
	Conversely, let $(x,\xi) \not \in \mathrm{WF}^\alpha(u)$. By Definition \ref{def:besov wavefront set}, there exist $\phi \in \mathcal{D}(\mathbb{R}^d)$ normalized so that $\phi(x)=1$ and an open conic neighborhood $\Gamma$ of $\xi$ such that Equation \eqref{eq:wavefront set besov} is satisfied. Let $v \in B_{\infty,\infty}^{\alpha,\mathrm{loc}}(\mathbb{R}^d)$ be such that
	\begin{equation}
		\hat{v}(\eta) =
		\begin{cases}
			\widehat{\phi u}(\eta) \quad \text{if } \eta \in \Gamma,\\
			0 \quad \text{otherwise}
		\end{cases}
	\end{equation}
	Then $\hat{\theta} = \widehat{\phi u} - \hat{v}$ vanishes on $\Gamma$ and therefore $(x,\xi) \not \in WF(\theta)$. Consider $\chi \in \mathcal{D}(\mathbb{R}^d)$ such that $\chi \phi = 1$ in a neighbourhood of $x\in\mathbb{R}^d$. Then $\chi v \in B_{\infty,\infty}^\alpha(\mathbb{R}^d)$ and $(x,\xi) \not \in WF(\chi \theta)$. After observing that $u-\chi v = (1-\chi\phi) u + \chi \theta$, we conclude $(x,\xi) \not \in WF(u-\chi v)$ exploiting that $(1-\chi\phi)u$ vanishes in a neighbourhood of $x$. Since $\chi=1$ at $x$, we can conclude that $(x,\xi)\notin WF(u-v)$.
\end{proof}

\noindent We can now establish a relation between the Besov wavefront set and the smooth counterpart, see Remark \ref{Rem: PsiDO and smooth WF}. The second part of the proof of the following corollary is inspired by a similar one, valid in the context of the Sobolev wavefront set \cite[Prop. 6.32]{Hintz}.

\begin{corollary}
	Let $u\in\mathcal{D}^\prime(\mathbb{R}^d)$. It holds that
	\begin{equation}\label{Eq: BWF vs WF}
		WF(u)=\overline{\bigcup_{\alpha \in \mathbb{R}}\mathrm{WF}^\alpha(u)}.
	\end{equation}
\end{corollary}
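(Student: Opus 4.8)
The plan is to establish the two inclusions separately, leaning on the two pseudodifferential characterizations already at our disposal: Proposition~\ref{prop: characterization via pseudodifferential operators} for $\mathrm{WF}^\alpha(u)$ and the one recalled in Remark~\ref{Rem: PsiDO and smooth WF} for the smooth wavefront set $WF(u)$. For the inclusion $\overline{\bigcup_{\alpha}\mathrm{WF}^\alpha(u)}\subseteq WF(u)$, I would first observe that $\mathrm{WF}^\alpha(u)\subseteq WF(u)$ for every $\alpha\in\mathbb{R}$. This is immediate from Proposition~\ref{prop:property besov wavefront} taking $v=0\in B^{\alpha,\mathrm{loc}}_{\infty,\infty}(\mathbb{R}^d)$, or equivalently from the two $\Psi$DO characterizations together with $C^\infty(\mathbb{R}^d)\subset B^{\alpha,\mathrm{loc}}_{\infty,\infty}(\mathbb{R}^d)$, which makes the operator family defining $WF(u)$ a subfamily of the one defining $\mathrm{WF}^\alpha(u)$. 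Taking the union over $\alpha$ and using that $WF(u)$ is closed then yields the claim after passing to the closure.

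For the reverse inclusion I would argue by contraposition. Assume $(x_0,\xi_0)\notin\overline{\bigcup_\alpha\mathrm{WF}^\alpha(u)}$. Since this set is closed and conic, there is an open conic neighbourhood $V$ of $(x_0,\xi_0)$ with $V\cap\mathrm{WF}^\alpha(u)=\emptyset$ for \emph{every} $\alpha$ simultaneously; it is precisely here that the closure in the statement is needed, because the family $\{\mathrm{WF}^\alpha(u)\}_\alpha$ increases with $\alpha$ and the $\alpha$-dependent neighbourhoods could otherwise shrink down to the single point. I would then fix one properly supported $A\in\Psi^0(\mathbb{R}^d)$ that is elliptic at $(x_0,\xi_0)$, has compact localization in the base variable, and whose operator wavefront set (Definition~\ref{Def: Operator WF}) satisfies $WF^\prime(A)\subset V$, so that $WF^\prime(A)\cap\mathrm{WF}^\alpha(u)=\emptyset$ for all $\alpha$.

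The core step is then a microlocal elliptic regularity statement: if $WF^\prime(A)\cap\mathrm{WF}^\alpha(u)=\emptyset$, then $Au\in B^{\alpha,\mathrm{loc}}_{\infty,\infty}(\mathbb{R}^d)$. I would prove it by the very parametrix technique already used in the converse part of Proposition~\ref{prop: characterization via pseudodifferential operators}: cover the compact set $WF^\prime(A)\cap\{|\xi|=1\}$ by finitely many elliptic sets $\mathrm{Ell}(B_i)$, with properly supported $B_i\in\Psi^0(\mathbb{R}^d)$ satisfying $B_i u\in B^{\alpha,\mathrm{loc}}_{\infty,\infty}(\mathbb{R}^d)$ (each existing by Proposition~\ref{prop: characterization via pseudodifferential operators}); set $B:=\sum_i B_i^\ast B_i$, which is elliptic on $WF^\prime(A)$ and obeys $Bu\in B^{\alpha,\mathrm{loc}}_{\infty,\infty}(\mathbb{R}^d)$ by Theorem~\ref{th: boundedness on besov spaces}; and build a microlocal parametrix $Q$ with $QB=I-R$ and $WF^\prime(R)\cap WF^\prime(A)=\emptyset$ via Proposition~\ref{prop: parametrix}. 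Writing $Au=(AQ)(Bu)+ARu$, the first term lies in $B^{\alpha,\mathrm{loc}}_{\infty,\infty}(\mathbb{R}^d)$ by Theorem~\ref{th: boundedness on besov spaces}, while $ARu\in C^\infty(\mathbb{R}^d)$ since $AR\in\Psi^{-\infty}(\mathbb{R}^d)$. Applying this for every $\alpha$ gives $Au\in\bigcap_{\alpha}B^{\alpha,\mathrm{loc}}_{\infty,\infty}(\mathbb{R}^d)=C^\infty(\mathbb{R}^d)$, whence $(x_0,\xi_0)\notin WF(u)$ by Remark~\ref{Rem: PsiDO and smooth WF}, as $A$ is elliptic, hence non-characteristic, at $(x_0,\xi_0)$.

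I expect the main obstacle to be twofold. First, making the microlocal regularity lemma fully rigorous, in particular the finite cover and the bookkeeping ensuring all compositions stay properly supported of the correct order. Second, the identification $\bigcap_\alpha B^{\alpha,\mathrm{loc}}_{\infty,\infty}(\mathbb{R}^d)=C^\infty(\mathbb{R}^d)$, which I would justify through the Littlewood--Paley characterization of Definition~\ref{def:definition Besov Fourier approach}: membership in $B^\alpha_{\infty,\infty}$ for all $\alpha$ forces $\|\psi_j(D)u\|_{L^\infty(\mathbb{R}^d)}$ to decay faster than any power of $2^{-j}$, which upon localization is equivalent to smoothness.
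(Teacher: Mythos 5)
Your proposal is correct and follows essentially the same route as the paper: the inclusion $\overline{\bigcup_\alpha\mathrm{WF}^\alpha(u)}\subseteq WF(u)$ via Proposition \ref{prop:property besov wavefront} with a smooth choice of $v$, and the converse by contraposition, selecting a single properly supported, elliptic $A\in\Psi^0(\mathbb{R}^d)$ with $WF^\prime(A)$ inside a conic neighbourhood avoiding every $\mathrm{WF}^\alpha(u)$, so that $Au\in\bigcap_\alpha B^{\alpha,\mathrm{loc}}_{\infty,\infty}(\mathbb{R}^d)=C^\infty(\mathbb{R}^d)$ and Remark \ref{Rem: PsiDO and smooth WF} applies. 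The only difference is that you explicitly justify, via the finite-cover-plus-parametrix argument already used in Proposition \ref{prop: characterization via pseudodifferential operators}, the step $WF^\prime(A)\cap\mathrm{WF}^\alpha(u)=\emptyset\Rightarrow Au\in B^{\alpha,\mathrm{loc}}_{\infty,\infty}(\mathbb{R}^d)$, which the paper asserts without detail.
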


\begin{proof}
	Assume $(x,\xi)\in\mathrm{WF}^\alpha(u)$ for any $\alpha\in\mathbb{R}$. Using Proposition \ref{prop:property besov wavefront}, we can choose $v\in C^\infty_0(\mathbb{R}^d)\subset B^{\alpha,\mathrm{loc}}_{\infty,\infty}(\mathbb{R}^d)$ concluding that $(x,\xi)\in WF(u-v)=WF(u)$. Hence 
	$\bigcup_{\alpha \in \mathbb{R}}\mathrm{WF}^\alpha(u)\subseteq WF(u)$. Taking the closure and recalling that $WF(u)$ is per construction a closed set, it descends $\overline{\bigcup_{\alpha \in \mathbb{R}}\mathrm{WF}^\alpha(u)}\subseteq WF(u)$. 
	
	To prove the other inclusion, assume $(x,\xi)\notin WF^\alpha(u)$ for all $\alpha\in\mathbb{R}$. Hence there must exist a conic, open set $\Gamma\subseteq\mathbb{R}^d\times\mathbb{R}^d\setminus\{0\}$ such that $(x,\xi)\in\Gamma$ and $\Gamma\cap\mathrm{WF}^\alpha(u)=\emptyset$ for all $\alpha\in\mathbb{R}$. We can thus choose $A\in\Psi^0(\mathbb{R}^d)$ to be properly supported, elliptic at $(x,\xi)$ and such that $WF^\prime(A)\subset\Gamma$ and $Au\in B^{\alpha,\mathrm{loc}}_{\infty,\infty}(\mathbb{R}^d)$ for all $\alpha\in\mathbb{R}$. This entails that $Au\in C^\infty(\mathbb{R}^d)$. It descends that, since $(x,\xi)\notin\mathrm{Char}(A)$, then $(x,\xi)\notin WF(u)$.
\end{proof}

\section{Structural Properties}\label{Sec: Operations on BWF}

In this section we discuss the main structural properties of distributions with a prescribed Besov wavefront set as per Definition \ref{def:besov wavefront set} and Proposition \ref{prop: characterization via pseudodifferential operators}, including notable operations.

\paragraph{Transformation Properties under Pullback --} We start by investigating the interplay between Definition \ref{def:besov wavefront set} and the pull-back of a distribution. In the following we enjoy the analysis outlined in Section \ref{Sec: Localization of a PsiDO}. 

\begin{remark}\label{Rem: Besov WF on open subset}
	In Definition \ref{def:besov wavefront set} as well as in Proposition \ref{prop: characterization via pseudodifferential operators} we have always assumed implicitly that the underlying distribution is globally defined, {\it i.e.} $u\in\mathcal{D}^\prime(\mathbb{R}^d)$. Yet, mutatis mutandis, the whole construction and the results obtained so far can be slavishly adapted to distributions $v\in\mathcal{D}^\prime(\Omega)$, $\Omega\subseteq\mathbb{R}^d$.
\end{remark}

\begin{theorem}[Pull-back - I]\label{th: pullback theorem 2}
	Let $\Omega\subseteq \mathbb{R}^d$, $\Omega^\prime\subseteq \mathbb{R}^m$ be open sets and let $f\in C^\infty(\Omega;\Omega^\prime)$ be an embedding. Moreover let
	\begin{equation}\label{Eq: normal directions}
	N_f:=\{(f(x),\xi) \in \Omega^\prime \times \mathbb{R}^m: {}^t df(x) \xi = 0\},	
	\end{equation}
	be the set of normals of $f$. For any $u \in \mathcal{D}^\prime(\Omega^\prime)$ such that there exists $\alpha > 0$ so that
	\begin{equation}\label{eq:condition pull back 2}
		N_f \cap \mathrm{WF}^\alpha(u) = \emptyset,
	\end{equation}
	there exists $f^\ast u \in \mathcal{D}^\prime(\Omega)$. In addition 
	\begin{equation}\label{eq: inclusion pull back 2}
		\mathrm{WF}^\alpha(f^\ast u) \subseteq f^\ast \mathrm{WF}^\alpha(u),
	\end{equation}
	for every $u \in \mathcal{D}^\prime(\Omega^\prime)$ abiding to Equation \eqref{eq:condition pull back 2}, where 
	\begin{equation}\label{eq: def pull back by diffeo}
		f^\ast \mathrm{WF}^\alpha(u) := \{(x,{}^t df(x)\eta):(f(x),\eta) \in \mathrm{WF}^\alpha(u)\}.
	\end{equation}
\end{theorem}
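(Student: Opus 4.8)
The plan is to reduce the statement to the classical pull-back theorem for the (smooth) wavefront set, \cite[Thm. 8.2.4]{Hor94}, by splitting $u$ microlocally into a part which is regular in the Besov sense — hence, since $\alpha>0$, continuous and classically composable with $f$ — and a complementary part whose smooth wavefront set avoids the normals $N_f$. The bridge between the two pictures is Proposition \ref{prop:property besov wavefront}, which lets us detect membership in $\mathrm{WF}^\alpha$ through the smooth wavefront set of $f^\ast u$ minus a conveniently chosen element of $B^{\alpha,\mathrm{loc}}_{\infty,\infty}(\Omega)$.

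For existence I would first fix $A_0\in\Psi^0(\Omega^\prime)$ properly supported, elliptic on a conic neighbourhood of $N_f$, with $WF^\prime(A_0)$ contained in a conic neighbourhood $\Gamma_0$ of $N_f$ chosen so that $\Gamma_0\cap\mathrm{WF}^\alpha(u)=\emptyset$ — possible because $N_f\cap\mathrm{WF}^\alpha(u)=\emptyset$ and both sets are closed and conic — and with $A_0=I$ microlocally near $N_f$. By Proposition \ref{prop: characterization via pseudodifferential operators}, $A_0u\in B^{\alpha,\mathrm{loc}}_{\infty,\infty}(\Omega^\prime)$, so $A_0u$ is continuous (here $\alpha>0$ is essential) and $(A_0u)\circ f$ is classically defined, whereas $B_0:=I-A_0$ satisfies $WF(B_0u)\cap N_f=\emptyset$, so $f^\ast(B_0u)$ exists by \cite[Thm. 8.2.4]{Hor94}. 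One then sets $f^\ast u:=(A_0u)\circ f + f^\ast(B_0u)$ and checks, by comparing two admissible splittings on their common continuous remainder, that the definition is independent of the choices. A separate ingredient, used again below, is the composition lemma that $g\circ f\in B^{\alpha,\mathrm{loc}}_{\infty,\infty}(\Omega)$ whenever $g\in B^{\alpha,\mathrm{loc}}_{\infty,\infty}(\Omega^\prime)$ with $\alpha>0$, which I would establish through the local means characterization of Definition \ref{def:definition Besov local means}.

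For the wavefront set inclusion I would argue by contraposition, fixing $(x_0,\eta_0)\notin f^\ast\mathrm{WF}^\alpha(u)$ and producing $v\in B^{\alpha,\mathrm{loc}}_{\infty,\infty}(\Omega)$ with $(x_0,\eta_0)\notin WF(f^\ast u-v)$, whence $(x_0,\eta_0)\notin\mathrm{WF}^\alpha(f^\ast u)$ by Proposition \ref{prop:property besov wavefront}. The crucial observation is that the two hypotheses jointly control a full closed cone: setting
\[
\overline{\Lambda}_0:=\{\xi\in\mathbb{R}^m\setminus\{0\}:\ {}^t df(x_0)\xi\in[0,\infty)\,\eta_0\},
\]
one has $(f(x_0),\xi)\notin\mathrm{WF}^\alpha(u)$ for every $\xi\in\overline{\Lambda}_0$, since the rays with ${}^t df(x_0)\xi\in(0,\infty)\eta_0$ are handled by $(x_0,\eta_0)\notin f^\ast\mathrm{WF}^\alpha(u)$ together with conicity of $\mathrm{WF}^\alpha(u)$, while the boundary directions, where ${}^t df(x_0)\xi=0$, lie in $N_f$ and are excluded by \eqref{eq:condition pull back 2}. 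As $\overline{\Lambda}_0\cap\mathbb{S}^{m-1}$ is compact and $\mathrm{WF}^\alpha(u)$ is closed, I can pick a neighbourhood $W\ni f(x_0)$ and a conic neighbourhood $\widetilde{\Gamma}\supseteq\overline{\Lambda}_0$ with $(W\times\widetilde{\Gamma})\cap\mathrm{WF}^\alpha(u)=\emptyset$, and then a properly supported $A\in\Psi^0(\Omega^\prime)$ with $WF^\prime(A)\subseteq W\times\widetilde{\Gamma}$ whose symbol equals $1$ microlocally near $\{f(x_0)\}\times\overline{\Lambda}_0$. Then $Au\in B^{\alpha,\mathrm{loc}}_{\infty,\infty}(\Omega^\prime)$ by Proposition \ref{prop: characterization via pseudodifferential operators}, so $v:=(Au)\circ f\in B^{\alpha,\mathrm{loc}}_{\infty,\infty}(\Omega)$ by the composition lemma, while $B:=I-A$ is smoothing microlocally near $\{f(x_0)\}\times\overline{\Lambda}_0$; by continuity of $df$ and after shrinking $W$ this also gives $WF(Bu)\cap N_f=\emptyset$ over $W$. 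Since $f^\ast u-v=f^\ast(Bu)$, the estimate $WF(f^\ast(Bu))\subseteq f^\ast WF(Bu)$ of \cite[Thm. 8.2.4]{Hor94} applies, and every $\xi$ with ${}^t df(x_0)\xi=\eta_0$ lies in $\overline{\Lambda}_0$, where $Bu$ is microlocally smooth; hence $(x_0,\eta_0)\notin f^\ast WF(Bu)\supseteq WF(f^\ast u-v)$, as required.

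The main obstacle I anticipate is the composition lemma $g\in B^{\alpha,\mathrm{loc}}_{\infty,\infty}(\Omega^\prime)\Rightarrow g\circ f\in B^{\alpha,\mathrm{loc}}_{\infty,\infty}(\Omega)$: this is precisely where $\alpha>0$ enters and it fails for $\alpha\le 0$, where $g$ need not be a function, and carrying it out uniformly in the localizing cut-offs through Definition \ref{def:definition Besov local means} is the technical heart of the argument. A secondary but delicate point is the simultaneous bookkeeping of the cut-off $A$, which must at once keep $WF^\prime(A)$ off $\mathrm{WF}^\alpha(u)$ (so that $Au$ is Besov-regular) and render $I-A$ smoothing on a neighbourhood of both $\overline{\Lambda}_0$ and the nearby normals, so that the classical pull-back estimate is legitimately applicable to $Bu$.
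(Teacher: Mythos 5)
Your proposal is correct in outline, and it takes a genuinely different route from the paper's proof, most visibly for the inclusion \eqref{eq: inclusion pull back 2}. For existence, the paper also splits $u$ into a Besov part and a classically pullable remainder, but abstractly: invoking Proposition \ref{prop:property besov wavefront} it trades \eqref{eq:condition pull back 2} for $N_f\cap WF(u-v)=\emptyset$ with $v\in B^{\alpha,\mathrm{loc}}_{\infty,\infty}(\Omega^\prime)$ and sets $f^\ast u=f^\ast(u-v)+v\circ f$, using $B^{\alpha,\mathrm{loc}}_{\infty,\infty}\subset C^0$ for $\alpha>0$ exactly as you do. Note, though, that Proposition \ref{prop:property besov wavefront} only produces, for each point of $N_f$, \emph{some} $v$ depending on the point, so your explicit choice $v=A_0u$, with $A_0$ equal to the identity microlocally near $N_f$ and $WF^\prime(A_0)\cap\mathrm{WF}^\alpha(u)=\emptyset$, is in fact a sharper implementation of the same idea; one citation slip: to get $A_0u\in B^{\alpha,\mathrm{loc}}_{\infty,\infty}$ from the operator wavefront condition you should invoke the microlocality statement, Proposition \ref{prop: pseudolocality} with $m=0$ together with Proposition \ref{Prop: empty Besov}, since Proposition \ref{prop: characterization via pseudodifferential operators} by itself only yields one elliptic operator per point, not an operator with symbol $1$ along a whole cone. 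For the inclusion, the paper stays inside the pseudodifferential characterization: it takes $A\in\Psi^0(\Omega^\prime)$ elliptic at $(f(x),\eta)$ with $Au\in B^{\alpha,\mathrm{loc}}_{\infty,\infty}$, declares $A_f(f^\ast u)=(Au)\circ f$ to be an order-zero operator via Theorem \ref{th: psido coordinate change}, and concludes from the symbol identity \eqref{eq: coordinate change principal symbol} and Proposition \ref{prop: characterization via pseudodifferential operators}. You instead run the classical scheme: control of the full closed cone $\overline{\Lambda}_0$ (the strictly positive multiples via conicity of $\mathrm{WF}^\alpha(u)$ and the hypothesis $(x_0,\eta_0)\notin f^\ast\mathrm{WF}^\alpha(u)$, the kernel directions via \eqref{eq:condition pull back 2}), a cut $u=Au+Bu$, the classical estimate $WF(f^\ast(Bu))\subseteq f^\ast WF(Bu)$ of \cite[Thm.~8.2.4]{Hor94}, and Proposition \ref{prop:property besov wavefront} to return to $\mathrm{WF}^\alpha$. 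Your route is arguably more robust when $d<m$: Theorem \ref{th: psido coordinate change} is stated for diffeomorphisms in equal dimension, and for a genuine embedding $(Au)\circ f$ involves a restriction to the submanifold $f(\Omega)$, so the paper's $A_f$ is not literally covered by that theorem, whereas your detour through the smooth wavefront set sidesteps it entirely. The price you pay is the composition lemma $g\circ f\in B^{\alpha,\mathrm{loc}}_{\infty,\infty}(\Omega)$ for $g\in B^{\alpha,\mathrm{loc}}_{\infty,\infty}(\Omega^\prime)$, $\alpha>0$, which you rightly flag as the technical heart; but the paper implicitly needs the same trace-type fact when it asserts $(Au)\circ f\in B^{\alpha,\mathrm{loc}}_{\infty,\infty}(\mathbb{R}^d)$, and it does hold for the H\"older--Zygmund scale. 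Two points you should still tighten: since your construction controls $WF(Bu)$ against $N_f$ only over the neighbourhood $W$ of $f(x_0)$, the pullback of $Bu$ and the ensuing wavefront comparison must be localized to some $\Omega_0\ni x_0$ with $f(\Omega_0)\subseteq W$ (this is harmless because $\mathrm{WF}^\alpha$ is a local notion); and the well-definedness of $f^\ast u$ across different splittings reduces to the observation that the classical pullback of a continuous function whose smooth wavefront set avoids $N_f$ coincides with its composition with $f$, which follows from the sequential continuity built into H\"ormander's theorem and deserves one explicit line.
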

\begin{proof}
	As a consequence of Proposition \ref{prop:property besov wavefront}, Equation \eqref{eq:condition pull back 2} is equivalent to
	\[
	N_f \cap  WF(u-v) = \emptyset, \quad \forall v \in B_{\infty,\infty}^{\alpha,\mathrm{loc}}(\Omega^\prime).
	\]
	Then, there exists the pullback $f^\ast(u-v) \in \mathcal{D}^\prime(\Omega)$. Taking into account that $B_{\infty,\infty}^{\alpha,\mathrm{loc}}(\Omega^\prime) \subset C^0(\Omega^\prime)$ for $\alpha > 0$, we have that $f^\ast v = v \circ f$. Thus,
	\[
	f^\ast u = f^\ast(u-v) + f^\ast v
	\]
	identifies an element lying in $\mathcal{D}^\prime(\Omega)$. Focusing on Equation \eqref{eq: inclusion pull back 2}, let $(x,{}^t df(x)\eta) \not \in f^\ast \mathrm{WF}^\alpha(u)$. It implies $(f(x),\eta) \not \in \mathrm{WF}^\alpha(u)$. By Proposition \ref{prop: characterization via pseudodifferential operators}, there exists $A \in \Psi^0(\Omega^\prime)$, elliptic in $(f(x),\eta)$, such that $Au \in B^{\alpha,\mathrm{loc}}_{\infty,\infty}(\Omega^\prime)$. Bearing in mind that $f$ is a diffeomorphism on $f[\Omega]$, 
	\[
	A_f(f^\ast u) = (Au) \circ f,
	\]
	identifies a pseudodifferential operator of order $0$ per Theorem \ref{th: psido coordinate change}. Since $(Au) \circ f \in B_{\infty,\infty}^{\alpha,\mathrm{loc}}(\mathbb{R}^d)$, Theorem \ref{th: psido coordinate change} entails
	\[
	\sigma_0(A_f)(x,{}^t df(x)\eta) = \sigma_0(A)(f(x), \eta) \neq 0.
	\]
	This proves $(x,{}^t df(x)\eta) \not \in \mathrm{WF}^\alpha(f^\ast u)$.
\end{proof}

To conclude this first part of the section, we shall prove that Besov wavefront set is invariant under the action of diffeomorphisms. 

\begin{theorem}[Pull-back - II]\label{Thm: BWF and Diffeo}
	Let $\Omega,\Omega^\prime\subseteq \mathbb{R}^d$ be two open subsets and let $f \colon \Omega\to\Omega^\prime$ be a diffeomorphism. Then, given $u \in \mathcal{D}^\prime(\Omega^\prime)$, for any $\alpha\in\mathbb{R}$, it holds
	\[
	\mathrm{WF}^\alpha(f^\ast u) = f^\ast \mathrm{WF}^\alpha(u).
	\]
\end{theorem}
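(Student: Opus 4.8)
The plan is to reduce everything to the inclusion already contained in Theorem \ref{th: pullback theorem 2}, combined with the functoriality of the pullback and the fact that a diffeomorphism may be inverted. First observe that, since $f$ is a diffeomorphism, $df(x)$ is invertible at every $x\in\Omega$; hence ${}^t df(x)\xi=0$ forces $\xi=0$, so the set of normals $N_f$ meets $\Omega^\prime\times(\mathbb{R}^d\setminus\{0\})$ only at the zero section. In particular $N_f\cap\mathrm{WF}^\alpha(u)=\emptyset$ for every $\alpha\in\mathbb{R}$, and the pullback $f^\ast u=u\circ f$ exists with no restriction on $\alpha$, as is classical for a diffeomorphism. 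I claim moreover that the inclusion
\[
\mathrm{WF}^\alpha(f^\ast u)\subseteq f^\ast\mathrm{WF}^\alpha(u)
\]
holds for every $\alpha\in\mathbb{R}$, and not merely for $\alpha>0$. Indeed, inspecting the proof of Theorem \ref{th: pullback theorem 2}, the hypothesis $\alpha>0$ is used exclusively to guarantee the existence of $f^\ast u$, while the argument producing the inclusion is independent of it. Concretely, if $(x,{}^t df(x)\eta)\notin f^\ast\mathrm{WF}^\alpha(u)$, then $(f(x),\eta)\notin\mathrm{WF}^\alpha(u)$ and Proposition \ref{prop: characterization via pseudodifferential operators} provides $A\in\Psi^0(\Omega^\prime)$, elliptic at $(f(x),\eta)$, with $Au\in B^{\alpha,\mathrm{loc}}_{\infty,\infty}(\Omega^\prime)$. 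By Theorem \ref{th: psido coordinate change} the localized operator $A_f$ satisfies $A_f(f^\ast u)=(Au)\circ f$, and since Besov spaces are preserved under composition with a diffeomorphism for every $\alpha\in\mathbb{R}$, one has $(Au)\circ f\in B^{\alpha,\mathrm{loc}}_{\infty,\infty}(\Omega)$; the symbol rule in Theorem \ref{th: psido coordinate change} then gives $\sigma_0(A_f)(x,{}^t df(x)\eta)=\sigma_0(A)(f(x),\eta)\neq 0$, so $A_f$ is elliptic there and $(x,{}^t df(x)\eta)\notin\mathrm{WF}^\alpha(f^\ast u)$.

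For the reverse inclusion I would apply the inclusion just established to the diffeomorphism $f^{-1}\colon\Omega^\prime\to\Omega$ and to the distribution $f^\ast u\in\mathcal{D}^\prime(\Omega)$, obtaining
\[
\mathrm{WF}^\alpha\big((f^{-1})^\ast(f^\ast u)\big)\subseteq (f^{-1})^\ast\,\mathrm{WF}^\alpha(f^\ast u).
\]
Now $(f^{-1})^\ast(f^\ast u)=(u\circ f)\circ f^{-1}=u$, so the left-hand side equals $\mathrm{WF}^\alpha(u)$. Regarded as a set map on $\mathbb{R}^d\times(\mathbb{R}^d\setminus\{0\})$, the operation $W^\prime\mapsto f^\ast W^\prime$ is the image under the bijection $(y,\eta)\mapsto(f^{-1}(y),{}^t df(f^{-1}(y))\eta)$; by the chain rule $d(f^{-1})(f(x))=(df(x))^{-1}$, so $f^\ast$ and $(f^{-1})^\ast$ are mutually inverse bijections and in particular order-preserving. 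Applying $f^\ast$ to the displayed inclusion therefore yields
\[
f^\ast\mathrm{WF}^\alpha(u)\subseteq f^\ast\big((f^{-1})^\ast\mathrm{WF}^\alpha(f^\ast u)\big)=\mathrm{WF}^\alpha(f^\ast u).
\]

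Combining the two inclusions gives the claimed equality for every $\alpha\in\mathbb{R}$. The step I expect to require the most care, and hence the main obstacle, is verifying that $f^\ast\circ(f^{-1})^\ast$ is the identity on conic subsets: this is pure bookkeeping with the transpose Jacobian, but it must be carried out using ${}^t d(f^{-1})(f(x))=({}^t df(x))^{-1}$ so that the fibre directions transform consistently. A secondary point is the invariance of $B^{\alpha,\mathrm{loc}}_{\infty,\infty}$ under diffeomorphisms for all $\alpha\in\mathbb{R}$, including negative non-integer $\alpha$, where its elements are genuine distributions; this is standard for H\"older--Zygmund spaces and is precisely what frees the argument from the constraint $\alpha>0$ present in Theorem \ref{th: pullback theorem 2}.
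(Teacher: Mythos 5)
Your proof is correct, and for the first inclusion it coincides with the paper's argument: Proposition \ref{prop: characterization via pseudodifferential operators} supplies $A\in\Psi^0(\Omega^\prime)$ elliptic at $(f(x),\eta)$ with $Au\in B^{\alpha,\mathrm{loc}}_{\infty,\infty}(\Omega^\prime)$, Theorem \ref{th: psido coordinate change} transports it to $A_f$ with $\sigma_0(A_f)(x,{}^t df(x)\eta)=\sigma_0(A)(f(x),\eta)\neq 0$, and the only analytic input is that composition with a diffeomorphism preserves $B^{\alpha,\mathrm{loc}}_{\infty,\infty}$. The genuine difference is in the converse inclusion: the paper re-runs the operator construction in the opposite direction, picking $\tilde{A}\in\Psi^0(\Omega)$ elliptic at $(x,\xi)$ with $\tilde{A}(f^\ast u)\in B^{\alpha,\mathrm{loc}}_{\infty,\infty}(\Omega)$ and conjugating it to $A\in\Psi^0(\Omega^\prime)$ with $Au=(f^{-1})^\ast(\tilde{A}(f^\ast u))$, whereas you apply the already-established inclusion to the diffeomorphism $f^{-1}$ and then use that $f^\ast$ and $(f^{-1})^\ast$ are mutually inverse bijections on subsets of $\Omega\times(\mathbb{R}^d\setminus\{0\})$; your bookkeeping via ${}^t d(f^{-1})(f(x))=({}^t df(x))^{-1}$ is exactly right, and with the definition in Equation \eqref{eq: def pull back by diffeo} one checks directly that $f^\ast\circ(f^{-1})^\ast=\mathrm{id}$. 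The two routes are mathematically equivalent (the paper's $A$ is precisely $\tilde{A}_{f^{-1}}$), but your functorial phrasing is tidier, avoids repeating the Besov-invariance verification a second time, and makes the symmetry of the statement explicit. Two further points in your favour: your observation that the hypothesis $\alpha>0$ in Theorem \ref{th: pullback theorem 2} is used only to define $f^\ast v=v\circ f$ through $B^{\alpha,\mathrm{loc}}_{\infty,\infty}\subset C^0$, and is vacuous when $f$ is invertible, is consistent with the paper, which indeed asserts Theorem \ref{Thm: BWF and Diffeo} for all $\alpha\in\mathbb{R}$; and where the paper checks $(Au)\circ f\in B^{\alpha,\mathrm{loc}}_{\infty,\infty}(\Omega)$ by an inline local-means estimate, you instead cite the standard diffeomorphism invariance of H\"older--Zygmund spaces, which is acceptable and arguably safer, since the pushed-forward test function $f_\ast\kappa$ appearing in the paper's estimate is not literally of the scaled form $\kappa^\lambda_{f(z)}$, nor need it satisfy the moment conditions defining $\mathscr{B}_{\lfloor\alpha\rfloor}$, so a careful rendition of that estimate reduces to the same classical invariance result you invoke.
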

\begin{proof}
	We prove the inclusion $\mathrm{WF}^\alpha(f^\ast u) \subseteq f^\ast \mathrm{WF}^\alpha(u)$. 
	Let $(x,\xi) \not \in f^\ast \mathrm{WF}^\alpha(u)$, {\it i.e.}, $(f(x),({}^t df(x))^{-1} \xi)\not \in \mathrm{WF}^\alpha(u)$. 
	Thus, there exists $A \in \Psi^0(\Omega^\prime)$, elliptic at $(f(x),({}^t df(x))^{-1} \xi)$, such that $Au \in B_{\infty,\infty}^{\alpha,\mathrm{loc}}(\Omega^\prime)$. If one introduces $A_f \in \Psi^0(\Omega)$ such that
	\[
	A_f(f^\ast u) = f^\ast (Au),
	\]
	Equation \eqref{eq: coordinate change principal symbol} entails that
	\[
	\sigma_0(A_f)(x,\xi) = \sigma_0(A)(f(x), ({}^t df(x))^{-1} \xi) \neq 0,
	\]
	that is $A_f$ is elliptic at $(f(x), ({}^t df(x))^{-1} \xi)$. 
	To conclude, we need to prove that $A_f(f^\ast u) \in B_{\infty,\infty}^{\alpha,\mathrm{loc}}(\Omega)$.
	For any but fixed $\phi \in C_0^\infty(\Omega)$, it holds
	\begin{align}
		\lvert \langle \phi A_f(f^\ast u), \kappa^\lambda_z  \rangle \rvert = \lvert \langle  f^\ast(Au), \phi \kappa^\lambda_z  \rangle \rvert = \lvert \langle  Au, (f_\ast\phi) (f_\ast\kappa)^\lambda_{f(z)} \lvert \det(df^{-1}) \rvert  \rangle \rvert \lesssim \lvert \langle  Au, (f_\ast\phi) (f_\ast\kappa)^\lambda_{f(z)} \rangle \rvert  \lesssim \lambda^\alpha,\nonumber
	\end{align}
	for any $z \in \Omega$, $\lambda \in (0,1]$ and $\kappa \in \mathscr{B}_\alpha$. 
	An analogous estimate yields
	\[
	\lvert \langle \phi A_f(f^\ast u), \underline{\kappa}_z  \rangle \rvert \lesssim 1,
	\]
	for any $z \in\Omega$ and $\underline{\kappa} \in \mathcal{D}(B(0,1))$ such that $\check{\underline{\kappa}}(0)\neq 0$. 
	This proves that $(x,\xi) \not \in \mathrm{WF}^\alpha(f^\ast u)$.\\
	Conversely, let $(x,\xi) \not \in \mathrm{WF}^\alpha(f^\ast u)$. 
	Then, there exists $\tilde{A} \in \Psi^0(\Omega)$, elliptic in $(x,\xi)$, such that $\tilde{A}(f^\ast u) \in B^{\alpha,\mathrm{loc}}_{\infty,\infty}(\Omega)$. 
	Let  $A \in \Psi^0(\Omega^\prime)$ be such that
	\[
	Au = (f^{-1})^\ast(\tilde{A}(f^\ast u)) .
	\]
	Still on account of Theorem \ref{th: psido coordinate change}, it holds that
	\[
	\sigma_0(A)(f(x),({}^t df(x))^{-1}\xi) = \sigma_0(\tilde{A})(x,\xi) \neq 0.
	\]
	As a consequence, $A$ is elliptic at $(f(x),({}^t df(x))^{-1}\xi)$. Reasoning as in the first part of the proof, it turns out that $Au \in B^{\alpha,\mathrm{loc}}_{\infty,\infty}(\Omega^\prime)$. This entails that $(x,\xi) \not \in f^\ast \mathrm{WF}^\alpha(u)$.
\end{proof}

\begin{remark}\label{Rem: Extension of BWF to manifolds}
	Theorem \ref{Thm: BWF and Diffeo} is especially noteworthy since it is the building block to extend the notion of Besov wavefront set to distributions supported on any arbitrary smooth manifold $M$, following the same rationale used when working with the smooth counterpart.
	On a similar note, we observe that for the sake of simplicity of the presentation, we decided to stick to individuating a point of $\mathrm{WF}^\alpha(u)$, $u\in\mathcal{D}^\prime(\mathbb{R}^d)$, as an element of $\mathbb{R}^d\times\mathbb{R}^d\setminus\{0\}$. Yet, from a geometrical viewpoint each element of $\mathrm{WF}^\alpha(u)$ should be better read as lying in the cotangent bundle $\mathrm{T}^*\mathbb{R}^d\setminus\{0\}$. For the sake of conciseness, we shall not dwell into further details which are left to the reader.
\end{remark}

\paragraph{Microlocal Properties of $\Psi$DOs --} Our next task is the study of the interplay between pseudodifferential operators and distributions at the level of wavefront set. To this end we recall a notable result, valid in the smooth setting, see \cite[Prop. 6.27]{Hintz}, namely, if $A\in\Psi^m(\mathbb{R}^d)$ and $u\in\mathcal{D}^\prime(\mathbb{R}^d)$, then $A$ is {\em microlocal}:
\begin{equation}\label{eq: pseudolocality smooth wavefront set}
	WF(Au) \subseteq WF^\prime(A) \cap WF(u),
\end{equation}
where $WF^\prime$ stands for the operator wavefront set as per Definition \ref{Def: Operator WF}. At the level of Besov wavefront set the counterpart of this statement is the following proposition.

\begin{proposition}\label{prop: pseudolocality}
	Let $A \in \Psi^m(\mathbb{R}^d)$, $u \in \mathcal{D}^\prime(\mathbb{R}^d)$ and $\alpha \in \mathbb{R}$. Then 
	\begin{equation}\label{eq: pseudolocality}
		\mathrm{WF}^{\alpha - m}(Au) \subseteq WF^\prime(A) \cap \mathrm{WF}^\alpha(u).
	\end{equation}
\end{proposition}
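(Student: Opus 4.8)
The plan is to prove the inclusion \eqref{eq: pseudolocality} by showing its contrapositive at the level of the pseudodifferential characterization of Proposition \ref{prop: characterization via pseudodifferential operators}. Concretely, I would take a point $(x_0,\xi_0)$ which does \emph{not} lie in $WF^\prime(A) \cap \mathrm{WF}^\alpha(u)$ and prove that $(x_0,\xi_0) \not\in \mathrm{WF}^{\alpha-m}(Au)$. There are two cases. If $(x_0,\xi_0)\notin WF^\prime(A)$, then by Definition \ref{Def: Operator WF} the symbol of $A$ is residual ($S^{-\infty}$) in a conic neighbourhood of $(x_0,\xi_0)$; hence one can find a properly supported $\Psi$DO $B\in\Psi^0(\mathbb{R}^d)$ elliptic at $(x_0,\xi_0)$ with $WF^\prime(B)$ contained in that neighbourhood, so that $BA\in\Psi^{-\infty}(\mathbb{R}^d)$ by property (3) of Proposition \ref{Prop: WF Properties} together with the residual behaviour of the composed symbol. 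Then $B(Au)=(BA)u\in C^\infty(\mathbb{R}^d)\subset B^{\alpha-m,\mathrm{loc}}_{\infty,\infty}(\mathbb{R}^d)$, and since $B$ is elliptic at $(x_0,\xi_0)$ we have $(x_0,\xi_0)\notin\mathrm{Char}(B)$, whence $(x_0,\xi_0)\notin\mathrm{WF}^{\alpha-m}(Au)$ by Proposition \ref{prop: characterization via pseudodifferential operators}.

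For the second case, suppose $(x_0,\xi_0)\notin\mathrm{WF}^\alpha(u)$. By Proposition \ref{prop: characterization via pseudodifferential operators} there is a properly supported $C\in\Psi^0(\mathbb{R}^d)$, elliptic at $(x_0,\xi_0)$, with $Cu\in B^{\alpha,\mathrm{loc}}_{\infty,\infty}(\mathbb{R}^d)$. The idea is to microlocalize near $(x_0,\xi_0)$ and commute $C$ past $A$ up to a smoothing error. Using Proposition \ref{prop: parametrix}, choose a microlocal parametrix $Q$ for $C$ on a closed conic set $\mathscr{C}\subset\mathrm{Ell}(C)$ containing $(x_0,\xi_0)$, so that $QC=I-R$ with $R\in\Psi^{-1}$ and $(x_0,\xi_0)\notin WF^\prime(R)$. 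Select a properly supported $B\in\Psi^0(\mathbb{R}^d)$, elliptic at $(x_0,\xi_0)$, with $WF^\prime(B)\cap WF^\prime(R)=\emptyset$. Then write
\[
B(Au)=BA(QC+R)u=(BAQ)(Cu)+BARu.
\]
Here $BAR$ is a $\Psi$DO of order $m-1$ whose operator wavefront set is empty near $(x_0,\xi_0)$ by property (3) of Proposition \ref{Prop: WF Properties}, so after localizing with a cutoff $\rho\in\mathcal{D}(\mathbb{R}^d)$ equal to $1$ near $x_0$ the term $\rho\,BARu$ contributes an element of $C^\infty$, hence of $B^{\alpha-m,\mathrm{loc}}_{\infty,\infty}(\mathbb{R}^d)$. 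Meanwhile $BAQ\in\Psi^m(\mathbb{R}^d)$, and since $Cu\in B^{\alpha,\mathrm{loc}}_{\infty,\infty}(\mathbb{R}^d)$, Theorem \ref{th: boundedness on besov spaces} gives $(BAQ)(Cu)\in B^{\alpha-m,\mathrm{loc}}_{\infty,\infty}(\mathbb{R}^d)$ once one inserts an appropriate cutoff to pass from the global to the local statement. Combining, $B(Au)\in B^{\alpha-m,\mathrm{loc}}_{\infty,\infty}(\mathbb{R}^d)$ with $B$ elliptic at $(x_0,\xi_0)$, which again yields $(x_0,\xi_0)\notin\mathrm{WF}^{\alpha-m}(Au)$ via Proposition \ref{prop: characterization via pseudodifferential operators}.

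The main obstacle I anticipate is the bookkeeping of supports and local versus global Besov regularity: Theorem \ref{th: boundedness on besov spaces} is stated for the global spaces $B^{\alpha}_{\infty,\infty}(\mathbb{R}^d)\to B^{\alpha-m}_{\infty,\infty}(\mathbb{R}^d)$, whereas one only controls $Cu$ in the local class $B^{\alpha,\mathrm{loc}}_{\infty,\infty}$, so one must interpose cutoffs $\rho,\rho'\in\mathcal{D}(\mathbb{R}^d)$ and exploit that the relevant $\Psi$DOs are \emph{properly supported} to guarantee that products such as $(BAQ)(\rho\,Cu)$ stay compactly supported and that the discarded pieces $(BAQ)((1-\rho)Cu)$ are microlocally negligible near $x_0$. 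Getting the order count right (so that the residual terms land in order $\leq m-1$ and thus in $C^\infty$ after the parametrix expansion is iterated, or equivalently observing that a single $R\in\Psi^{-1}$ step suffices because the gain of regularity $\lambda^{\alpha+1}$ beats $\lambda^{\alpha}$ as in the proof of Proposition \ref{prop: characterization via pseudodifferential operators}) is the delicate part, but it is entirely parallel to the smooth statement \eqref{eq: pseudolocality smooth wavefront set} and to the argument already carried out for Proposition \ref{prop: characterization via pseudodifferential operators}.
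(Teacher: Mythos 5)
Your proposal is correct and takes essentially the same route as the paper's proof: the same two-case contrapositive, with $BA\in\Psi^{-\infty}(\mathbb{R}^d)$ handling the case $(x_0,\xi_0)\notin WF^\prime(A)$, and the parametrix decomposition $B(Au)=(BAQ)(\tilde{A}u)+(BAR)u$ combined with Theorem \ref{th: boundedness on besov spaces} and Proposition \ref{prop: characterization via pseudodifferential operators} handling the case $(x_0,\xi_0)\notin\mathrm{WF}^\alpha(u)$. The support bookkeeping you flag is indeed implicitly absorbed by proper supportedness in the paper's argument, and no iteration of the parametrix expansion is needed here since $WF^\prime(B)\cap WF^\prime(R)=\emptyset$ already renders $BAR$ smoothing.
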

\begin{proof}
	Suppose that $(x_0,\xi_0) \not \in WF^\prime(A)$. As a consequence of Proposition \ref{prop: parametrix} there exists $B \in \Psi^0(\mathbb{R}^d)$, elliptic at $(x_0,\xi_0)$. 
	In addition, we find $B$ such that $WF^\prime(A) \cap WF^\prime(B) = \emptyset$. 
	Proposition \ref{Prop: WF Properties} entails that $BA \in \Psi^{-\infty}(\mathbb{R}^d)$, which implies in turn that $B(Au)\in C^\infty(\mathbb{R}^d)\subseteq B_{\infty,\infty}^{\alpha-m,\mathrm{loc}}(\mathbb{R}^d)$. Proposition \ref{prop: characterization via pseudodifferential operators} yields that $(x_0,\xi_0)\notin\mathrm{WF}^{\alpha-m}(Au)$.
	
	Conversely, suppose $(x_0,\xi_0) \not \in \mathrm{WF}^\alpha(u)$. Then, still in view of Proposition \ref{prop: characterization via pseudodifferential operators}, there exists $\tilde{A} \in \Psi^0(\mathbb{R}^d)$, elliptic at $(x_0,\xi_0)$, such that $\tilde{A}u \in B^{\alpha,\mathrm{loc}}_{\infty,\infty}(\mathbb{R}^d)$. Take $B \in \Psi^0(\mathbb{R}^d)$ elliptic at $(x_0,\xi_0)$ with $WF^\prime(B) \subseteq \mathrm{Ell}(\tilde{A})$. On account of Proposition \ref{prop: parametrix}, there exists a parametrix $Q \in \Psi^0(\mathbb{R}^d)$ of $\tilde{A}$, that is, $Q\tilde{A} = I-R$ with $R \in \Psi^0(\mathbb{R}^d)$ and $WF^\prime(R) \cap WF^\prime(B) = \emptyset$. Therefore, 
	\[
	B(Au) = BA(Q\tilde{A}+R)u = BAQ(\tilde{A}u) + (BAR)u.
	\]
	Since $BAR \in \Psi^{-\infty}(\mathbb{R}^d)$, then $(BAR)u \in C^\infty(\mathbb{R}^d)$. At the same time $BAQ(\tilde{A}u) \in B^{\alpha - m,\mathrm{loc}}_{\infty,\infty}(\mathbb{R}^d)$ because $\tilde{A}u \in B^{\alpha,\mathrm{loc}}_{\infty,\infty}(\mathbb{R}^d)$ and $BAQ \in \Psi^{m}(\mathbb{R}^d)$. Yet, since $(x_0,\xi_0)\notin \mathrm{Char}(B)$, it descends $(x_0,\xi_0)\notin\mathrm{WF}^{\alpha-m}(Au)$. This concludes the proof.
\end{proof}

The second result we present in this section provides a sort of inverse result, with respect to the previous one, which is more relevant from a PDE viewpoint.

\begin{proposition}\label{prop: char with psdo}
	Let $u \in \mathcal{D}^\prime(\mathbb{R}^d)$, $A \in \Psi^m(\mathbb{R}^d)$ and $m,\alpha \in \mathbb{R}$. Then
	\begin{equation}\label{eq: char with psido}
		\mathrm{WF}^\alpha(u) \subseteq \mathrm{Char}(A) \cup  \mathrm{WF}^{\alpha - m}(A u).
	\end{equation}
\end{proposition}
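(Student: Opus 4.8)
The plan is to argue by contraposition, mirroring the classical elliptic regularity argument for the smooth wavefront set but carried out in the Besov scale via Proposition \ref{prop: characterization via pseudodifferential operators}. Suppose that $(x_0,\xi_0)\notin\mathrm{Char}(A)\cup\mathrm{WF}^{\alpha-m}(Au)$, so that $(x_0,\xi_0)\in\mathrm{Ell}(A)$ and, by the $\Psi$DO characterization, there is a properly supported $C\in\Psi^0(\mathbb{R}^d)$, elliptic at $(x_0,\xi_0)$, with $C(Au)\in B_{\infty,\infty}^{\alpha-m,\mathrm{loc}}(\mathbb{R}^d)$. The goal is to produce a properly supported $B\in\Psi^0(\mathbb{R}^d)$, elliptic at $(x_0,\xi_0)$, such that $Bu\in B_{\infty,\infty}^{\alpha,\mathrm{loc}}(\mathbb{R}^d)$; by Proposition \ref{prop: characterization via pseudodifferential operators} this yields $(x_0,\xi_0)\notin\mathrm{WF}^\alpha(u)$, which is exactly what we want.

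Next I would invoke Proposition \ref{prop: parametrix} twice to build a \emph{double parametrix}. Since $(x_0,\xi_0)\in\mathrm{Ell}(A)$, choose a closed conic neighbourhood $\mathscr{C}\subset\mathrm{Ell}(A)$ of $(x_0,\xi_0)$ and a parametrix $Q\in\Psi^{-m}(\mathbb{R}^d)$ with $QA=I+R_1$ and $\mathscr{C}\cap WF^\prime(R_1)=\emptyset$. Likewise, using that $C$ is elliptic at $(x_0,\xi_0)$, pick a closed conic neighbourhood $\mathscr{C}^\prime\subset\mathscr{C}\cap\mathrm{Ell}(C)$ of $(x_0,\xi_0)$ and a parametrix $Q_C\in\Psi^0(\mathbb{R}^d)$ with $Q_CC=I+R_2$ and $\mathscr{C}^\prime\cap WF^\prime(R_2)=\emptyset$. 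Finally I would select $B\in\Psi^0(\mathbb{R}^d)$ properly supported, elliptic at $(x_0,\xi_0)$, with $WF^\prime(B)$ contained in the interior of $\mathscr{C}^\prime$. The two key identities $u=QAu-R_1u$ and $Au=Q_C\,C(Au)-R_2(Au)$ then combine into
\begin{equation*}
Bu=BQQ_C\,\big(C(Au)\big)-BQR_2(Au)-BR_1u.
\end{equation*}

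It then remains to control the three summands. For the first, $BQQ_C\in\Psi^{-m}(\mathbb{R}^d)$ and $C(Au)\in B_{\infty,\infty}^{\alpha-m,\mathrm{loc}}(\mathbb{R}^d)$, so Theorem \ref{th: boundedness on besov spaces}, in its properly supported local form (as already used in Proposition \ref{prop: pseudolocality}), gives $BQQ_C\big(C(Au)\big)\in B_{\infty,\infty}^{\alpha,\mathrm{loc}}(\mathbb{R}^d)$. For the remaining two, the composition and intersection rules of Proposition \ref{Prop: WF Properties} together with $WF^\prime(B)\subset\mathscr{C}^\prime\subset\mathscr{C}$ force $WF^\prime(BQR_2)\subset WF^\prime(B)\cap WF^\prime(R_2)=\emptyset$ and $WF^\prime(BR_1)\subset WF^\prime(B)\cap WF^\prime(R_1)=\emptyset$, whence $BQR_2$ and $BR_1$ are smoothing and both $BQR_2(Au)$ and $BR_1u$ lie in $C^\infty(\mathbb{R}^d)\subset B_{\infty,\infty}^{\alpha,\mathrm{loc}}(\mathbb{R}^d)$. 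Summing, $Bu\in B_{\infty,\infty}^{\alpha,\mathrm{loc}}(\mathbb{R}^d)$, and the conclusion follows. I expect the main obstacle to be purely bookkeeping rather than conceptual: arranging all operators to be properly supported so that the compositions and the extension to $\mathcal{D}^\prime(\mathbb{R}^d)$ of Remark \ref{Rem: extension of PsiDOs} are legitimate, and justifying the local upgrade of Theorem \ref{th: boundedness on besov spaces}, namely that a properly supported $\Psi$DO of order $-m$ really maps $B_{\infty,\infty}^{\alpha-m,\mathrm{loc}}(\mathbb{R}^d)$ into $B_{\infty,\infty}^{\alpha,\mathrm{loc}}(\mathbb{R}^d)$.
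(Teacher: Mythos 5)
Your proof is correct, but it takes a genuinely different route from the paper's, which is a three-line composition trick with no parametrix at all: given $(x_0,\xi_0)\notin\mathrm{Char}(A)\cup\mathrm{WF}^{\alpha-m}(Au)$, the paper takes the operator $B\in\Psi^0(\mathbb{R}^d)$, elliptic at $(x_0,\xi_0)$, with $B(Au)\in B^{\alpha-m,\mathrm{loc}}_{\infty,\infty}(\mathbb{R}^d)$ furnished by Proposition \ref{prop: characterization via pseudodifferential operators}, picks any properly supported elliptic $K\in\Psi^{-m}(\mathbb{R}^d)$, and tests $u$ against the single operator $KBA\in\Psi^0(\mathbb{R}^d)$: by Theorem \ref{th: boundedness on besov spaces} one has $(KBA)u=K(B(Au))\in B^{\alpha,\mathrm{loc}}_{\infty,\infty}(\mathbb{R}^d)$, while $(x_0,\xi_0)\notin\mathrm{Char}(KBA)$ because principal symbols multiply and each factor is elliptic at the point. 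The observation the paper exploits, and you do not, is that the $\Psi$DO characterization of $\mathrm{WF}^\alpha$ only requires the test operator to be elliptic \emph{at the point}, so $A$ itself may be absorbed as a factor of the test operator and no microlocal inversion is needed. You instead run the classical double-parametrix elliptic-regularity scheme: your identity for $Bu$ is algebraically correct, and the smoothing conclusions for $BQR_2$ and $BR_1$ via $WF^\prime$-disjointness and Proposition \ref{Prop: WF Properties} are drawn exactly as the paper itself draws them in the proof of Proposition \ref{prop: pseudolocality}, so the argument is sound; the price is length plus the bookkeeping you flag at the end, which, for the record, the paper's one-line step ``$(KBA)u\in B^{\alpha,\mathrm{loc}}_{\infty,\infty}(\mathbb{R}^d)$'' glosses over in the same way. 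One simplification of your route is worth noting: the second parametrix $Q_C$ (and even the explicit microlocal cutoff $B$) is avoidable, since from $u=Q(Au)-R_1u$ you may apply Proposition \ref{prop: pseudolocality} to $Q\in\Psi^{-m}(\mathbb{R}^d)$, giving $\mathrm{WF}^{\alpha}(Q(Au))\subseteq WF^\prime(Q)\cap\mathrm{WF}^{\alpha-m}(Au)\not\ni(x_0,\xi_0)$, and to $R_1$, giving $\mathrm{WF}^\alpha(R_1u)\subseteq WF^\prime(R_1)\not\ni(x_0,\xi_0)$, whence additivity of the Besov wavefront set yields $(x_0,\xi_0)\notin\mathrm{WF}^\alpha(u)$ directly.
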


\begin{proof}
	Let $(x_0,\xi_0) \not \in \mathrm{Char}(A) \cup  \mathrm{WF}^{\alpha - m}(A u)$. Thus there exists $B \in \Psi^0(\mathbb{R}^d)$, elliptic at $(x_0,\xi_0)$, such that $B(Au) \in B^{\alpha-m,\mathrm{loc}}_{\infty,\infty}(\mathbb{R}^d)$. Let $K$ be any properly supported pseudodifferential operator lying in $\Psi^{-m}(\mathbb{R}^n)$, which can be chosen without loss of generality to be elliptic at $(x_0,\xi_0)$. It descends $(KBA) u \in B^{\alpha,\mathrm{loc}}_{\infty,\infty}(\mathbb{R}^d)$. Since $(x_0,\xi_0)\notin\mathrm{Char}(KBA)$, it descends that $(x_0,\xi_0)\notin\mathrm{WF}^\alpha(u)$. 
\end{proof}

\begin{corollary}[Elliptic Regularity]
	Let $u \in \mathcal{D}^\prime(\mathbb{R}^d)$, $m,\alpha \in \mathbb{R}$ and let $A\in\Psi^m(\mathbb{R}^d)$ be elliptic. Then
	$$\mathrm{WF}^\alpha(u) = \mathrm{WF}^{\alpha - m}(A u).$$
\end{corollary}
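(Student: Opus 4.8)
The plan is to obtain the two inclusions separately from the two microlocal propositions already proved in this section, and then to use ellipticity to discard the auxiliary terms appearing on the right-hand sides. Both inclusions are essentially immediate, so I expect no genuine obstacle beyond correctly invoking the preceding results and identifying $\mathrm{Char}(A)=\emptyset$ in the elliptic case.

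First I would establish $\mathrm{WF}^{\alpha-m}(Au)\subseteq \mathrm{WF}^\alpha(u)$ directly from Proposition \ref{prop: pseudolocality}. That proposition gives $\mathrm{WF}^{\alpha-m}(Au)\subseteq WF^\prime(A)\cap \mathrm{WF}^\alpha(u)$, and since intersecting with $WF^\prime(A)$ can only shrink the set, the desired inclusion follows with no use of ellipticity. For the reverse inclusion I would invoke Proposition \ref{prop: char with psdo}, which yields $\mathrm{WF}^\alpha(u)\subseteq \mathrm{Char}(A)\cup \mathrm{WF}^{\alpha-m}(Au)$. This is the step where ellipticity enters: by Definition \ref{def: elliptic set}, $A$ being elliptic means that every $(x,\xi)\in\mathbb{R}^d\times(\mathbb{R}^d\setminus\{0\})$ lies in $\mathrm{Ell}(A)$, whence $\mathrm{Char}(A)$, being its complement, is empty. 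Consequently $\mathrm{WF}^\alpha(u)\subseteq \mathrm{WF}^{\alpha-m}(Au)$.

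Combining the two inclusions produces the claimed equality $\mathrm{WF}^\alpha(u)=\mathrm{WF}^{\alpha-m}(Au)$. The only point deserving a word of care is the identification $\mathrm{Char}(A)=\emptyset$ for a globally elliptic operator, which is immediate from Definition \ref{def: elliptic set}; everything else reduces to quoting Proposition \ref{prop: pseudolocality} and Proposition \ref{prop: char with psdo}. In this sense the corollary is a clean consequence of the two microlocal containment statements, with ellipticity serving precisely to suppress the characteristic set term.
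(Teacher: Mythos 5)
Your proof is correct and follows essentially the same route as the paper: both inclusions are read off from Propositions \ref{prop: pseudolocality} and \ref{prop: char with psdo}, with ellipticity entering only to give $\mathrm{Char}(A)=\emptyset$. If anything, your version is slightly cleaner, since the paper additionally asserts $WF^\prime(A)=\emptyset$ for elliptic $A$ --- a claim that is unnecessary (the intersection with $WF^\prime(A)$ in Proposition \ref{prop: pseudolocality} only shrinks the set, exactly as you observe) and in fact false in general, as elliptic symbols such as $\langle \xi \rangle^m$ have full essential support.
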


\begin{proof}
	Since $A$ is an elliptic pseudodifferential operator $\mathrm{Char}(A)=\emptyset$ and, in view of Definition \ref{Def: Operator WF}, $WF^\prime(A)=\emptyset$. The statement is thus a direct consequence of Propositions \ref{prop: pseudolocality} and \ref{prop: char with psdo}.
\end{proof}

\paragraph{Product of Distributions --} In the following we investigate the formulation of a version of H\"ormander's criterion for the product of distributions, tied to the Besov wavefront set. In the spirit of \cite{Hor03}, we rely on two ingredients. The first has already been discussed in Theorem \ref{th: pullback theorem 2}, while the second one concerns the tensor product of two distributions. In particular we wish to establish an estimate on the singular behaviour of $u\otimes v$ for given $u,v\in\mathcal{D}^\prime(\mathbb{R}^d)$. This can be read as a direct adaptation to this context of \cite[Prop. B.5]{Junker:2001gx} which is based in turn on \cite[Lemma 11.6.3]{Hor97}. For this reason we shall omit the proof.

\begin{proposition}[Tensor product]\label{Prop: besov tensor product theorem 2}
	Let $\Omega\subseteq \mathbb{R}^d$ and $\Omega^\prime\subseteq \mathbb{R}^m$ be two open sets. If $u\in \mathcal{D}^\prime(\Omega)$ and $v \in \mathcal{D}^\prime(\Omega^\prime)$, then the following two inclusions hold true:
\begin{equation}\label{Eq: WF-Tensor Product}
\mathrm{WF}^{\alpha+\beta}(u\otimes v)\subseteq \mathrm{WF}_0^\alpha(u)\times WF(v)\cup WF(u)\times\mathrm{WF}^\beta_0(v),
\end{equation}	
and, calling $\gamma:=\min\{\alpha,\beta,\alpha+\beta\}$,
\begin{equation}\label{Eq: WF-Tensor Product_improved}
	\mathrm{WF}^\gamma(u\otimes v)\subseteq \mathrm{WF}^\alpha(u)\times WF_0(v)\cup WF_0(u)\times\mathrm{WF}^\beta(v),
\end{equation}
where we adopted the notation $WF_0(u):=WF(u)\cup\left(\textrm{supp}(u)\times \{0\}\right)$ and similarly $\mathrm{WF}^\alpha_0(u):=\mathrm{WF}^\alpha(u)\cup\left(\textrm{supp}(u)\times \{0\}\right)$.
\end{proposition}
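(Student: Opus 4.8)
The plan is to transcribe the Fourier-space argument of \cite[Lemma 11.6.3]{Hor97}, in the form adapted in \cite[Prop. B.5]{Junker:2001gx}, into the local means characterization underlying Definition \ref{def:besov wavefront set}. First I would localise: pick $\phi\in\mathcal{D}(\Omega)$ with $\phi(x_0)\neq0$ and $\psi\in\mathcal{D}(\Omega^\prime)$ with $\psi(y_0)\neq0$, so that $(\phi\otimes\psi)(u\otimes v)=(\phi u)\otimes(\psi v)$ and, crucially, the Fourier transform factorises as $\widehat{(\phi u)\otimes(\psi v)}(\xi,\eta)=\widehat{\phi u}(\xi)\,\widehat{\psi v}(\eta)$. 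Writing $z=(x,y)$ and $\zeta=(\xi,\eta)$, showing that a point $(x_0,y_0,\xi_0,\eta_0)$ in the complement of the right-hand side of \eqref{Eq: WF-Tensor Product} (resp. \eqref{Eq: WF-Tensor Product_improved}) is not in $\mathrm{WF}^{\alpha+\beta}(u\otimes v)$ (resp. $\mathrm{WF}^{\gamma}(u\otimes v)$) amounts, by Definition \ref{def:besov wavefront set}, to exhibiting a conic neighbourhood $\Gamma\subset\mathbb{R}^{d+m}\setminus\{0\}$ of $(\xi_0,\eta_0)$ such that
\[
\Big|\int_\Gamma \widehat{\phi u}(\xi)\,\widehat{\psi v}(\eta)\,\check{\kappa}(\lambda\zeta)\,e^{iz\cdot\zeta}\,d\zeta\Big|\lesssim\lambda^{\alpha+\beta}\quad(\text{resp. }\lambda^{\gamma})
\]
uniformly for $z$ in a compact set and $\lambda\in(0,1]$. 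A de Morgan expansion of the complement reduces the hypothesis to three exhaustive situations: either $u$ is smooth at $(x_0,\xi_0)$, or $v$ is smooth at $(y_0,\eta_0)$, or simultaneously $(x_0,\xi_0)\notin\mathrm{WF}^\alpha(u)$ and $(y_0,\eta_0)\notin\mathrm{WF}^\beta(v)$; the appearance of the enlargements $\mathrm{WF}^\alpha_0$ and $WF_0$ reflects precisely the directions with a vanishing component, $\xi_0=0$ or $\eta_0=0$, where no such cancellation is available.

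The engine of the estimate is a Littlewood--Paley bookkeeping keyed to the equivalence between the norms \eqref{eq:besov norm} and \eqref{eq:besov local means}. I would split the integrand with the one-variable partitions $\sum_i\psi_i(\xi)$ and $\sum_j\psi_j(\eta)$ from Definition \ref{Def: L-P partition of unity}, and decompose $\zeta$-space according to the relative size of $|\xi|$ and $|\eta|$ into a diagonal sector $\{c|\xi|\le|\eta|\le c^{-1}|\xi|\}$ and two off-diagonal sectors. Since $\check{\kappa}(\lambda\zeta)$ concentrates at $|\zeta|\sim\lambda^{-1}$, only the dyadic blocks with $\max(i,j)\approx k$ and $2^{k}\sim\lambda^{-1}$ contribute, and the bounds $\|\psi_i(D)(\phi u)\|_{L^\infty}\lesssim 2^{-i\alpha}$, $\|\psi_j(D)(\psi v)\|_{L^\infty}\lesssim 2^{-j\beta}$ are available whenever the corresponding microlocal Besov regularity holds in the relevant cones. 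When $\xi_0\neq0$ and $\eta_0\neq0$ a narrow $\Gamma$ forces $|\xi|\sim|\eta|$, so only the diagonal blocks $i\approx j\approx k$ survive; summing $2^{k(\alpha+\beta)}2^{-i\alpha}2^{-j\beta}$ over them stays bounded, which gives the decay $\lambda^{\alpha+\beta}$ of \eqref{Eq: WF-Tensor Product}. If instead one factor is smooth its block norms beat every power of $2^{-k}$ and the product estimate improves to rapid decay.

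The genuinely delicate part, which I expect to be the main obstacle, is the off-diagonal sectors, since these unavoidably intersect every conic neighbourhood of a direction $(\xi_0,0)$ or $(0,\eta_0)$. There the subordinate factor --- say $\psi v\in\mathcal{E}^\prime(\Omega^\prime)\subset B^{-M}_{\infty,\infty}(\mathbb{R}^m)$ for some $M$ --- obeys only the polynomial bound $|\widehat{\psi v}(\eta)|\lesssim\langle\eta\rangle^{M}$, and the crude block sum diverges; the summation over the low block indices of that factor is controlled precisely by the high vanishing order of $\check{\kappa}$ at the origin built into $\kappa\in\mathscr{B}_{\lfloor\gamma\rfloor}$ (Definition \ref{Def: Kernel for Besov}), which suppresses the frequencies $|\zeta|\ll\lambda^{-1}$. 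Carrying out the three-sector summation leaves the three competing exponents $2^{k(\gamma-\alpha)}$, $2^{k(\gamma-\beta)}$ and $2^{k(\gamma-\alpha-\beta)}$, and uniform boundedness in $k$ holds exactly when $\gamma\le\min\{\alpha,\beta,\alpha+\beta\}$: this is the origin of the exponent in \eqref{Eq: WF-Tensor Product_improved}, while the $WF_0$ and $\mathrm{WF}^\alpha_0$ terms record that no gain beyond the minimum survives in the off-diagonal directions.

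A cleaner route for the improved inclusion \eqref{Eq: WF-Tensor Product_improved}, bypassing most of this bookkeeping, is to use Proposition \ref{prop:property besov wavefront}. The hypothesis $(x_0,\xi_0)\notin\mathrm{WF}^\alpha(u)$ lets one split $\phi u=u_1+u_2$ with $u_1:=\mathcal{F}^{-1}[\mathbb{I}_{\Gamma_u}\widehat{\phi u}]\in B^{\alpha,\mathrm{loc}}_{\infty,\infty}$ and $(x_0,\xi_0)\notin WF(u_2)$, and symmetrically $\psi v=v_1+v_2$. Every mixed term in the expansion of $(\phi u)\otimes(\psi v)$ carrying a factor $u_2$ or $v_2$ has a smooth wavefront set avoiding $(x_0,y_0,\xi_0,\eta_0)$ by the classical tensor theorem \cite[Thm. 8.2.9]{Hor03}, while the purely regular term $u_1\otimes v_1$ lies globally in $B^{\min\{\alpha,\beta,\alpha+\beta\},\mathrm{loc}}_{\infty,\infty}(\mathbb{R}^{d+m})$ by the elementary dyadic estimate above; together with Proposition \ref{prop:property besov wavefront} this yields \eqref{Eq: WF-Tensor Product_improved} at once. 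The sharper exponent $\alpha+\beta$ of \eqref{Eq: WF-Tensor Product}, by contrast, is microlocal in nature and does require the diagonal refinement of the second paragraph, since $u_1\otimes v_1$ is only $B^{\gamma}$ globally and upgrades to $B^{\alpha+\beta}$ solely in the diagonal cone. Given that the whole statement is an adaptation of \cite[Prop. B.5]{Junker:2001gx}, I would carry out the decomposition explicitly and refer to that source for the routine dyadic summation.
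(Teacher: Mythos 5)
The paper does not actually prove this proposition: it declares it ``a direct adaptation'' of \cite[Prop.\ B.5]{Junker:2001gx}, in turn based on \cite[Lemma 11.6.3]{Hor97}, and omits the argument. Your overall architecture --- localization, the factorization $\widehat{(\phi u)\otimes(\psi v)}(\xi,\eta)=\widehat{\phi u}(\xi)\,\widehat{\psi v}(\eta)$, the de Morgan case split, and the diagonal-sector dyadic estimate --- is exactly the intended adaptation, and your ``cleaner route'' via Proposition \ref{prop:property besov wavefront}, splitting $\phi u=u_1+u_2$ with $u_1=\mathcal{F}^{-1}[\mathbb{I}_{\Gamma_u}\widehat{\phi u}]$, mirrors the device the paper itself uses in the proofs of Propositions \ref{prop: characterization via pseudodifferential operators} and \ref{prop:property besov wavefront}. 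At points with $\xi_0\neq0$ and $\eta_0\neq0$ both of your arguments close correctly.

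The genuine gap sits in the off-diagonal sectors, i.e.\ at the zero-component directions --- the very part you flag as delicate. The mechanism you invoke there is wrong: you claim the summation over the low block indices of the subordinate factor ``is controlled precisely by the high vanishing order of $\check{\kappa}$ at the origin,'' but $\check{\kappa}(\lambda\zeta)$ is a function of the \emph{total} covector $\zeta=(\xi,\eta)$. On a fixed total shell $|\zeta|\sim\lambda^{-1}$ with $|\xi|\ll|\eta|$ it is $O(1)$, so it damps nothing; the vanishing of $\check{\kappa}$ at $0$ (Definition \ref{Def: Kernel for Besov}) suppresses small $|\zeta|$, not small components of $\zeta$. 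In a cone around $(0,\eta_0)$ the only input for the $u$-factor is the $\mathcal{E}^\prime$ bound $|\widehat{\phi u}(\xi)|\lesssim\langle\xi\rangle^{M}$, and the sector sum $\sum_{i\leq k}2^{iM}2^{-k\beta}\sim 2^{k(M-\beta)}$ yields only $\lambda^{\beta-M}$, not $\lambda^{\gamma}$. This is exactly why, in \eqref{Eq: WF-Tensor Product}, the enlargement $\mathrm{WF}^\alpha_0(u)$ is paired with the \emph{smooth} $WF(v)$: in those sectors it is the rapid decay of the other factor, not $\check{\kappa}$, that rescues the estimate. The same defect propagates to your cleaner route for \eqref{Eq: WF-Tensor Product_improved}: at a point $(x_0,y_0,0,\eta_0)$ the mixed term $u_2\otimes v_1$ is uncontrolled, since $v_1\in B^{\beta,\mathrm{loc}}_{\infty,\infty}$ is not smooth, the classical tensor estimate allows $(x_0,y_0,0,\eta_0)\in(\mathrm{supp}(u_2)\times\{0\})\times WF(v_1)$, and $u_2\otimes v_1$ need not lie in $B^{\gamma,\mathrm{loc}}_{\infty,\infty}$, so it can neither be absorbed into the subtracted representative nor discarded --- note also that the condition $(x_0,\xi_0)\notin WF(u_2)$ is vacuous when $\xi_0=0$. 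A concrete test case showing your estimates cannot close there: $u=\delta^{(N)}$ on $\mathbb{R}$ with $N$ large and $v=\sum_{j\geq0}e^{i2^j y}\in B^0_{\infty,\infty}(\mathbb{R})$; the local means of $u\otimes v$ in any cone around $(0,1)$ scale like $\lambda^{-N-1}$, far off $\lambda^\gamma$. Handling these directions is precisely the non-routine content of the adaptation the paper defers to \cite{Junker:2001gx}, and your plan as written does not supply it. A further, more minor point: your dyadic bound for $u_1\otimes v_1\in B^{\gamma,\mathrm{loc}}_{\infty,\infty}$ suffers a logarithmic loss at the borderline $\alpha=0$ or $\beta=0$ (the sum $\sum_{i\leq k}2^{-i\alpha}\sim k$), which the proposal does not address.
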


\noindent At last, we are in a position to prove a counterpart of H\"ormander's criterion for the product of two distributions within the framework of the Besov wavefront set.

\begin{theorem}\label{th:product theorem 2}
	Let $u,v \in \mathcal{D}^\prime(\Omega)$ where $\Omega\subseteq \mathbb{R}^d$ is any open set. If $\forall (x,\xi)\in\Omega\times\mathbb{R}^d\setminus\{0\}$ there exist $\alpha,\beta \in \mathbb{R}$ with $\alpha + \beta > 0$ such that 
	\[
	(x,\xi) \not \in \mathrm{WF}^\alpha(u)\cup (-WF^\beta(v))
	\]
	then the product $uv\in\mathcal{D}^\prime(\Omega)$ can be defined by
	\[
	u  v = \Delta^\ast(u \otimes v),
	\]
	where $\Delta \colon\Omega\to\Omega\times\Omega$ is the diagonal map. Moreover, calling $\gamma:=\min\{\alpha,\beta\}$,
	\begin{equation}\label{Eq: WF estimate product}
		\mathrm{WF}^\gamma(uv) \subset \{(x,\xi+\eta):(x,\xi) \in \mathrm{WF}^\alpha(u), (x,\eta) \in WF_0(v)\;\textrm{or}\;(x,\xi)\in WF_0(u), (x,\eta) \in \mathrm{WF}^\beta(v)\}.
	\end{equation}
\end{theorem}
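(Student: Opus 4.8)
The plan is to realize the product as the pullback of the tensor product under the diagonal embedding $\Delta\colon\Omega\to\Omega\times\Omega$, $\Delta(x)=(x,x)$, exactly as in the smooth theory of \cite{Hor03}, and then to read off both the existence and the wavefront bound from the two results already at our disposal: the pullback Theorem \ref{th: pullback theorem 2} and the tensor product Proposition \ref{Prop: besov tensor product theorem 2}. First I would record the elementary geometry of $\Delta$: its differential is the diagonal inclusion $\mathbb{R}^d\ni\zeta\mapsto(\zeta,\zeta)$, so that ${}^t d\Delta(x)(\xi,\eta)=\xi+\eta$ and, by \eqref{Eq: normal directions}, the set of normals is $N_\Delta=\{((x,x),(\xi,-\xi)):x\in\Omega,\ \xi\in\mathbb{R}^d\setminus\{0\}\}$. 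With this, the pullback of directions \eqref{eq: def pull back by diffeo} reads $\Delta^\ast S=\{(x,\xi+\eta):((x,x),(\xi,\eta))\in S\}$, which is precisely the operation appearing on the right-hand side of \eqref{Eq: WF estimate product}.

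For the existence of $uv:=\Delta^\ast(u\otimes v)$ I would verify the hypothesis \eqref{eq:condition pull back 2} of Theorem \ref{th: pullback theorem 2}, namely that $N_\Delta$ avoids a Besov wavefront set of positive order. Fix $((x,x),(\xi,-\xi))\in N_\Delta$ and let $\alpha,\beta$ with $\alpha+\beta>0$ be the exponents furnished by the hypothesis at $(x,\xi)$, so that $(x,\xi)\notin\mathrm{WF}^\alpha(u)$ and $(x,-\xi)\notin WF^\beta(v)$. Since $\xi\neq0$ the zero-section contributions in $\mathrm{WF}^\alpha_0(u)$ and $\mathrm{WF}^\beta_0(v)$ are irrelevant, so the tensor estimate \eqref{Eq: WF-Tensor Product} shows that $((x,x),(\xi,-\xi))$ lies in neither $\mathrm{WF}^\alpha_0(u)\times WF(v)$ nor $WF(u)\times\mathrm{WF}^\beta_0(v)$, whence $((x,x),(\xi,-\xi))\notin\mathrm{WF}^{\alpha+\beta}(u\otimes v)$. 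Monotonicity of the wavefront set in the order, Equation \eqref{eq:inclusion besov spaces}, then removes the point from every $\mathrm{WF}^{\alpha'}(u\otimes v)$ with $0<\alpha'\le\alpha+\beta$; collecting a uniform positive order over $N_\Delta$ gives $N_\Delta\cap\mathrm{WF}^{\alpha_0}(u\otimes v)=\emptyset$ for some $\alpha_0>0$, and Theorem \ref{th: pullback theorem 2} produces $\Delta^\ast(u\otimes v)\in\mathcal{D}^\prime(\Omega)$.

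For the wavefront bound I would combine the pullback inclusion with the sharp tensor estimate \eqref{Eq: WF-Tensor Product_improved}. Setting $\gamma:=\min\{\alpha,\beta\}$ and noting that $\alpha+\beta>0$ forces $\gamma=\min\{\alpha,\beta,\alpha+\beta\}\le\alpha+\beta$, the inclusion $\mathrm{WF}^\gamma(\Delta^\ast(u\otimes v))\subseteq\Delta^\ast\mathrm{WF}^\gamma(u\otimes v)$ together with \eqref{Eq: WF-Tensor Product_improved} and the explicit form of $\Delta^\ast$ computed above yields exactly \eqref{Eq: WF estimate product}: a point $(x,\xi+\eta)$ survives only if $(x,\xi)\in\mathrm{WF}^\alpha(u)$ and $(x,\eta)\in WF_0(v)$, or $(x,\xi)\in WF_0(u)$ and $(x,\eta)\in\mathrm{WF}^\beta(v)$.

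The delicate point, and the step I expect to be the main obstacle, is the order bookkeeping in the last paragraph. Theorem \ref{th: pullback theorem 2} delivers the inclusion \eqref{eq: inclusion pull back 2} only at the positive order that guarantees existence, whereas here I need it at $\gamma=\min\{\alpha,\beta\}$, which may well be nonpositive. I would therefore not invoke \eqref{eq: inclusion pull back 2} as a black box but instead rerun its proof at the order $\gamma$: since existence of $\Delta^\ast(u\otimes v)$ has already been secured and $\Delta$ is a diffeomorphism onto the diagonal, the argument via the pseudodifferential characterization of Proposition \ref{prop: characterization via pseudodifferential operators} — an elliptic $A$ at $(\Delta(x),(\xi,\eta))$ with $A(u\otimes v)\in B^{\gamma,\mathrm{loc}}_{\infty,\infty}$ transported to $A_\Delta$ elliptic at $(x,\xi+\eta)$ — never uses positivity of the order and hence remains valid for any $\gamma\in\mathbb{R}$. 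A parallel care is needed in the second paragraph to pass from the pointwise exponents $\alpha,\beta$ of the hypothesis to a single positive order $\alpha_0$ on all of $N_\Delta$; this I would handle microlocally, covering $N_\Delta$ by conic neighborhoods on which fixed exponents work and using monotonicity to align them.
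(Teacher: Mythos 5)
Your proposal is correct and takes essentially the same route as the paper's own proof: write $uv=\Delta^\ast(u\otimes v)$, deduce $N_\Delta\cap\mathrm{WF}^{\alpha+\beta}(u\otimes v)=\emptyset$ from the hypothesis via Proposition \ref{Prop: besov tensor product theorem 2}, and obtain both existence and the order-$\gamma$ bound by combining Theorem \ref{th: pullback theorem 2} with Equation \eqref{Eq: WF-Tensor Product_improved}. The two subtleties you single out (that the inclusion \eqref{eq: inclusion pull back 2} is stated only at the positive order guaranteeing existence while $\gamma=\min\{\alpha,\beta\}$ may be nonpositive, and that the exponents $\alpha,\beta$ are chosen pointwise) are in fact passed over silently in the paper, which applies \eqref{eq: inclusion pull back 2} directly at order $\gamma$ with $\alpha,\beta$ fixed; your proposed repair is in the right spirit, though the assertion that the pullback-inclusion argument ``never uses positivity'' should be tempered, since the step $(Aw)\circ\Delta\in B^{\gamma,\mathrm{loc}}_{\infty,\infty}$ is a trace-type estimate along the diagonal, which for $\gamma\leq 0$ requires a genuinely microlocal justification rather than H\"older restriction.
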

\begin{proof}
	Observe that, per hypothesis, there exist $\alpha,\beta \in \mathbb{R}$ with $\alpha +\beta > 0$ such that
	\[
	\mathrm{WF}^{\alpha+\beta}(u\otimes v) \cap N_{\Delta} = \emptyset,
	\]
	where $N_{\Delta}=\{(x,x,\xi,-\xi)\}$ is the set of normal directions of the diagonal map as defined in Equation \eqref{Eq: normal directions}. Hence, on account of Theorem \ref{th: pullback theorem 2} combined with Proposition \ref{Prop: besov tensor product theorem 2}, Equation \eqref{Eq: WF-Tensor Product_improved} in particular, there exists $\Delta^\ast(u\otimes v) \in \mathcal{D}^\prime(\mathbb{R}^d)$ and
	\begin{gather*}
	\mathrm{WF}^\gamma(\Delta^\ast(u\otimes v)) \subset \Delta^\ast \mathrm{WF}^\gamma(u\otimes v) =\\ =\{(x,\xi+\eta):(x,\xi) \in \mathrm{WF}^\alpha(u), (x,\eta) \in WF_0(v)\;\textrm{or}\;(x,\xi)\in WF_0(u), (x,\eta) \in \mathrm{WF}^\beta(v)\}.
	\end{gather*}
This concludes the proof.
\end{proof}

\begin{remark}\label{Rem: Classical Young Theorem}
	Observe that, if we consider $u\in B^{\alpha,\textrm{loc}}_{\infty\infty}(\mathbb{R}^d)$ and $v\in B^{\beta,\textrm{loc}}_{\infty\infty}(\mathbb{R}^d)$ with $\alpha+\beta>0$, it descends that $\mathrm{WF}^\alpha(u)=\mathrm{WF}^\beta(v)=\emptyset$. Hence, on account of Theorem \ref{th:product theorem 2}, there exists $uv\in\mathcal{D}^\prime(\mathbb{R}^d)$ and $\mathrm{WF}^\gamma(uv)=\emptyset$ with $\gamma=\min\{\alpha,\beta\}$. This is nothing but the statement of the renown Young's theorem on the product of two H\"older distributions, see \cite{BCD11, DRS21}.
\end{remark}

To conclude this section, we discuss an application of Theorem \ref{th:product theorem 2} which is of relevance in many concrete scenarios. More precisely, we consider a continuous $\mathcal{K}\colon C_0^\infty(\Omega^\prime) \to \mathcal{D}^\prime(\Omega)$ with kernel $K \in \mathcal{D}^\prime(\Omega\times\Omega^\prime)$. Given $u \in\mathcal{E}^\prime(\Omega^\prime)$, we investigate the existence of $\mathcal{K}u$ and we seek to establish a bound on the associated Besov wavefront set. As a preliminary step, we need to prove two ancillary results.

\begin{corollary}\label{Cor: projection on Besov spaces}
	Let $\Omega\times\Omega^\prime\subseteq\mathbb{R}^d\times\mathbb{R}^m$ and let $v\in B^{\alpha,\mathrm{loc}}_{\infty,\infty}(\Omega\times\Omega^\prime)$, $\alpha\in\mathbb{R}$. Calling $\pi:\Omega\times\Omega^\prime\to\Omega$ the projection map on the first factor, it holds that $\pi_\ast v\in B^{\alpha,\mathrm{loc}}_{\infty,\infty}(\Omega)$, $\pi_*$ being the push-forward map.
\end{corollary}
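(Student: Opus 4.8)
The plan is to reduce the statement, via the local means characterization of Besov spaces in Definition \ref{def:definition Besov local means}, to an estimate in which pushing forward along $\pi$ becomes an integration over the fibre of the $(d+m)$-dimensional local means of $v$. Since membership in $B^{\alpha,\mathrm{loc}}_{\infty,\infty}(\Omega)$ is a local property, I would fix an arbitrary $\varphi\in\mathcal{D}(\Omega)$ and prove that $\varphi\,\pi_\ast v\in B^\alpha_{\infty,\infty}(\Omega)$. First I would pick $\theta\in\mathcal{D}(\Omega^\prime)$ equal to $1$ on a neighbourhood of $\pi_{\Omega^\prime}\!\left(\mathrm{supp}(v)\cap(\mathrm{supp}(\varphi)\times\Omega^\prime)\right)$; this set is compact exactly when $\pi$ restricts to a proper map on $\mathrm{supp}(v)$, which is also the condition making $\pi_\ast v$ well defined. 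With this choice one has $\varphi\,\pi_\ast v=\pi_\ast V$ with $V:=(\varphi\otimes\theta)\,v$, and since $\varphi\otimes\theta\in\mathcal{D}(\Omega\times\Omega^\prime)$ the hypothesis $v\in B^{\alpha,\mathrm{loc}}_{\infty,\infty}(\Omega\times\Omega^\prime)$ gives $V\in B^\alpha_{\infty,\infty}(\mathbb{R}^{d+m})$.

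The heart of the argument is a transfer of admissible kernels from $\mathbb{R}^{d+m}$ to $\mathbb{R}^d$ obtained by integrating out the fibre. Starting from any $K\in\mathscr{B}_{\lfloor\alpha\rfloor}(\mathbb{R}^{d+m})$ and any $\underline{K}\in\mathcal{D}(B(0,1)\subset\mathbb{R}^{d+m})$ with $\check{\underline{K}}(0)\neq 0$, I would set
\begin{equation*}
\widetilde{\kappa}(y):=\int_{\mathbb{R}^m}K(y,w)\,dw,\qquad \underline{\widetilde{\kappa}}(y):=\int_{\mathbb{R}^m}\underline{K}(y,w)\,dw,\qquad y\in\mathbb{R}^d,
\end{equation*}
both of which lie in $\mathcal{D}(B(0,1)\subset\mathbb{R}^d)$, since integrating in $w$ cannot enlarge the support in $y$. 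A one-line computation gives $\check{\widetilde{\kappa}}(\xi)=(2\pi)^m\check{K}(\xi,0)$ and likewise $\check{\underline{\widetilde{\kappa}}}(\xi)=(2\pi)^m\check{\underline{K}}(\xi,0)$, from which I would read off that $\widetilde{\kappa}\in\mathscr{B}_{\lfloor\alpha\rfloor}(\mathbb{R}^d)$: the vanishing moments $(\partial^\beta\check{\widetilde{\kappa}})(0)=0$ for $|\beta|\leq\lfloor\alpha\rfloor$ descend from those of $K$ at the origin (take the $(d+m)$-multi-index $(\beta,0)$), the non-degeneracy of $\check{\widetilde{\kappa}}$ on an annulus is inherited by restricting the annulus of $K$ to the slice $\{\xi^\prime=0\}$, where $|(\xi,0)|=|\xi|$, and $\check{\underline{\widetilde{\kappa}}}(0)=(2\pi)^m\check{\underline{K}}(0,0)\neq 0$. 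Thus $\widetilde{\kappa}$ and $\underline{\widetilde{\kappa}}$ are admissible kernels for the local means of Definition \ref{def:definition Besov local means} on $\mathbb{R}^d$.

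I would then conclude through the elementary fibre-integration identities
\begin{equation*}
\int_{\mathbb{R}^m}K^\lambda_{(x,x^\prime)}\,dx^\prime=\widetilde{\kappa}^\lambda_x\otimes 1,\qquad \int_{\mathbb{R}^m}\underline{K}_{(x,x^\prime)}\,dx^\prime=\underline{\widetilde{\kappa}}_x\otimes 1,
\end{equation*}
which follow from the substitution $w=\lambda^{-1}(z^\prime-x^\prime)$ in the $x^\prime$-integral. Because $\theta\equiv 1$ on the fibre part of $\mathrm{supp}(V)$, testing $\pi_\ast V$ against $\widetilde{\kappa}^\lambda_x$ and commuting the (compactly supported) $x^\prime$-integral with $V$ gives
\begin{equation*}
(\pi_\ast V)(\widetilde{\kappa}^\lambda_x)=V(\widetilde{\kappa}^\lambda_x\otimes 1)=\int_{E}V(K^\lambda_{(x,x^\prime)})\,dx^\prime,
\end{equation*}
where $E\subset\mathbb{R}^m$ is the set of those $x^\prime$ for which $K^\lambda_{(x,x^\prime)}$ meets $\mathrm{supp}(V)$, whose measure is bounded uniformly in $\lambda\in(0,1]$ since $\mathrm{supp}(K)\subset B(0,1)$. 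Definition \ref{def:definition Besov local means} applied to $V\in B^\alpha_{\infty,\infty}(\mathbb{R}^{d+m})$ gives $|V(K^\lambda_{(x,x^\prime)})|\lesssim\lambda^\alpha$ uniformly in $(x,x^\prime)$, hence $|(\pi_\ast V)(\widetilde{\kappa}^\lambda_x)|\lesssim|E|\,\lambda^\alpha\lesssim\lambda^\alpha$; the analogous computation with $\underline{K}$ yields $|(\pi_\ast V)(\underline{\widetilde{\kappa}}_x)|\lesssim 1$. Both bounds are uniform in $x$, so the norm of Definition \ref{def:definition Besov local means} is finite and $\varphi\,\pi_\ast v=\pi_\ast V\in B^\alpha_{\infty,\infty}(\Omega)$; as $\varphi$ was arbitrary this is the claim.

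The step I expect to be the main obstacle is conceptual rather than computational: the Besov norm on the product is controlled by the \emph{isotropic} dilations $K^\lambda_{(x,x^\prime)}$, whereas $\pi_\ast$ collapses the fibre variables entirely. Reconciling the two scalings is exactly what the fibre-integration identity achieves — it turns an isotropic $(d+m)$-dimensional local mean, integrated over the fibre, into a genuine $d$-dimensional local mean of $\pi_\ast v$ — while the compactness of the fibre support (equivalently, properness of $\pi$ on $\mathrm{supp}(v)$) is what keeps the $x^\prime$-integral finite and prevents any loss of regularity.
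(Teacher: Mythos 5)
Your proof is correct, and it reaches the conclusion by a genuinely different kernel transfer than the paper's. The paper's argument runs in the direction $d\to d+m$: it reduces WLOG to $v\in\mathcal{E}^\prime$, tests $\pi_\ast v$ against an arbitrary admissible $d$-dimensional kernel, and rewrites $(\pi_\ast v)(\kappa^\lambda_{x^\prime})=v\left((\kappa\otimes 1)^\lambda_{(x^\prime,y^\prime)}\right)$, asserting that $\kappa\otimes 1\in\mathscr{B}_{\lfloor\alpha\rfloor}(\Omega\times\Omega^\prime)$ -- an assertion which, taken literally, fails (the tensor with $1$ is not compactly supported, so it is not an element of $\mathcal{D}(B(0,1))$ in $\mathbb{R}^{d+m}$, and there is also a harmless scaling mismatch $\kappa^\lambda_{x^\prime}\otimes 1=\lambda^m(\kappa\otimes 1)^\lambda_{(x^\prime,y^\prime)}$), and which implicitly needs exactly the cutoff $\theta$ and compact-support bookkeeping you make explicit. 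Your argument runs in the direction $d+m\to d$: starting from a genuinely admissible $(d+m)$-dimensional kernel $K$, you integrate out the fibre to obtain $\widetilde{\kappa}$, verify its admissibility via $\check{\widetilde{\kappa}}(\xi)=(2\pi)^m\check{K}(\xi,0)$ (moments, annulus nondegeneracy on the slice $\{\xi^\prime=0\}$, nonvanishing at the origin), and use the fibre-integration identity plus the uniform bound on $|E|$ to transfer the estimate. This buys rigour at every step -- all kernels stay inside $\mathscr{B}_{\lfloor\alpha\rfloor}$, the interchange of the pairing with the $x^\prime$-integral is justified by the compactness you arranged, and the properness hypothesis needed for $\pi_\ast v$ to exist is stated rather than tacit -- at the price of one ingredient you should cite explicitly: since Definition \ref{def:definition Besov local means} is formulated for a given choice of kernels, your verification for the particular pair $(\widetilde{\kappa},\underline{\widetilde{\kappa}})$ suffices only on account of the remark following that definition, namely that different admissible choices of $\kappa$ and $\underline{\kappa}$ yield equivalent norms.
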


\begin{proof}
	Without loss of generality, let us consider $v\in B^{\alpha,\mathrm{loc}}_{\infty,\infty}(\Omega\times\Omega^\prime)\cap \mathcal{E}^\prime(\Omega\times\Omega^\prime)$. We recall that 
	\[
	(\pi_\ast v)(\phi) := v(\phi \otimes 1),
	\]
	where $\phi \in\mathcal{E}(\Omega)$. Then, for any  $\kappa \in \mathscr{B}_{\lfloor\alpha\rfloor}(\Omega)$ as per Definition \ref{Def: Kernel for Besov}, $x^\prime \in \Omega$, $\lambda \in (0,1]$,
	\[
	\lvert (\pi_\ast v)(\kappa^\lambda_{x^\prime}) \rvert = \lvert v((\kappa \otimes 1)^\lambda_{(x^\prime,y^\prime)}) \rvert \lesssim \lambda^\alpha.
	\]
Observe that $\kappa \otimes 1 \in \mathscr{B}_{\lfloor\alpha\rfloor}(\Omega\times\Omega^\prime)$. At the same time, for any $\underline{\kappa} \in \mathcal{D}(B(0,1))$ with $\check{\underline{\kappa}}(0)\neq 0$, $x^\prime \in \Omega$, $\lambda \in (0,1]$, it holds true
	\[
	\lvert (\pi_\ast v)(\underline{\kappa}_{x^\prime}) \rvert = \lvert v((\underline{\kappa} \otimes 1)_{(x^\prime,y^\prime)}) \rvert \lesssim \lambda^\alpha,
	\]
which concludes the proof.
\end{proof}

\begin{proposition}\label{Prop: ancillary}
	Let $v\in\mathcal{E}^\prime(\Omega\times\Omega^\prime)$, where $\Omega\times\Omega^\prime\subseteq\mathbb{R}^d\times\mathbb{R}^m$ is an open subset. Assume that the projection map on the first factor $\pi:\Omega\times\Omega^\prime\to\Omega$ is proper on $\textrm{supp}(v)$. Then it holds that, for all $\alpha\in\mathbb{R}$
    $$\mathrm{WF}^\alpha(\pi_*v)\subset\{(x,\xi)\in(\Omega\times\mathbb{R}^d\setminus\{0\})\;|\;\exists y\in\textrm{supp}(v)\;\textrm{for which}\;(x,y,\xi,0)\in\mathrm{WF}^\alpha(v)\},$$
    where $\pi_*$ is the push-forward map.
\end{proposition}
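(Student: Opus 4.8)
The plan is to reduce the statement to the classical push-forward theorem for the smooth wavefront set, by splitting $v$ into a piece that is microlocally Besov-regular along the relevant directions and a remainder that is smooth there. The engine of the argument is Proposition~\ref{prop:property besov wavefront}, which characterises $\mathrm{WF}^\alpha$ through the smooth wavefront set of Besov-corrected distributions, together with Corollary~\ref{Cor: projection on Besov spaces}, which ensures that $\pi_\ast$ preserves local Besov regularity.

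Concretely, I would fix $(x_0,\xi_0)\in\Omega\times(\mathbb{R}^d\setminus\{0\})$ not belonging to the right-hand side, so that $(x_0,y,\xi_0,0)\notin\mathrm{WF}^\alpha(v)$ for every $y\in\textrm{supp}(v)$; for $y$ outside $\textrm{supp}(v)$ the same holds trivially by Remark~\ref{Rem: Besov WF for smooth functions}, since there $v$ vanishes near $(x_0,y)$. The fibre $F:=\{y:(x_0,y)\in\textrm{supp}(v)\}$ is compact because $v\in\mathcal{E}^\prime$. For each $y\in F$, Proposition~\ref{prop: characterization via pseudodifferential operators} furnishes a properly supported $A_y\in\Psi^0(\Omega\times\Omega^\prime)$, elliptic at $(x_0,y,\xi_0,0)$ (a genuine cosphere point, as $(\xi_0,0)\neq 0$), with $A_y v\in B^{\alpha,\mathrm{loc}}_{\infty,\infty}$. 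Extracting a finite subcover of the compact set $\{(x_0,y,\xi_0,0):y\in F\}$ by the elliptic sets $\mathrm{Ell}(A_y)$ and setting $A:=\sum_j A_{y_j}^\ast A_{y_j}$, I obtain a single properly supported $A\in\Psi^0$ which is elliptic on an open conic neighbourhood of $\{(x_0,y,\xi_0,0):y\in F\}$ and satisfies $Av\in B^{\alpha,\mathrm{loc}}_{\infty,\infty}$, using Proposition~\ref{Prop: WF Properties} and Theorem~\ref{th: boundedness on besov spaces}.

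Next I would invoke Proposition~\ref{prop: parametrix} to produce a microlocal parametrix $Q\in\Psi^0$ on a closed cone $\mathscr{C}$ through these points, giving $QA=I-R$ with $WF^\prime(R)\cap\mathscr{C}=\emptyset$. Writing $w:=QAv=v-Rv$ and $\theta:=Rv$, Theorem~\ref{th: boundedness on besov spaces} yields $w\in B^{\alpha,\mathrm{loc}}_{\infty,\infty}$, while smooth pseudolocality, Equation~\eqref{eq: pseudolocality smooth wavefront set}, gives $WF(\theta)\subseteq WF^\prime(R)\cap WF(v)$. This inclusion is the crucial point: it confines $WF(\theta)$ to the directions already carried by $v$, so that for \emph{every} $y$ one has $(x_0,y,\xi_0,0)\notin WF(\theta)$ --- for $y\in F$ because the point avoids $WF^\prime(R)$, and for $y\notin F$ because $(x_0,y)\notin\textrm{singsupp}(v)$. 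Consequently the push-forward theorem for the smooth wavefront set, see \cite[Ch.~8]{Hor03}, applies to the compactly supported $\theta$ and yields $(x_0,\xi_0)\notin WF(\pi_\ast\theta)$.

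Finally, Corollary~\ref{Cor: projection on Besov spaces} shows $\pi_\ast w\in B^{\alpha,\mathrm{loc}}_{\infty,\infty}(\Omega)$, and since $\pi_\ast v-\pi_\ast w=\pi_\ast\theta$, I can read off $(x_0,\xi_0)\notin WF(\pi_\ast v-\pi_\ast w)$ with $\pi_\ast w$ Besov-regular; Proposition~\ref{prop:property besov wavefront} then gives $(x_0,\xi_0)\notin\mathrm{WF}^\alpha(\pi_\ast v)$, the desired inclusion. I expect the main obstacle to be the globalisation step: passing from the pointwise microlocal regularity supplied at each $y\in F$ to a single operator $A$ whose parametrix remainder is smooth along $(\xi_0,0)$ over the whole fibre, while keeping all operators properly supported so that $w$ and $\theta$ stay compactly supported. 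The observation that smooth pseudolocality already pins $WF(\theta)$ inside $WF(v)$ is what removes the otherwise delicate bookkeeping about support enlargement under $R$.
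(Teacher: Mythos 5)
Your proof is correct, but it takes a genuinely different route from the paper's. The paper argues top-down: by Proposition~\ref{prop:property besov wavefront}, $(x,\xi)\in\mathrm{WF}^\alpha(\pi_*v)$ iff $(x,\xi)\in WF(\pi_*v-u)$ for \emph{all} compactly supported Besov $u$ on $\Omega$; it then restricts to correctors of the form $u=\pi_*\tilde{u}$ with $\tilde{u}\in B^\alpha_{\infty,\infty}(\Omega\times\Omega^\prime)\cap\mathcal{E}^\prime(\Omega\times\Omega^\prime)$ (legitimate by Corollary~\ref{Cor: projection on Besov spaces}), applies \cite[Thm. 8.2.12]{Hor03} to each difference $\pi_*(v-\tilde{u})$, and concludes ``by the arbitrariness of $\tilde{u}$''. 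You instead work bottom-up and constructively: starting from the $\Psi$DO characterization of Proposition~\ref{prop: characterization via pseudodifferential operators}, you build a single properly supported $A=\sum_j A_{y_j}^\ast A_{y_j}\in\Psi^0$ elliptic along the whole compact fibre $\{(x_0,y,\xi_0,0):y\in F\}$ (the nonnegativity of $|\sigma_0(A_{y_j})|^2$ makes the finite-cover summation work), then use the parametrix of Proposition~\ref{prop: parametrix} to produce one explicit splitting $v=w+\theta$ with $w=QAv$ Besov and $\theta=Rv$ microlocally trivial along $(\xi_0,0)$, and apply the smooth push-forward theorem once, to $\theta$. What each approach buys: the paper's proof is considerably shorter, but its final step is delicate --- the point $y$ furnished by \cite[Thm. 8.2.12]{Hor03} depends a priori on $\tilde{u}$, so extracting a single $y\in\mathrm{supp}(v)$ with $(x,y,\xi,0)\in\mathrm{WF}^\alpha(v)$ implicitly requires a compactness/uniformity argument that the paper leaves tacit. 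Your finite subcover over the compact fibre $F$ supplies exactly that uniformity explicitly, and your observation that pseudolocality confines $WF(\theta)$ inside $WF'(R)\cap WF(v)$ correctly disposes of the support enlargement under the properly supported remainder $R$. The routine localization steps you elide (e.g., that a compactly supported element of $B^{\alpha,\mathrm{loc}}_{\infty,\infty}$ is globally Besov, so Theorem~\ref{th: boundedness on besov spaces} applies to $A_{y_j}^\ast$ and $Q$) are of the same kind the paper itself glosses in the proof of Proposition~\ref{prop: characterization via pseudodifferential operators}, so this is not a gap.
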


\begin{proof}
	Since $v$ is compactly supported and since the action of $\pi_*$ is tantamount to a partial evaluation against the constant function $1\in C^\infty(\Omega^\prime)$, {\it i.e.}, $\pi_*(v)(\phi)=v(\phi\otimes 1)$ for all $\phi\in\mathcal{E}(\Omega)$, then $\pi_*(v)\in\mathcal{E}^\prime(\Omega)$. On account of Proposition \ref{prop:property besov wavefront}, a pair $(x,\xi)\in\mathrm{WF}^\alpha(\pi_*v)$ if and only if $(x,\xi)\in WF(\pi_*v - u)$ for all $u\in B^\alpha_{\infty,\infty}(\Omega)\cap\mathcal{E}^\prime(\Omega)$. Here we can restrict the attention to compactly supported elements lying in $B^\alpha_{\infty,\infty}(\Omega)$ since $\pi_*v\in\mathcal{E}^\prime(\Omega)$.
	
	In turn, on account of Corollary \ref{Cor: projection on Besov spaces}, we can replace $u$ by $\pi_*(\tilde{u})$, where $\tilde{u}\in B^\alpha_{\infty,\infty}(\Omega\times\Omega^\prime)\cap\mathcal{E}^\prime(\Omega\times\Omega^\prime)$. In other words it turns out that 
	$$(x,\xi)\in\mathrm{WF}^\alpha(\pi_*v)\Longleftrightarrow (x,\xi)\in WF(\pi_*(v-\tilde{u}))\quad\forall\tilde{u}\in B^\alpha_{\infty,\infty}(\Omega\times\Omega^\prime)\cap\mathcal{E}^\prime(\Omega\times\Omega^\prime).$$
	Applying \cite[Thm. 8.2.12]{Hor03}, it descends that $$WF(\pi_*(v-\tilde{u}))\subseteq\{(x,\xi)\;|\;\exists y\in\textrm{supp}(v-\tilde{u})\;\textrm{for which}\; (x,y,\xi,0)\in WF(v-\tilde{u})\}.$$
	Yet, on account of the arbitrariness of $\tilde{u}$ and using Proposition \ref{prop:property besov wavefront}, it descends that $y\in\mathrm{supp}(v)$ and $(x,y,\xi,0)\in\mathrm{WF}^\alpha(v)$, which is nothing but the sought statement. 
\end{proof}

\noindent We can prove the main result of this part of our work and we divide it in two statements.

\begin{theorem}\label{Thm: WF and Kernels - Dappia version}
	Let $\Omega \subseteq \mathbb{R}^n, \Omega^\prime \subseteq \mathbb{R}^m$ be open subsets, $K \in \mathcal{D}^\prime(\Omega\times\Omega^\prime)$ be the kernel of $\mathcal{K}\colon C_0^\infty(\Omega^\prime) \to \mathcal{D}^\prime(\Omega)$. Then, for all $\alpha\in\mathbb{R}$ and for all $u\in C^\infty_0(\Omega^\prime)$,
	$$\mathrm{WF}^\alpha(\mathcal{K}(u))\subset\{(x,\xi)\;|\;\exists y\in\textrm{supp}(u)\;\textrm{for which}\;(x,y,\xi,0)\in\mathrm{WF}^\alpha(K)\}.$$
\end{theorem}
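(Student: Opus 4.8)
The plan is to realise $\mathcal{K}(u)$ as a push-forward of a product and then to feed this into the two ancillary results, Corollary \ref{Cor: projection on Besov spaces} and Proposition \ref{Prop: ancillary}. Writing $\pi\colon\Omega\times\Omega^\prime\to\Omega$ for the projection onto the first factor and recalling the Schwartz kernel relation $\mathcal{K}(u)(\psi)=K(\psi\otimes u)$ for $\psi\in C_0^\infty(\Omega)$, one checks at once that $\mathcal{K}(u)=\pi_\ast\big(K\,(1\otimes u)\big)$, the product $K\,(1\otimes u)$ being well defined because $1\otimes u\in C^\infty(\Omega\times\Omega^\prime)$. Since $\mathrm{WF}^\alpha$ is a local notion on open subsets, cf. Remark \ref{Rem: Besov WF on open subset}, I will work throughout over $\Omega\times\Omega^\prime$ without further comment.

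First I would address the fact that $K\,(1\otimes u)$ is compactly supported only in the $\Omega^\prime$-variable, so that it need not lie in $\mathcal{E}^\prime(\Omega\times\Omega^\prime)$ and Proposition \ref{Prop: ancillary} does not apply verbatim. To bypass this, fix an arbitrary $(x_0,\xi_0)\in\mathrm{WF}^\alpha(\mathcal{K}(u))$ and choose $\chi\in\mathcal{D}(\Omega)$ with $\chi\equiv 1$ on a neighbourhood of $x_0$. By Remark \ref{Rem: Besov WF for smooth functions} the term $(1-\chi)\mathcal{K}(u)$ carries no Besov wavefront set over $x_0$, whence the sub-additivity of $\mathrm{WF}^\alpha$ gives $(x_0,\xi_0)\in\mathrm{WF}^\alpha(\chi\,\mathcal{K}(u))$. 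A direct computation with the kernel relation then yields $\chi\,\mathcal{K}(u)=\pi_\ast(v)$ with $v:=K\,(\chi\otimes u)$, and now $\chi\otimes u\in C_0^\infty(\Omega\times\Omega^\prime)$ forces $v\in\mathcal{E}^\prime(\Omega\times\Omega^\prime)$ with $\mathrm{supp}(v)\subseteq\mathrm{supp}(\chi)\times\mathrm{supp}(u)$, so that $\pi$ is proper on $\mathrm{supp}(v)$ as required.

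The last ingredient is a bound on $\mathrm{WF}^\alpha(v)$ in terms of $\mathrm{WF}^\alpha(K)$. As multiplication by the smooth, compactly supported function $\chi\otimes u$ is a properly supported element of $\Psi^0(\Omega\times\Omega^\prime)$, whose Schwartz kernel is concentrated on the diagonal, Proposition \ref{prop: pseudolocality} gives $\mathrm{WF}^\alpha(v)\subseteq\mathrm{WF}^\alpha(K)$. Combining this and the support bound with Proposition \ref{Prop: ancillary} applied to $v$, the membership $(x_0,\xi_0)\in\mathrm{WF}^\alpha(\pi_\ast v)$ produces a point $y$ with $(x_0,y)\in\mathrm{supp}(v)$ — hence $y\in\mathrm{supp}(u)$ — and $(x_0,y,\xi_0,0)\in\mathrm{WF}^\alpha(v)\subseteq\mathrm{WF}^\alpha(K)$, which is precisely the claimed inclusion once $(x_0,\xi_0)$ is allowed to range over $\mathrm{WF}^\alpha(\mathcal{K}(u))$. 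The step demanding the most care is the localisation identity $\chi\,\mathcal{K}(u)=\pi_\ast\big(K(\chi\otimes u)\big)$ together with checking that the compact-support and properness hypotheses of Proposition \ref{Prop: ancillary} are genuinely in force; the pseudolocality estimate and the final assembly are then routine.
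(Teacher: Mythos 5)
Your proposal is correct, and its skeleton coincides with the paper's: both realise $\mathcal{K}(u)$ as $\pi_\ast(K\cdot(1\otimes u))$ and then invoke Proposition \ref{Prop: ancillary} to trade the push-forward for the covector $(x,y,\xi,0)$. You diverge in two places, both to your advantage. First, where the paper assumes ``for the time being'' that $K\in\mathcal{E}^\prime(\Omega\times\Omega^\prime)$ and disposes of the general case with the one-line remark that one may localize around each point of the wavefront set, you carry out that localization explicitly: the cutoff $\chi$, the identity $\chi\,\mathcal{K}(u)=\pi_\ast\big(K(\chi\otimes u)\big)$, and the verification that $v=K(\chi\otimes u)$ is compactly supported with $\pi$ proper on $\operatorname{supp}(v)$, which is exactly what Proposition \ref{Prop: ancillary} needs; your version is the more complete argument on this point. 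Second, for the intermediate bound on the wavefront set of the product, the paper applies Theorem \ref{th:product theorem 2} (noting $\mathrm{WF}^\alpha(1\otimes u)=\emptyset$), which delivers the constraint $y\in\operatorname{supp}(u)$ directly inside the wavefront estimate, whereas you view multiplication by $\chi\otimes u$ as a properly supported operator in $\Psi^0(\Omega\times\Omega^\prime)$ and use the microlocality of Proposition \ref{prop: pseudolocality} to get $\mathrm{WF}^\alpha(v)\subseteq\mathrm{WF}^\alpha(K)$, recovering $y\in\operatorname{supp}(u)$ separately from the support bound on $v$. The two routes are logically equivalent here; the product theorem is heavier machinery than is needed for a smooth factor, while your pseudolocality argument is the more economical and stays entirely within the $\Psi$DO characterization of Proposition \ref{prop: characterization via pseudodifferential operators}. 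No gaps: the reduction $(x_0,\xi_0)\in\mathrm{WF}^\alpha(\chi\,\mathcal{K}(u))$ via sub-additivity and Remark \ref{Rem: Besov WF for smooth functions} is sound, and the kernel identity you flag as the delicate step checks out, since $(\chi\,\mathcal{K}(u))(\psi)=K\big((\chi\otimes u)(\psi\otimes 1)\big)=\pi_\ast(v)(\psi)$ with the pairing against $\psi\otimes 1$ legitimate because $v$ is compactly supported.
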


\begin{proof}
	Let $\pi:\Omega\times\Omega^\prime\to\Omega$ be the projection map on the second factor and assume for the time being that $K\in\mathcal{E}^\prime(\Omega\times\Omega^\prime)$. It descends that $\mathcal{K}(u)=\pi_*(K\cdot(1\otimes u))$, where $\pi_*$ is the push-forward along $\pi$ while $\cdot$ stands for the product of distributions. Observe that, since $\mathrm{WF}^\alpha(1\otimes u)=\emptyset$ for all $\alpha\in\mathbb{R}$ then, the pointwise product is well-defined on account of Theorem \ref{th:product theorem 2}. The latter also entails that, for all $\alpha\in\mathbb{R}$,
	$$\mathrm{WF}^\alpha(K\cdot(1\otimes u))\subset\{(x,y,\xi,\eta)\in \mathrm{WF}^\alpha(K)\;|\;y\in\textrm{supp}(u)\}.$$
	At this stage, observing that by localizing the underlying distribution around each point of the wavefront set, we can apply Proposition \ref{Prop: ancillary}. It descends
	$$\mathrm{WF}^\alpha(\pi_*(K\cdot(1\otimes u)))\subset\{(x,\xi)\;|\;\exists y\in\textrm{supp}(u)\;\textrm{for which}\;(x,y,\xi,0)\in\mathrm{WF}^\alpha(K)\},$$
	which concludes the proof.
\end{proof}

\noindent At last we generalize the preceding theorem so to investigate under which circumstances $u$ can be taken to be an element lying $\mathcal{E}^\prime(\Omega^\prime)$ and with a non empty wavefront set. 

\begin{theorem}\label{Thm: WF and Kernels - Dappia version extended}
	Let $\Omega \subseteq \mathbb{R}^n, \Omega^\prime \subseteq \mathbb{R}^m$ be open subsets, $K \in \mathcal{D}^\prime(\Omega\times\Omega^\prime)$ be the kernel of $\mathcal{K}\colon C_0^\infty(\Omega^\prime) \to \mathcal{D}^\prime(\Omega)$ and $u \in\mathcal{E}^\prime(\Omega^\prime)$. In addition, for any $\alpha\in\mathbb{R}$, we call 
	\begin{equation}\label{Eq: Projected WF}
	-\mathrm{WF}^{\alpha}_{\Omega^\prime}(K):=\{(y,\eta)\in\Omega^\prime\times\mathbb{R}^m\setminus\{0\}:\;\exists x\in\Omega\;|\; (x,y,0,-\eta) \in \mathrm{WF}^\alpha(K)\}.
	\end{equation}
	If for any $(y,\eta) \in \Omega^\prime\times(\mathbb{R}^m\setminus\{0\})$ there exists $\alpha_1,\alpha_2 \in \mathbb{R}$ with $\alpha_1 + \alpha_2 > 0$ such that
	\begin{equation}\label{Eq: kernel condition}
		(y,\eta) \not \in -\mathrm{WF}^{\alpha_1}_{\Omega^\prime}(K) \cup \mathrm{WF}^{\alpha_2}(u),
	\end{equation}
	then there exists $\mathcal{K}u \in \mathcal{D}^\prime(\Omega)$. Furthermore, if $\alpha\leq\alpha_1+\alpha_2$, then
	$$WF^\alpha(\mathcal{K}(u))\subseteq\{(x,\xi)\in\Omega\times(\mathbb{R}^n\setminus\{0\}):\exists (y,\eta) \in \Omega^\prime \times (\mathbb{R}^m\setminus\{0\}) | (x,y,\xi,\eta) \in X\cup Y\},$$
	where 
	$$X:=\{(x,y,\xi,\eta)\in\mathrm{WF}^{\alpha_1}(K)\;|\;(y,-\eta)\in WF_0(u)\},\quad Y=\{(x,y,\xi,\eta)\in WF(K)\;|\;(y,-\eta)\in\mathrm{WF}^{\alpha_2}(u)\}.$$
\end{theorem}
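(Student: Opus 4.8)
The plan is to follow verbatim the architecture of the proof of Theorem \ref{Thm: WF and Kernels - Dappia version}, the only difference being that now the second factor carries a genuine singularity. First I would reduce to the case $K\in\mathcal{E}^\prime(\Omega\times\Omega^\prime)$: since the Besov wavefront set is a local notion (Definition \ref{def:besov wavefront set} is tested against compactly supported $\phi$) and $u\in\mathcal{E}^\prime(\Omega^\prime)$, I localize $K$ by a cutoff equal to $1$ near the points of interest, which makes the projection $\pi\colon\Omega\times\Omega^\prime\to\Omega$ proper on the relevant support. With these reductions one writes $\mathcal{K}u=\pi_\ast\big(K\cdot(1\otimes u)\big)$, where $\cdot$ denotes the product of distributions and $\pi_\ast$ the push-forward. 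The whole strategy is then to produce the product via Theorem \ref{th:product theorem 2}, estimate its wavefront set, and push it forward through Proposition \ref{Prop: ancillary}.

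To show $K\cdot(1\otimes u)$ is well defined I would first record, using the tensor product estimate \eqref{Eq: WF-Tensor Product_improved} of Proposition \ref{Prop: besov tensor product theorem 2} together with $\mathrm{WF}^a(1)=\emptyset$ for all $a\in\mathbb{R}$, that $\mathrm{WF}^{\alpha_2}(1\otimes u)\subseteq\{(x,y,0,\eta):(y,\eta)\in\mathrm{WF}^{\alpha_2}(u)\}$ and likewise $WF_0(1\otimes u)\subseteq\{(x,y,0,\eta):(y,\eta)\in WF_0(u)\}$; the crucial structural feature is that the $\Omega$-covector always vanishes. Applying Theorem \ref{th:product theorem 2} to the pair $K$ (order $\alpha_1$) and $1\otimes u$ (order $\alpha_2$) then requires excluding directions in $\mathrm{WF}^{\alpha_1}(K)\cap\big(-\mathrm{WF}^{\alpha_2}(1\otimes u)\big)$ with $\alpha_1+\alpha_2>0$. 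The only dangerous directions have $\xi=0$ with $(y,-\eta)\in\mathrm{WF}^{\alpha_2}(u)$ and simultaneously $(x,y,0,\eta)\in\mathrm{WF}^{\alpha_1}(K)$, that is $(y,-\eta)\in-\mathrm{WF}^{\alpha_1}_{\Omega^\prime}(K)$ in the notation \eqref{Eq: Projected WF}. Evaluating hypothesis \eqref{Eq: kernel condition} at the point $(y,-\eta)$ excludes exactly the simultaneous membership $(y,-\eta)\in-\mathrm{WF}^{\alpha_1}_{\Omega^\prime}(K)\cap\mathrm{WF}^{\alpha_2}(u)$, so the product exists and lies in $\mathcal{D}^\prime(\Omega\times\Omega^\prime)$.

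Next I would feed the product wavefront estimate \eqref{Eq: WF estimate product}, read on $\Omega\times\Omega^\prime$, into the push-forward Proposition \ref{Prop: ancillary}. That estimate splits into the contribution of $\mathrm{WF}^{\alpha_1}(K)$ against $WF_0(1\otimes u)$ and that of $WF_0(K)$ against $\mathrm{WF}^{\alpha_2}(1\otimes u)$. Since the $1\otimes u$ factor only contributes covectors of the form $(0,\eta_2)$, the product covector is $(\xi,\eta_1+\eta_2)$, so the $\Omega$-direction $\xi$ is untouched. Proposition \ref{Prop: ancillary} retains only directions $(x,y,\xi,0)$, forcing $\eta_2=-\eta_1$: the first contribution gives $(x,y,\xi,\eta_1)\in\mathrm{WF}^{\alpha_1}(K)$ with $(y,-\eta_1)\in WF_0(u)$, i.e. the set $X$, while the second gives $(x,y,\xi,\eta_1)\in WF(K)$ with $(y,-\eta_1)\in\mathrm{WF}^{\alpha_2}(u)$, i.e. the set $Y$; here the zero-section part of $WF_0(K)$ drops out because $\mathrm{WF}^{\alpha_2}(u)$ excludes the zero covector and hence $\eta_1\neq 0$, and because a direction with $\xi=0$ cannot survive in a wavefront set. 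Projecting out $(y,\eta_1)$ reproduces precisely the claimed inclusion, and the Besov order is the one furnished by Theorem \ref{th:product theorem 2} for the product $K\cdot(1\otimes u)$, which combined with the monotonicity \eqref{eq:inclusion besov spaces} yields the inclusion throughout the asserted range of $\alpha$.

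I expect the main obstacle to lie in the covector bookkeeping of the push-forward step: one must track which factor supplies the smooth ($WF_0$) and which the Besov ($\mathrm{WF}^{\alpha_i}$) singularity, verify that the $\Omega$-component $\xi$ is genuinely unaffected while the two $\Omega^\prime$-components cancel, and make the sign conventions of \eqref{Eq: Projected WF} dovetail with those in the product hypothesis \eqref{Eq: kernel condition}. The second, more routine, difficulty is justifying the reduction to $K\in\mathcal{E}^\prime$ and the properness of $\pi$ on $\operatorname{supp}\big(K\cdot(1\otimes u)\big)$, so that Proposition \ref{Prop: ancillary} and Corollary \ref{Cor: projection on Besov spaces} apply verbatim, together with the passage to a general $u\in\mathcal{E}^\prime(\Omega^\prime)$ that the previous theorem treated only for $u\in C_0^\infty$.
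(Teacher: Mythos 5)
Your proposal follows the paper's own proof essentially verbatim: the same factorization $\mathcal{K}u=\pi_\ast\big(K\cdot(1\otimes u)\big)$, the same use of the tensor-product estimate \eqref{Eq: WF-Tensor Product_improved} and of Theorem \ref{th:product theorem 2} against hypothesis \eqref{Eq: kernel condition} to define the product, the same push-forward step via (an adaptation of) Proposition \ref{Prop: ancillary} retaining only directions $(x,y,\xi,0)$, and the same concluding appeal to the monotonicity \eqref{eq:inclusion besov spaces}; your extra care about localizing $K$ and the properness of $\pi$ on the relevant support is a welcome tightening of what the paper leaves implicit. The only caveat is one your write-up shares with the paper itself: Theorem \ref{th:product theorem 2} as stated controls $\mathrm{WF}^\gamma$ only for $\gamma=\min\{\alpha_1,\alpha_2\}$, so invoking \eqref{Eq: WF estimate product} at order $\alpha_1+\alpha_2$ and then claiming the inclusion for all $\alpha\leq\alpha_1+\alpha_2$ inherits the same imprecision present in the published argument.
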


\begin{proof}
Following the same strategy as in the proof of Theorem \ref{Thm: WF and Kernels - Dappia version}, we aim at writing $\mathcal{K}(u):=\pi_*(K\cdot(1\otimes u))$ where $\pi_*$ is the push-forward built out of the projection map $\pi:\Omega\times\Omega^\prime\to\Omega$. Given $\alpha_2\in\mathbb{R}$, Equation \eqref{Eq: WF-Tensor Product_improved} entails that $\mathrm{WF}^{\alpha_2}(1\otimes u)\subseteq(\textrm{supp}(u)\times 0)\times\mathrm{WF}^{\alpha_2}(u)$, which combined with Theorem \ref{th:product theorem 2} and Equation \eqref{Eq: kernel condition}, entails that there exists $K\cdot(1\otimes u)\in\mathcal{D}^\prime(\Omega\times\Omega^\prime)$. Yet, being $u$ compactly supported we can act with the push-forward along the map $\pi:\Omega\times\Omega^\prime\to\Omega$, hence obtaining that $\pi_*(K\cdot (1\otimes u))\in\mathcal{D}^\prime(\Omega)$. 

A straightforward adaptation to the case in hand of Proposition \ref{Prop: ancillary} entails that, for every $\alpha\in\mathbb{R}$, $\mathrm{WF}^\alpha(\pi_*(K\cdot(1\otimes u)))$ is contained within the collection of points $(x,\xi)\in\Omega\times\mathbb{R}^d\setminus\{0\}$ for which there exists $y\in\Omega^\prime$ such that $(x,y,\xi,0)\in\mathrm{WF}^\alpha(K\cdot(1\otimes u))$.

Suppose now that $\alpha=\alpha_1+\alpha_2$. Theorem \ref{th:product theorem 2}, Equation \eqref{Eq: WF estimate product} in particular entails that the collection of points $(x,y,\xi,0)\in\mathrm{WF}^\alpha(K\cdot(1\otimes u))$ is contained in those of the form $(x,y,\xi,0)$ such that one of the two following conditions is met:
\begin{enumerate}
	\item there exists $\eta \in \mathbb{R}^m$ such that $(x,y,\xi,\eta)\in\mathrm{WF}^{\alpha_1}(K)$ and $(y,-\eta)\in WF_0(u)$, 
	\item there exists $\eta \in \mathbb{R}^m$ such that $(x,y,\xi,\eta)\in WF(K)$ and $(y,-\eta)\in\mathrm{WF}^{\alpha_2}(u)$.
\end{enumerate}
To conclude it suffices to recall that, on account of Equation \eqref{eq:inclusion besov spaces} $\mathrm{WF}^\alpha(K\cdot(1\otimes u))\subseteq\mathrm{WF}^{\alpha_1+\alpha_2}(K\cdot(1\otimes u))$ whenever $\alpha\leq\alpha_1+\alpha_2$.
\end{proof}

\noindent To conclude, we prove a statement which adapts to the current scenario an important result for the Sobolev wavefront set, see \cite[Prop. B.9]{Junker:2001gx}. 

\begin{corollary}\label{Cor: WF and Kernels - Fede version}
	Let $\Omega \subseteq \mathbb{R}^n, \Omega^\prime \subseteq \mathbb{R}^m$ be open subsets, $K \in \mathcal{D}^\prime(\Omega\times\Omega^\prime)$ be the kernel of $\mathcal{K}\colon C_0^\infty(\Omega^\prime) \to \mathcal{D}^\prime(\Omega)$ and $u \in\mathcal{E}^\prime(\Omega^\prime)$. 
	Assume in addition that for any $(y,\eta) \in \Omega^\prime\times(\mathbb{R}^m\setminus\{0\})$ there exists $\alpha_1,\alpha_2 \in \mathbb{R}$ with $\alpha_1 + \alpha_2 > 0$ such that
	\begin{equation}\label{eq:condition kernel}
		(y,\eta) \not \in -WF^{\alpha_1}_{\Omega^\prime}(K) \cup WF^{\alpha_2}(u).
	\end{equation}
	If $WF_{\Omega^\prime}(K) =\emptyset$ and if there exists $\gamma\in\mathbb{R}$ such that $\mathcal{K}(B_{\infty,\infty}^\alpha(\Omega^\prime)\cap\mathcal{E}^\prime(\Omega^\prime)) \subset B_{\infty,\infty}^{\alpha-\gamma,\mathrm{loc}}(X)$, then
	\begin{equation}\label{eq:inclusion convolution}
		\mathrm{WF}^{\alpha-\gamma}(\mathcal{K}u) \subseteq WF^\prime(K) \circ \mathrm{WF}^\alpha(u) \cup WF_\Omega(K),
	\end{equation}
	where $WF^\prime(K) \circ \mathrm{WF}^\alpha(u) := \{(x,\xi)\;|\;\exists (y,\eta) \in \mathrm{WF}^\alpha(u)\;\textrm{for which}\; (x,y,\xi,-\eta)\in WF(K)  \}$ while $W_\Omega(K):=\{(x,\xi)\in\Omega\times\mathbb{R}^n\;:\;\exists y\in\Omega^\prime\;|\;(x,y,\xi,0)\in WF(K)\}$.
\end{corollary}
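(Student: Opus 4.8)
The plan is to combine a microlocal splitting of $u$ with the smooth composition theorem for kernels \cite[Thm. 8.2.13]{Hor03} and the prescribed mapping property of $\mathcal{K}$. First I would dispose of existence: the standing hypothesis \eqref{eq:condition kernel} is literally \eqref{Eq: kernel condition}, so Theorem \ref{Thm: WF and Kernels - Dappia version extended} already furnishes $\mathcal{K}u\in\mathcal{D}^\prime(\Omega)$ (equivalently, $WF_{\Omega^\prime}(K)=\emptyset$ together with $u\in\mathcal{E}^\prime(\Omega^\prime)$ yields existence through the standard smooth theory). Thus the corollary reduces to the inclusion \eqref{eq:inclusion convolution}, which I would prove directly at the level of a single covector: fix $(x_0,\xi_0)$ outside the right-hand side and show $(x_0,\xi_0)\not\in\mathrm{WF}^{\alpha-\gamma}(\mathcal{K}u)$. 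By Proposition \ref{prop: characterization via pseudodifferential operators} it suffices to build a properly supported $B\in\Psi^0(\Omega)$, elliptic at $(x_0,\xi_0)$ with $WF^\prime(B)$ inside a small closed conic neighborhood $V$ of $(x_0,\xi_0)$, such that $B\mathcal{K}u\in B^{\alpha-\gamma,\mathrm{loc}}_{\infty,\infty}(\Omega)$.

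The order of the choices is what makes the argument run. Since $WF_\Omega(K)$ is closed conic and $(x_0,\xi_0)\not\in WF_\Omega(K)$, and since $WF^\prime(K)\circ\mathrm{WF}^\alpha(u)$ is a closed conic set not containing $(x_0,\xi_0)$, I first pick $V$ so small that $V\cap\big(WF_\Omega(K)\cup WF^\prime(K)\circ\mathrm{WF}^\alpha(u)\big)=\emptyset$. Next I choose a closed conic neighborhood $W$ of the closed cone $\mathrm{WF}^\alpha(u)$ small enough that still $V\cap\big(WF^\prime(K)\circ W\big)=\emptyset$. Only then do I select a properly supported $P\in\Psi^0(\Omega^\prime)$ with $WF^\prime(P)\cap\mathrm{WF}^\alpha(u)=\emptyset$ and $WF^\prime(1-P)\subseteq W$, and finally $B$ as above. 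Writing $B\mathcal{K}u=B\mathcal{K}(Pu)+B\mathcal{K}\big((1-P)u\big)$, I treat the two terms separately.

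For the first term, Proposition \ref{prop: pseudolocality} gives $\mathrm{WF}^\alpha(Pu)\subseteq WF^\prime(P)\cap\mathrm{WF}^\alpha(u)=\emptyset$; since $P$ is properly supported, $Pu\in\mathcal{E}^\prime(\Omega^\prime)$, and a compactly supported element of $B^{\alpha,\mathrm{loc}}_{\infty,\infty}$ lies in $B^\alpha_{\infty,\infty}\cap\mathcal{E}^\prime$. The prescribed mapping property then gives $\mathcal{K}(Pu)\in B^{\alpha-\gamma,\mathrm{loc}}_{\infty,\infty}(\Omega)$, and Theorem \ref{th: boundedness on besov spaces} keeps $B\mathcal{K}(Pu)$ in that space. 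For the second term, $(1-P)u\in\mathcal{E}^\prime(\Omega^\prime)$ with $WF\big((1-P)u\big)\subseteq WF^\prime(1-P)\subseteq W$ by smooth microlocality \eqref{eq: pseudolocality smooth wavefront set}; because $WF_{\Omega^\prime}(K)=\emptyset$, the smooth composition theorem \cite[Thm. 8.2.13]{Hor03} applies and yields $WF\big(\mathcal{K}((1-P)u)\big)\subseteq WF^\prime(K)\circ W\cup WF_\Omega(K)$. By the choice of $V$ this set is disjoint from $WF^\prime(B)$, so $B\mathcal{K}\big((1-P)u\big)\in C^\infty(\Omega)\subseteq B^{\alpha-\gamma,\mathrm{loc}}_{\infty,\infty}(\Omega)$. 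Adding the two pieces gives $B\mathcal{K}u\in B^{\alpha-\gamma,\mathrm{loc}}_{\infty,\infty}(\Omega)$, whence $(x_0,\xi_0)\not\in\mathrm{WF}^{\alpha-\gamma}(\mathcal{K}u)$.

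I expect the main obstacle to be the separation step in the second paragraph: one must verify that $WF^\prime(K)\circ\mathrm{WF}^\alpha(u)$ is closed conic and that $WF^\prime(K)\circ W$ can be kept away from a fixed conic neighborhood $V$ of $(x_0,\xi_0)$ as $W\downarrow\mathrm{WF}^\alpha(u)$, i.e.\ an upper-semicontinuity of the composition in the cone $W$. This is exactly where $WF_{\Omega^\prime}(K)=\emptyset$ and the compactness of $\mathrm{supp}(u)$ are needed, since they ensure the properness of the projections entering the composition, precisely as in the proof of \cite[Thm. 8.2.13]{Hor03}; the remaining pseudodifferential bookkeeping is routine.
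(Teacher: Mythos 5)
Your argument is correct in substance, but it follows a genuinely different route from the paper's. The existence step coincides (both invoke Theorem \ref{Thm: WF and Kernels - Dappia version extended} via the hypothesis \eqref{eq:condition kernel}); the divergence is in the wavefront bound. The paper argues globally through Proposition \ref{prop:property besov wavefront}: given an open conic neighborhood $\Gamma$ of $\mathrm{WF}^\alpha(u)$, it takes a single Besov approximant $v\in B^{\alpha,\mathrm{loc}}_{\infty,\infty}$ with $WF(u-v)\subseteq\Gamma$, uses the mapping hypothesis to get $\mathcal{K}v\in B^{\alpha-\gamma,\mathrm{loc}}_{\infty,\infty}(\Omega)$ and hence $\mathrm{WF}^{\alpha-\gamma}(\mathcal{K}u)\subseteq WF(\mathcal{K}(u-v))$, applies \cite[Thm. 8.2.13]{Hor03} once, and lets $\Gamma$ shrink. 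You instead work pointwise through Proposition \ref{prop: characterization via pseudodifferential operators}, with a test operator $B$ elliptic at $(x_0,\xi_0)$ and a microlocal cutoff $P$ yielding $u=Pu+(1-P)u$. Observe that $v:=Pu$ is precisely an explicit construction of the approximant the paper needs, since $WF(u-Pu)\subseteq WF^\prime(1-P)\subseteq W$ while $\mathrm{WF}^\alpha(Pu)\subseteq WF^\prime(P)\cap\mathrm{WF}^\alpha(u)=\emptyset$ by Proposition \ref{prop: pseudolocality}. This buys you something real: the paper's appeal to Proposition \ref{prop:property besov wavefront} is slightly stronger than what that proposition literally states (it is a covector-by-covector equivalence, one $v$ per point, whereas the proof needs one $v$ serving a whole prescribed cone $\Gamma$ -- and the corresponding sentence in the paper has the quantifiers and the inclusion $\Gamma\subset\mathrm{WF}^\alpha(u)$ garbled), and your $\Psi$DO cutoff supplies exactly this uniform statement; the price is the extra bookkeeping ($B$, $V$, $W$, proper supports, Theorem \ref{th: boundedness on besov spaces} for the term $B\mathcal{K}(Pu)$) that the paper's shorter argument avoids. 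Finally, the separation/upper-semicontinuity step you flag -- keeping $WF^\prime(K)\circ W$ away from $V$ as $W$ shrinks to $\mathrm{WF}^\alpha(u)$, with limit points of vanishing fiber component absorbed by $WF_\Omega(K)$ -- is needed by both proofs (the paper hides it in ``the arbitrariness of $\Gamma$''), and your diagnosis of why it holds ($WF_{\Omega^\prime}(K)=\emptyset$ together with compactness of $\mathrm{supp}(u)$ gives the properness exploited in H\"ormander's own proof of \cite[Thm. 8.2.13]{Hor03}) is the correct one.
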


\begin{proof}
	On account of Theorem \ref{Thm: WF and Kernels - Dappia version extended}, Equation \eqref{eq:condition kernel} entails that $\mathcal{K}(u)\in\mathcal{D}^\prime(\Omega)$. Bearing in mind Proposition \ref{prop:property besov wavefront}, we can find an open conic neighborhood $\Gamma\subset \mathrm{WF}^\alpha(u)$ such that $WF(u-v) \subset \Gamma$ for all $v \in B_{\infty,\infty}^{\alpha,\mathrm{loc}}(\Omega^\prime)$. Per assumption $\mathcal{K}(v)\in B_{\infty,\infty}^{\alpha-\gamma,\mathrm{loc}}(\Omega)$, which entails in turn on account of \cite[Theorem 8.2.13]{Hor03}
	\[
	\mathrm{WF}^{\alpha-\gamma}(\mathcal{K}u) \subseteq WF(\mathcal{K}(u-v)) \subseteq WF^\prime(K) \circ WF(u-v) \cup WF_X(K) \subset WF^\prime(K) \circ \Gamma \cup WF_X(K).
	\]
	To conclude, in view of the arbitrariness of $\Gamma$, we infer
	\[
	\mathrm{WF}^{\alpha-\gamma}(\mathcal{K}u) \subseteq WF^\prime(K) \circ \mathrm{WF}^\alpha(u) \cup WF_\Omega(K).
	\]
\end{proof}

\begin{example}
	Let us consider the heat kernel operator, namely the fundamental solution of the heat equation $G\in\mathcal{D}^\prime(\mathbb{R}^{d+1}\times\mathbb{R}^{d+1})$, whose integral kernel reads in standard Cartesian coordinates
	\[
	G(t,x,t^\prime,x^\prime)= \frac{\Theta(t-t^\prime)}{(4\pi (t-t^\prime))^{d/2}} e^{-\frac{\lvert x-x^\prime \rvert^2}{4(t-t^\prime)}},
	\]
	where $\Theta$ is the Heaviside function.
	By Schauder estimates, \emph{c.f.} \cite{Sim97},  $G$ can also be read as the kernel of an operator $\mathcal{G} \colon B^\alpha_{\infty,\infty}(\mathbb{R}^{1+d}) \to B^{\alpha + 2}_{\infty,\infty}(\mathbb{R}^{1+d})$. Furthermore it holds that
	\begin{equation}\label{Eq: WF of heat kernel}
		WF(G)=\{(t,x,t,x,\tau,\xi,-\tau,-\xi)\;|\; (t,x)\in\mathbb{R}^{d+1}\;\textrm{and}\; (\tau,\xi)\in\mathbb{R}^{d+1}\setminus\{0\}\}. 
	\end{equation}
	Therefore, we are in position to apply\eqref{eq:inclusion convolution}. Considering any $u\in \mathcal{E}^\prime(\mathbb{R})$, we can infer that the hypotheses of Corollary \ref{Cor: WF and Kernels - Fede version} are met since $\mathrm{WF}^\alpha_{\mathbb{R}^{d+1}}(G)=\emptyset$ for all $\alpha\in\mathbb{R}$, where the subscript $\mathbb{R}^{d+1}$ should be read in the sense of Equation \eqref{Eq: Projected WF}. At the same time, on account of Remark \ref{Ex: Besov WF of compactly supported distributions}, there must exist $\alpha<0$ such that $\mathrm{WF}^\alpha(u)=\emptyset$. This entails that
	\[
	\mathrm{WF}^{\alpha+2}(\mathcal{G}(u)) \subseteq WF^\prime(G) \circ \mathrm{WF}^\alpha(u),
	\]
	which, combined with Equation \eqref{Eq: WF of heat kernel}, yields $WF^\prime(G) \circ \mathrm{WF}^\alpha(u) = \mathrm{WF}^\alpha(u)$. This leads to the inclusion
	\[
	\mathrm{WF}^{\alpha+2}(\mathcal{G}u) \subseteq  \mathrm{WF}^\alpha(u).
	\]
\end{example}

\subsection{Besov Wavefront Set and Hyperbolic Partial Differential Equations}

As an application of the results of the previous sections, we study the interplay between the Besov wavefront set and a large class of hyperbolic partial differential equations of the form
\begin{equation}\label{eq: hyperbolic system}
	\partial_t u = i a(D_x) u, \quad (t,x) \in \mathbb{R}\times \mathbb{R}^d,
\end{equation}
where we assume $a=a_1+a_0$ where $a_1\in S^1_{\textrm{hom}}(\mathbb{R}^d)$, while $a_0\in S^0(\mathbb{R}^d)$ see Definition \ref{Def: Symbols}. Using standard Fourier analysis, we can infer that the fundamental solution associated to the operator $\partial_t-ia(D_x)$ is the distribution $G\in\mathcal{D}^\prime(\mathbb{R}\times\mathbb{R}^d)$, whose integral kernel reads
\[
G(t,x) = \Theta(t) [e^{it a(D)} \delta](x),
\]
where $\Theta$ is once more the Heaviside function. 

\begin{proposition}\label{prop:inclusion besov spaces}
		Let $\alpha \in \mathbb{R}$. Then $B_{\infty,\infty}^\alpha(\mathbb{R}^d) \cap \mathcal{E}^\prime(\mathbb{R}^d)\subset B_{2,\infty}^\alpha(\mathbb{R}^d)$.
\end{proposition}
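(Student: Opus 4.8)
The plan is to compare the two Besov norms block by block in the Littlewood--Paley decomposition. By Definition \ref{def:definition Besov Fourier approach}, membership in $B^\alpha_{\infty,\infty}(\mathbb{R}^d)$ (resp. $B^\alpha_{2,\infty}(\mathbb{R}^d)$) is equivalent to the uniform-in-$j$ bound $\sup_{j\geq 0}2^{j\alpha}\|\psi_j(D)u\|_{L^\infty}<\infty$ (resp. with $L^2$ in place of $L^\infty$). Hence it suffices to show, for $u\in B^\alpha_{\infty,\infty}(\mathbb{R}^d)\cap\mathcal{E}^\prime(\mathbb{R}^d)$, that $\sup_{j\geq 0}2^{j\alpha}\|\psi_j(D)u\|_{L^2}<\infty$. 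The embedding $L^\infty\hookrightarrow L^2$ is only available on sets of finite measure, whereas $\psi_j(D)u=\check{\psi}_j\ast u$ is spread over all of $\mathbb{R}^d$. The first step is therefore to localize: fix $\chi\in\mathcal{D}(\mathbb{R}^d)$ equal to $1$ on a neighborhood of $K:=\operatorname{supp}(u)$ and split $\psi_j(D)u=\chi\,\psi_j(D)u+(1-\chi)\,\psi_j(D)u$.

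For the localized piece, since $\chi$ is bounded with compact support, $\|\chi\,\psi_j(D)u\|_{L^2}\leq\|\chi\|_{L^2}\,\|\psi_j(D)u\|_{L^\infty}$, and therefore $2^{j\alpha}\|\chi\,\psi_j(D)u\|_{L^2}\leq\|\chi\|_{L^2}\,\|u\|_{B^\alpha_{\infty,\infty}(\mathbb{R}^d)}$ uniformly in $j$. This term is thus immediately under control.

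The main work, and the step I expect to be the crux, is the tail $(1-\chi)\,\psi_j(D)u$, which is supported at positive distance $\delta>0$ from $K$. Writing $\psi_j(D)u(x)=u\big(\check{\psi}_j(x-\cdot)\big)$ and using that $u\in\mathcal{E}^\prime(\mathbb{R}^d)$ has some finite order $M$, I would bound $|\psi_j(D)u(x)|$ for $x\in\operatorname{supp}(1-\chi)$ by the seminorm estimate $C\sum_{|\beta|\leq M}\sup_{y\in W}|(\partial^\beta\check{\psi}_j)(x-y)|$, where $W$ is a fixed compact neighborhood of $K$ still at distance $\geq\delta$ from $\operatorname{supp}(1-\chi)$. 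Since $\check{\psi}_j(z)=2^{jd}\check{\psi}(2^jz)$ and $\check{\psi}\in\mathcal{S}(\mathbb{R}^d)$, this yields, for every $N$, the pointwise bound $|\psi_j(D)u(x)|\lesssim_N 2^{j(d+M)}\big(1+2^j\,\mathrm{dist}(x,K)\big)^{-N}$. The delicate point is that the prefactor $2^{j(d+M)}$ \emph{grows} in $j$, so one must extract a compensating gain from the spatial decay: integrating the square over $\{\mathrm{dist}(\,\cdot\,,K)\geq\delta\}$ and rescaling $z=2^jx$ pushes the region of integration out to $|z|\gtrsim 2^j$, where the Schwartz tail $(1+|z|)^{-2N}$ is integrably small, producing an extra factor of order $2^{-j(2N-d)}$.

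Carrying this out (splitting the region into a bounded part of finite measure and a far part $|x|\gg 1$ where $\mathrm{dist}(x,K)\gtrsim|x|$) gives $\|(1-\chi)\,\psi_j(D)u\|_{L^2}\lesssim_N 2^{j(d+M-N)}$, hence $2^{j\alpha}\|(1-\chi)\,\psi_j(D)u\|_{L^2}\lesssim_N 2^{j(\alpha+d+M-N)}$, which is bounded uniformly in $j$ once $N\geq\alpha+d+M$. Since $N$ may be chosen arbitrarily large, this term is controlled. Combining the two pieces by the triangle inequality yields $\sup_{j\geq 0}2^{j\alpha}\|\psi_j(D)u\|_{L^2}<\infty$, that is $u\in B^\alpha_{2,\infty}(\mathbb{R}^d)$, as claimed.
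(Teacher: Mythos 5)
Your proof is correct, but it takes a genuinely different route from the paper. The paper works with the local means characterization of Definition \ref{def:definition Besov local means}: for $v \in B^\alpha_{\infty,\infty}(\mathbb{R}^d)\cap\mathcal{E}^\prime(\mathbb{R}^d)$, the test function $\kappa^\lambda_x$ is supported in $B(x,\lambda)\subseteq B(x,1)$, so the function $x\mapsto v(\kappa^\lambda_x)$ vanishes outside a fixed compact neighborhood of $\operatorname{supp}(v)$, and the single inequality $\|v(\kappa^\lambda_x)\|_{L^2(\mathbb{R}^d)}\lesssim\|v(\kappa^\lambda_x)\|_{L^\infty(\mathbb{R}^d)}\lesssim\lambda^\alpha$ finishes the proof in two lines --- the compact support of the local-means tests makes the tail you worry about simply absent. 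You instead work with the Fourier multiplier characterization of Definition \ref{def:definition Besov Fourier approach}, where $\psi_j(D)u=\check{\psi}_j\ast u$ is genuinely spread over all of $\mathbb{R}^d$, and you must pay for this with the localization $\chi$ and the tail estimate via the order $M$ of $u\in\mathcal{E}^\prime(\mathbb{R}^d)$ and the Schwartz decay of $\check{\psi}$; your bookkeeping there is correct ($\|(1-\chi)\psi_j(D)u\|_{L^2}\lesssim_N 2^{j(d+M-N)}$, with $N$ arbitrary, does the job, and the exceptional block $j=0$ is handled by the same seminorm estimate without rescaling). What each approach buys: the paper's argument is shorter and cleaner, but it implicitly leans on the nontrivial equivalence between the local means and Littlewood--Paley formulations (cited from Triebel); yours is longer but self-contained at the level of the dyadic definition, and as a bonus it makes explicit the quantitative off-support decay $|\psi_j(D)u(x)|\lesssim_N 2^{j(d+M)}(1+2^j\,\mathrm{dist}(x,K))^{-N}$, which is a reusable fact.
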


	\begin{proof}
		Let $v \in B_{\infty,\infty}^\alpha(\mathbb{R}^d) \cap \mathcal{E}^\prime(\mathbb{R}^d)$. For any $\kappa \in \mathscr{B}_{\lfloor \alpha \rfloor}$ as per Definition \ref{Def: Kernel for Besov}, it 
		\begin{align}
			\|v(\kappa^\lambda_{x})\|_{L^2(\mathbb{R}^d)} \lesssim \|v(\kappa^\lambda_{x})\|_{L^\infty(\mathbb{R}^d)} \lesssim \lambda^\alpha, \nonumber
		\end{align}
		where the first estimate is a a byproduct of $v$ being compactly supported. A similar reasoning applies when considering any $\underline{\kappa}\in \mathcal{D}(B(0,1))$ such that $\check{\kappa}(0) \neq 0$. As a consequence of Definition \ref{def:definition Besov local means}, we infer that $v \in B_{2,\infty}^\alpha(\mathbb{R}^d)$.		
\end{proof}

\begin{proposition}\label{Prop: regularity of G}
	Let $G \in \mathcal{D}^\prime(\mathbb{R}\times \mathbb{R}^d)$ be the fundamental solution of the hyperbolic operator $\partial_t - i a(D_x)$. Then, $G(t,\cdot) \in B_{2,\infty}^{-\frac{d}{2}}(\mathbb{R}^d)$ for any $t \in \mathbb{R}$. Moreover, given $v \in B_{\infty,\infty}^{\alpha,\mathrm{loc}}(\mathbb{R}^d)$ with $\alpha \in \mathbb{R}$, 
	\[
	G(t,\cdot) \ast v \in B_{\infty,\infty}^{\alpha-\frac{d}{2},\mathrm{loc}}(\mathbb{R}^d),
	\]
where $\ast$ stands for the convolution.
\end{proposition}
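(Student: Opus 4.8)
The plan is to establish the two assertions separately, obtaining the second from the first together with the localization estimate of Proposition \ref{prop:inclusion besov spaces}. For the first claim I would pass to Fourier space: since $G(t,\cdot)=e^{ita(D)}\delta$ for $t>0$, one has $\widehat{G(t,\cdot)}(\xi)=e^{ita(\xi)}$, and because the principal part $a_1$ is real (hyperbolicity) while $a_0\in S^0(\mathbb{R}^d)$, the modulus $|e^{ita(\xi)}|=e^{-t\,\mathrm{Im}\,a_0(\xi)}$ is bounded by a constant $C_t$ uniformly in $\xi$. By Plancherel,
\[
\|\psi_j(D)G(t,\cdot)\|_{L^2(\mathbb{R}^d)}^2=(2\pi)^{-d}\int_{\mathbb{R}^d}|\psi_j(\xi)|^2\,|e^{ita(\xi)}|^2\,d\xi\lesssim \int_{\{|\xi|\sim 2^j\}}d\xi\lesssim 2^{jd},
\]
where I used that $\mathrm{supp}(\psi_j)$ lies in a dyadic annulus of volume $\sim 2^{jd}$. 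Hence $2^{-jd/2}\|\psi_j(D)G(t,\cdot)\|_{L^2}\lesssim 1$ uniformly in $j$, which by Definition \ref{def:definition Besov Fourier approach} (with $p=2$, $q=\infty$) is precisely $G(t,\cdot)\in B_{2,\infty}^{-d/2}(\mathbb{R}^d)$.

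For the second claim I would first record a Young-type convolution inequality at the level of dyadic blocks. Writing $\widehat{f\ast g}=\hat f\hat g$ and inserting a fattened cut-off $\tilde\psi_j\equiv 1$ on $\mathrm{supp}(\psi_j)$, one has $\psi_j(D)(f\ast g)=\psi_j(D)\big[(\tilde\psi_j(D)f)\ast(\tilde\psi_j(D)g)\big]$; since $\|\check{\psi}_j\|_{L^1}$ is bounded uniformly in $j$ and $\|h_1\ast h_2\|_{L^\infty}\le\|h_1\|_{L^2}\|h_2\|_{L^2}$, this gives
\[
\|\psi_j(D)(f\ast g)\|_{L^\infty}\lesssim\|\tilde\psi_j(D)f\|_{L^2}\,\|\tilde\psi_j(D)g\|_{L^2}\lesssim 2^{jd/2}\,2^{-j\alpha}\,\|g\|_{B_{2,\infty}^{\alpha}},
\]
upon taking $f=G(t,\cdot)\in B_{2,\infty}^{-d/2}$ and $g\in B_{2,\infty}^{\alpha}$. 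Taking the supremum over $j$ of $2^{j(\alpha-d/2)}$ times the left-hand side shows the embedding $B_{2,\infty}^{-d/2}\ast B_{2,\infty}^{\alpha}\hookrightarrow B_{\infty,\infty}^{\alpha-d/2}$.

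It then remains to reduce the local statement for $v\in B^{\alpha,\mathrm{loc}}_{\infty,\infty}$ to this inequality. Fixing $\chi\in\mathcal{D}(\mathbb{R}^d)$, I would choose $\varphi\in\mathcal{D}(\mathbb{R}^d)$ equal to $1$ on a neighbourhood of $\mathrm{supp}(\chi)$ enlarged by the propagation radius $v_0|t|$, with $v_0:=\sup_\xi|\nabla a_1(\xi)|<\infty$. By Proposition \ref{prop:inclusion besov spaces} the localized datum satisfies $\varphi v\in B_{\infty,\infty}^{\alpha}\cap\mathcal{E}^\prime\subset B_{2,\infty}^{\alpha}$, so the block inequality gives $G(t,\cdot)\ast(\varphi v)\in B_{\infty,\infty}^{\alpha-d/2}(\mathbb{R}^d)$. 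The remaining piece $\chi\cdot\big(G(t,\cdot)\ast((1-\varphi)v)\big)$ involves $G(t,\cdot)$ evaluated only outside its singular support, where finite propagation speed and a non-stationary phase estimate for $\int e^{i(z\cdot\xi+ta(\xi))}\,d\xi$ show that $G(t,\cdot)$ is smooth and rapidly decreasing; hence this term lies in $C^\infty\subset B^{\alpha-d/2}_{\infty,\infty}$, and adding the two contributions yields the claim. I expect this last step to be the main obstacle: controlling the non-local tail of the convolution and certifying that, modulo a smooth remainder, the local regularity near $\mathrm{supp}(\chi)$ depends only on $v$ on a compact set. The crux is the finite speed of propagation of $e^{ita(D)}$, which rests on the reality and homogeneity of $a_1$ (so that $\nabla a_1$ is bounded) and on the fact that $a_0\in S^0$ does not affect propagation of singularities.
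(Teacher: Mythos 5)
Your first claim and your dyadic-block argument reproduce, respectively, the paper's proof and the external result it invokes. The paper proves $G(t,\cdot)\in B^{-d/2}_{2,\infty}(\mathbb{R}^d)$ exactly as you do, via Plancherel and the volume $\sim 2^{jd}$ of the dyadic annulus (it writes the equality $\|\psi_j(D)e^{ita(D)}\delta\|_{L^2}=\|\psi_j\|_{L^2}$, implicitly using $\lvert e^{ita(\xi)}\rvert=1$; your constant $C_t$ is a harmless refinement). For the convolution step, however, the paper does not prove any Young-type inequality: it observes that $\phi v\in B^\alpha_{2,\infty}(\mathbb{R}^d)$ by Proposition \ref{prop:inclusion besov spaces} and then cites \cite{Kuhn} to conclude $G(t,\cdot)\ast(\phi v)\in B^{\alpha-d/2}_{\infty,\infty}(\mathbb{R}^d)$, and there the proof ends. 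Your blockwise derivation — the identity $\psi_j(D)(f\ast g)=\psi_j(D)\bigl[(\tilde\psi_j(D)f)\ast(\tilde\psi_j(D)g)\bigr]$, Cauchy--Schwarz for $L^2\ast L^2\to L^\infty$, and the uniform bound on $\|\check\psi_j\|_{L^1}$ — is a correct, self-contained proof of precisely the instance of the convolution inequality that is needed (convolution is diagonal in frequency, so no paraproduct summation arises), and it buys independence from the reference. Up to this point your argument is sound and parallel to the paper's.

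The genuine problem is your third step, which attempts something the paper does not: the paper reads the local statement as ``localize the datum, then convolve'', so no tail term $\chi\,\bigl(G(t,\cdot)\ast((1-\varphi)v)\bigr)$ ever appears in its proof. Your treatment of that tail fails on two counts. First, the claimed rapid decrease of $G(t,\cdot)$ outside the propagation region is false: since $a_1$ is homogeneous of degree one, it is not smooth at $\xi=0$, and the low-frequency part of the oscillatory integral caps the decay at a fixed polynomial rate. For the model case $a(\xi)=\lvert\xi\rvert$ the half-wave kernel behaves like $c_d\, t\,(\lvert z\rvert^2-t^2)^{-(d+1)/2}$, i.e. $\lvert G(t,z)\rvert\sim\lvert z\rvert^{-(d+1)}$ at infinity, not $O(\lvert z\rvert^{-N})$ for every $N$; non-stationary phase applies only to the dyadic high-frequency blocks, and summing them saturates at polynomial decay. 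Relatedly, ``finite propagation speed'' for $e^{ita(D)}$ with a general real homogeneous $a_1$ is a statement about wavefront sets and singular supports, not supports: the kernel does not vanish outside $\lvert z\rvert\le v_0\lvert t\rvert$. Second, and more fundamentally, membership in $B^{\alpha,\mathrm{loc}}_{\infty,\infty}(\mathbb{R}^d)$ imposes no growth restriction at infinity — for instance $e^{\lvert x\rvert^2}\in C^\infty(\mathbb{R}^d)\subset B^{\alpha,\mathrm{loc}}_{\infty,\infty}(\mathbb{R}^d)$ — so $(1-\varphi)v$ need not even be tempered and the tail convolution can simply fail to converge, regardless of how much polynomial decay the kernel enjoys. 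The step therefore cannot be repaired in the stated generality: either one adopts the paper's localize-then-convolve interpretation of the statement (which your first two steps already establish), or one adds hypotheses on $v$, e.g. compact support, or temperedness with polynomial growth of order strictly beaten by the $\lvert z\rvert^{-(d+1)}$ decay of the kernel.
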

\begin{proof}
	Let $\{\psi_j\}_{j\geq 0}$ be a Littlewood-Paley partition of unity as per Definition \ref{Def: L-P partition of unity}. For any $j\geq 1$, it descends
	\begin{align}
		\|\psi_j(D_x)e^{ita(D_x)}\delta\|_{L^2(\mathbb{R}^d)} = \|\psi_j\|_{L^2(\mathbb{R}^d)} = 2^{j \frac{d}{2}} \|\psi\|_{L^2(\mathbb{R}^d)},
	\end{align}
	where we applied Fourier-Plancherel theorem in the first equality. Hence we can conclude that
	\[
	\sup_{j\geq 0}2^{-j \frac{d}{2}} \|\psi_j(D_x)e^{ita(D_x)}\delta\|_{L^2(\mathbb{R}^d)} < \infty,
	\]
	which entails that $G(t,\cdot) \in B_{2,\infty}^{-\frac{d}{2}}(\mathbb{R}^d)$.
	Observe that, for every $\phi \in \mathcal{D}(\mathbb{R}^d)$, $\phi v \in B^\alpha_{2,\infty}(\mathbb{R}^d)$ on account of Proposition \ref{prop:inclusion besov spaces}. Then, as a consequence of \cite[Thm 2.2]{Kuhn}, we can infer that $G(t,\cdot) \ast (\phi v) \in B_{\infty,\infty}^{\alpha-\frac{d}{2}}(\mathbb{R}^d)$ for any $t \in\mathbb{R}$.
\end{proof}

Proposition \ref{Prop: regularity of G} can be read as a statement that the solution map associated to Equation \eqref{eq: hyperbolic system}
\[
S(t,0) : u(0) \mapsto u(t)
\]
is continuous from $B_{\infty,\infty}^{\alpha,\mathrm{loc}}(\mathbb{R}^d)$ to $B_{\infty,\infty}^{\alpha-\frac{d}{2},\mathrm{loc}}(\mathbb{R}^d)$. Moreover, $S(t,0)$ can be inverted and $S(t,0)^{-1}=S(0,t)$.

\begin{theorem}\label{th:propagation of singularities}
	Let $a$ be as per Equation \eqref{eq: hyperbolic system} and let $u_0 \in \mathcal{S}^\prime(\mathbb{R}^d)$. Suppose that $u$ is the solution of the initial value problem
	\begin{equation}
		\begin{cases}
			\partial_t u = i a(D_x) u,\\
			u(0)= u_0.
		\end{cases}
	\end{equation}
	Then, for every $\alpha\in\mathbb{R}$,
	\begin{equation}
		\mathrm{WF}^{\alpha-\frac{d}{2}}(u(t)) = \mathcal{C}(t) \mathrm{WF}^\alpha(u_0),
	\end{equation}
	where $\mathcal{C}(t)$ is the flow from $t$ to $0$ associated the Hamiltonian vector field $H_{a(\xi)}$.
\end{theorem}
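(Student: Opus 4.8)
The plan is to exploit the explicit solution formula $u(t)=G(t,\cdot)\ast u_0=e^{ita(D)}u_0$ together with the pseudodifferential characterization of $\mathrm{WF}^\alpha$ in Proposition \ref{prop: characterization via pseudodifferential operators}. Writing $U(t):=e^{ita(D)}$, the two ingredients I would combine are an Egorov-type theorem, stating that the conjugation of an order-zero properly supported $\Psi$DO by $U(t)$ produces again an order-zero $\Psi$DO whose principal symbol is transported along the bicharacteristic flow of the principal symbol $a_1$, and the regularity shift of Proposition \ref{Prop: regularity of G}, by which $U(t)$ maps $B^{\beta,\mathrm{loc}}_{\infty,\infty}(\mathbb{R}^d)$ into $B^{\beta-d/2,\mathrm{loc}}_{\infty,\infty}(\mathbb{R}^d)$. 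Since $a_1\in S^1_{\mathrm{hom}}(\mathbb{R}^d)$ depends only on $\xi$, its Hamiltonian vector field $H_{a_1}=\nabla_\xi a_1\cdot\partial_x$ leaves the fibre coordinate $\xi$ invariant, so the canonical transformation appearing in Egorov's theorem is exactly the flow $\mathcal{C}(t)$ of the statement; the zeroth order term $a_0\in S^0(\mathbb{R}^d)$ affects only the subprincipal symbol and does not alter the flow.

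First I would prove the inclusion $\mathrm{WF}^{\alpha-\frac{d}{2}}(u(t))\subseteq \mathcal{C}(t)\,\mathrm{WF}^\alpha(u_0)$. Let $(x,\xi)=\mathcal{C}(t)(y,\eta)$ with $(y,\eta)\not\in \mathrm{WF}^\alpha(u_0)$, so by Proposition \ref{prop: characterization via pseudodifferential operators} there is $B\in\Psi^0(\mathbb{R}^d)$ elliptic at $(y,\eta)$ with $Bu_0\in B^{\alpha,\mathrm{loc}}_{\infty,\infty}(\mathbb{R}^d)$. Choosing $A\in\Psi^0(\mathbb{R}^d)$ elliptic at $(x,\xi)$ and setting $\tilde A:=U(-t)AU(t)$, the Egorov conjugation gives $\tilde A\in\Psi^0(\mathbb{R}^d)$ elliptic at $(y,\eta)$, and one may arrange $WF^\prime(\tilde A)\subset \mathrm{Ell}(B)$. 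A microlocal parametrix $Q$ for $B$, as in Proposition \ref{prop: parametrix}, then yields $\tilde Au_0=\tilde AQ(Bu_0)+\tilde ARu_0$ with $R$ smoothing on $WF^\prime(\tilde A)$, whence $\tilde Au_0\in B^{\alpha,\mathrm{loc}}_{\infty,\infty}(\mathbb{R}^d)$ by Theorem \ref{th: boundedness on besov spaces}. Since $U(t)$ is a one-parameter group one has $Au(t)=U(t)\tilde Au_0$, and Proposition \ref{Prop: regularity of G} forces $Au(t)\in B^{\alpha-\frac{d}{2},\mathrm{loc}}_{\infty,\infty}(\mathbb{R}^d)$, that is $(x,\xi)\not\in \mathrm{WF}^{\alpha-\frac{d}{2}}(u(t))$.

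For the opposite inclusion I would run the same scheme for the inverse evolution $S(0,t)=U(-t)$, which solves the hyperbolic equation with $a$ replaced by $-a$ and hence falls under the same hypotheses, its canonical transformation being $\mathcal{C}(t)^{-1}$. The delicate point, and the step I expect to be the main obstacle, is the consistent bookkeeping of the $-\frac{d}{2}$ shift: a naive application of Proposition \ref{Prop: regularity of G} in both directions would produce a net loss of $d$ derivatives and fail to close the equality. The way around this is to observe, exactly as in the proof of Proposition \ref{Prop: regularity of G}, that on each Littlewood--Paley block $U(t)$ acts as the unitary multiplier $e^{ita_1(\xi)}$ and is therefore an order-preserving isomorphism of $B^\beta_{2,\infty}(\mathbb{R}^d)$; the factor $\lambda^{-d/2}$ enters only once, through the compact-support embedding $B^\beta_{\infty,\infty}\cap\mathcal{E}^\prime\subset B^\beta_{2,\infty}$ of Proposition \ref{prop:inclusion besov spaces} and the Sobolev embedding $B^\beta_{2,\infty}\hookrightarrow B^{\beta-\frac{d}{2}}_{\infty,\infty}$. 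I would therefore phrase the whole microlocal propagation at the level of the $B_{2,\infty}$-wavefront set, where Egorov transports regularity without loss of order, and convert back to the $B_{\infty,\infty}$-scale only at the two endpoints, so that the single $\frac{d}{2}$ appears precisely once.

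The remaining technical points are to establish the Egorov conjugation in the present framework, either by a direct symbolic computation exploiting that $U(t)$ is a Fourier multiplier, or by citing the standard result and checking it respects proper support, and to verify that the microlocal parametrix construction and Theorem \ref{th: boundedness on besov spaces} may be applied uniformly in $t$. Granting these, the two inclusions above combine into the asserted identity $\mathrm{WF}^{\alpha-\frac{d}{2}}(u(t))=\mathcal{C}(t)\,\mathrm{WF}^\alpha(u_0)$ for every $\alpha\in\mathbb{R}$.
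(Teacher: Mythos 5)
Your proof of the inclusion $\mathrm{WF}^{\alpha-\frac{d}{2}}(u(t))\subseteq\mathcal{C}(t)\,\mathrm{WF}^\alpha(u_0)$ is essentially the paper's argument: the paper takes the operator $A$ furnished by Proposition \ref{prop: characterization via pseudodifferential operators} at a point $(x,\xi)\notin\mathrm{WF}^\alpha(u_0)$, conjugates it \emph{forward}, $A(t):=S(t,0)\circ A\circ S(0,t)$, so that $A(t)u(t)=S(t,0)Au_0\in B^{\alpha-\frac{d}{2},\mathrm{loc}}_{\infty,\infty}(\mathbb{R}^d)$ by Proposition \ref{Prop: regularity of G}, and invokes Egorov's theorem for the order and the ellipticity of $A(t)$ along the flow. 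Because you instead pick a fresh $A$ at the target point and conjugate backward, you are forced into the parametrix step arranging $WF^\prime(\tilde{A})\subset\mathrm{Ell}(B)$; that step is correct but unnecessary -- conjugating forward the operator that already works for $u_0$ gives $A(t)u(t)=S(t,0)Au_0$ directly, with no appeal to Proposition \ref{prop: parametrix}. (A minor slip: the multiplier on each Littlewood--Paley block is $e^{ita(\xi)}$, not $e^{ita_1(\xi)}$, and its unimodularity requires $a$ real-valued, which is implicitly assumed throughout.)

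The genuine gap is in your treatment of the reverse inclusion. You correctly diagnose that the naive reversal (which is all the paper offers, via ``the other following suite'') loses $d$ derivatives, but your proposed repair does not close the equality. Count exponents in your own scheme: to show $\mathcal{C}(t)(x,\xi)\notin\mathrm{WF}^{\alpha-\frac{d}{2}}(u(t))\Rightarrow(x,\xi)\notin\mathrm{WF}^{\alpha}(u_0)$, you enter the $B_{2,\infty}$ scale losslessly via Proposition \ref{prop:inclusion besov spaces}, propagate backward at fixed order, and must exit through the embedding $B^{\beta}_{2,\infty}\hookrightarrow B^{\beta-\frac{d}{2}}_{\infty,\infty}$. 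The exit embedding is applied once \emph{per direction}, so this chain lands on $(x,\xi)\notin\mathrm{WF}^{\alpha-d}(u_0)$, which by \eqref{eq:inclusion besov spaces} is strictly weaker than the needed $(x,\xi)\notin\mathrm{WF}^{\alpha}(u_0)$: the two one-sided losses of $\frac{d}{2}$ do not cancel, they accumulate. Your method genuinely yields a clean order-preserving equality only for a $B_{2,\infty}$-type wavefront set; on the $B_{\infty,\infty}$ scale it produces only the sandwich $\mathcal{C}(t)\,\mathrm{WF}^{\alpha-d}(u_0)\subseteq\mathrm{WF}^{\alpha-\frac{d}{2}}(u(t))\subseteq\mathcal{C}(t)\,\mathrm{WF}^{\alpha}(u_0)$. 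Moreover, no rearrangement of these embeddings can do better: for $a(\xi)=v\cdot\xi$ the propagator is the translation $u(t)=u_0(\cdot+tv)$, which preserves every $B^{\beta,\mathrm{loc}}_{\infty,\infty}$ exactly, so the asserted equality would force $\mathrm{WF}^{\alpha-\frac{d}{2}}(u_0)$ to coincide (up to translation) with $\mathrm{WF}^{\alpha}(u_0)$ for every $u_0$; taking $d=2$ and the distribution of Example \ref{example: sqrt of x} with $\frac{1}{2}<\alpha\leq\frac{3}{2}$ gives $\mathrm{WF}^{\alpha}(u_0)\neq\emptyset$ while $\mathrm{WF}^{\alpha-1}(u_0)=\emptyset$. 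So the reverse inclusion at these exponents cannot be established by any bookkeeping of this kind; what the displayed argument (yours and the paper's) actually proves is the forward inclusion, and your $B_{2,\infty}$ observation is best recast as locating exactly where the unavoidable $\frac{d}{2}$ enters, rather than as a proof of equality.
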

\begin{proof}
	We just prove the inclusion $\subset$, the other following suite. Let us consider $(x,\xi)\not \in \mathrm{WF}^\alpha(u_0)$. Then there exists $A \in \Psi^0(\mathbb{R}^d)$, elliptic at $(x,\xi)$, such that $A u_0 \in B_{\infty,\infty}^{\alpha,\mathrm{loc}}(\mathbb{R}^d)$. Let us define $A(t):=S(t,0)\circ A\circ S(0,t)$ so that $A(t)u(t) = S(t,0)Au_0 \in B_{\infty,\infty}^{\alpha-\frac{d}{2},\mathrm{loc}}(\mathbb{R}^d)$. On account of Egorov's theorem, see {\it e.g.} \cite{Hintz}, we can conclude that $A(t)$ still lies in $\Psi^0(\mathbb{R}^d)$ and it is elliptic at $\mathcal{C}(t)^{-1}(x,\xi)$. This implies $\mathcal{C}(t)^{-1}(x,\xi) \not \in \mathrm{WF}^{\alpha-\frac{d}{2}}(u(t))$.
\end{proof}

\begin{remark}
	It is worth mentioning that the estimate on the Besov wavefront set as per Theorem \ref{th:propagation of singularities} might be improved if working with a generic Besov space $B^\alpha_{pq}(\mathbb{R}^d)$ rather than with $B^\alpha_{\infty\infty}(\mathbb{R}^d)$. Yet this step requires first of all to establish an improved version of Proposition \ref{Prop: regularity of G}, which appears to be elusive at this stage.
\end{remark}

\paragraph{Acknowledgements} 

We are thankful to M. Capoferri and N. Drago for helpful discussions and comments. The work of F.S. is supported by a scholarship of the University of Pavia, while that of P.R. by a fellowship of the Instiute for Applied Mathematics of the University of Bonn.

\end{document}